\numberwithin{equation}{section}
\newenvironment{red}{\relax\color{red}}{\relax}
\newenvironment{blue}{\relax\color{blue}}{\hspace*{.5ex}\relax}
\newcommand{\ber}{\begin{red}}
\newcommand{\er}{\end{red}}
\newcommand{\beb}{\begin{blue}}
\newcommand{\eb}{\end{blue}}
\theoremstyle{plain}
\newtheorem{lemma}{Lemma}[section]
\newtheorem{proposition}[lemma]{Proposition}
\newtheorem{theorem}[lemma]{Theorem}
\newtheorem{corollary}[lemma]{Corollary}
\theoremstyle{definition}
\newtheorem{remark}[lemma]{Remark}
\newtheorem{example}[lemma]{Example}
\newtheorem{definition}[lemma]{Definition}
\newcommand{\dl}{(\!(}
\newcommand{\dr}{)\!)}
\newcommand{\drl}{[\![}
\newcommand{\drr}{]\!]}
\newcommand{\vac}{\left| 0 \right>}
\newcommand{\n}{\mathfrak{n}}
\newcommand{\g}{\mathfrak{g}}
\newcommand{\C}{\mathbb{C}}
\newcommand{\Z}{\mathbb{Z}}
\newcommand{\N}{\mathbb{N}}
\newlength{\mylength}
\newcommand{\End}{{\rm{End}\ts}}
\newcommand{\Ker}{{\rm{Ker}}}
\newcommand{\ad}{{\rm{ad}}}
\newcommand{\non}{\nonumber}
\newcommand{\wh}{\widehat}
\newcommand{\La}{\Lambda}
\newcommand{\ts}{\,}
\newcommand{\tss}{\hspace{1pt}}
\newcommand{\CC}{\mathbb{C}\tss}
\newcommand{\Ac}{\mathcal{A}}
\newcommand{\Wc}{\mathcal{W}}
\newcommand{\gl}{\mathfrak{gl}}
\newcommand{\sll}{\mathfrak{sl}}
\newcommand{\cdet}{ {\rm cdet}}
\newcommand{\bari}{\bar{\imath}}
\newcommand{\barj}{\bar{\jmath}}
\newcommand{\fand}{\quad\text{and}\quad}
\newcommand{\Fand}{\qquad\text{and}\qquad}
\newcommand{\bth}{\begin{thm}}
\renewcommand{\eth}{\end{thm}}
\newcommand{\bpr}{\begin{prop}}
\newcommand{\epr}{\end{prop}}
\newcommand{\ble}{\begin{lm}}
\newcommand{\ele}{\end{lm}}
\newcommand{\bco}{\begin{cl}}
\newcommand{\eco}{\end{cl}}
\newcommand{\bex}{\begin{ex}}
\newcommand{\eex}{\end{ex}}
\newcommand{\bre}{\begin{rmk}}
\newcommand{\ere}{\end{rmk}}
\newcommand{\bcj}{\begin{conj}}
\newcommand{\ecj}{\end{conj}}
\newcommand{\bal}{\begin{aligned}}
\newcommand{\eal}{\end{aligned}}
\newcommand{\beq}{\begin{equation}}
\newcommand{\eeq}{\end{equation}}
\newcommand{\ben}{\begin{equation*}}
\newcommand{\een}{\end{equation*}}
\newcommand{\bpf}{\begin{proof}}
\newcommand{\epf}{\end{proof}}
\def\beql#1{\begin{equation}\label{#1}}
\title{Supersymmetric $W$-algebras}
\author[A.\,Molev]{Alexander Molev}
\address[A.\,Molev]
{School of Mathematics and Statistics,
University of Sydney,
NSW 2006, Australia}
\email{alexander.molev@sydney.edu.au}
\author
[E. Ragoucy]{Eric Ragoucy}
\address[E. Ragoucy]{Laboratoire de Physique Th\'{e}orique LAPTh,
CNRS, Universit\'{e} Savoie Mont Blanc and U.G.A.,
BP 110, 74941 Annecy-le-Vieux Cedex, France}
\email{eric.ragoucy@lapth.cnrs.fr}
\author[U.R. Suh]{Uhi Rinn Suh}
\address[U.R. Suh]
{ Department of Mathematical Sciences and Research institute of Mathematics, Seoul National University, GwanAkRo 1, Gwanak-Gu, Seoul 08826,
Korea}
\email{uhrisu1@snu.ac.kr}
\begin{document}

\begin{abstract}
We develop a general theory of $W$-algebras in the context of
supersymmetric vertex algebras. We describe the structure of $W$-algebras
associated with odd nilpotent elements of Lie superalgebras
in terms of their free generating sets. As an application, we produce
explicit free generators of the $W$-algebra associated with the odd principal
nilpotent element of the Lie superalgebra $\gl(n+1|n)$.

\medskip

\noindent
Preprint LAPTH-002/19

\end{abstract}

\maketitle

\section{Introduction}

The $W$-algebras first appeared in relation with the conformal field theory
in the work of Zamolodchikov~\cite{z:ie} and Fateev and Lukyanov~\cite{fl:mt}.
These algebras were studied intensively by physicists, both at the classical level
through Hamiltonian reduction of
 Wess--Zumino--Novikov--Witten models and their connection with affine Lie algebras, see e.g.
 \cite{Bow,Oraf,FRS}, but also using BRST formalism \cite{dBHTj,dBTj}.
For an extensive review on physicists works, see \cite{bs:ws}
and references therein. A definition of the $W$-algebras in the context of the vertex algebra theory and
quantized Drinfeld--Sokolov reduction was given by Feigin and Frenkel~\cite{ff:qd}; see also
the book by Frenkel and D. Ben-Zvi~\cite[Ch.~15]{fb:va}. A more general family of
$W$-algebras $W^k(\g,f)$ was introduced by
Kac, Roan and Wakimoto~\cite{krw:qr}, which depends on a simple Lie (super)algebra $\g$,
an (even) nilpotent element $f\in \g$ and the {\em level} $k\in\CC$. In the particular case of the
principal nilpotent element $f=f_{\text{\rm prin}}$ this reduces to the definition of \cite{ff:qd}; see also
a recent expository article by Arakawa~\cite{a:iw} where basic structure theorems and
representation theory of $W$-algebras are reviewed.

In the present paper we will be concerned with supersymmetric counterparts of the $W$-algebras
which can be defined by analogy with \cite[Ch.~15]{fb:va}.
Such $W$-algebras have already been studied, mostly in the physics literature; see
\cite{eh:st,i:qh,i:ns}. Moreover, a supersymmetric quantum hamiltonian reduction
approach was developed in the work of Madsen and the second author~\cite{MR94}.
We will rely on this work and the
supersymmetric vertex algebra theory developed by Heluani and Kac~\cite{HK06,K96}
to describe the structure of the $W$-algebras
associated with odd nilpotent elements of Lie superalgebras.
Our main structural result is Theorem~\ref{Thm_Sec_4} which describes
free generating sets of the $W$-algebras.

We will then apply the main result to
the case of the general linear Lie superalgebras. It is well-known that the
Lie superalgebra $\gl(m|n)$ contains an odd principal nilpotent element if and only if
$m=n\pm 1$. We take $m=n+1$ (this can be done without a real loss of generality)
and produce explicit free generators of the $W$-algebra as coefficients
of a certain noncommutative characteristic polynomial (Theorems~\ref{theorem1} and \ref{thm:freegen}).
These formulas can be regarded as supersymmetric analogues of the generators
of the principal $W$-algebra associated with the Lie algebra $\gl(n)$ produced by
Arakawa and the first author \cite{am:eg}. Furthermore, we show that the Miura transformation
used in \cite{am:eg} can also be applied in the supersymmetric context to recover
the generators of the $W$-algebra appeared in \cite{eh:st,i:qh,i:ns}.

\medskip

The second author wishes to thank the School of Mathematics and Statistics at the
University of Sydney for the hospitality and warm atmosphere during his visit,
as the work on this project was under way. The work of the third author
was supported by NRF Grant \# 2016R1C1B1010721.

\section{Supersymmetric Vertex Algebras}

In this section, we introduce supersymmetric vertex algebras following \cite{HK06} and \cite{K96}.
Proofs and additional details can be found in these references.
Note that in the terminology of the paper \cite{HK06} these objects are called $N_K=1$
{\em supersymmetric vertex algebras}.

\subsection{Notation and basic definitions}
We will be considering two couples of coordinates
\[ Z=(z, \theta), \quad W=(w, \zeta), \]
where $z$ and $w$ are even and $\theta$ and $\zeta$ are odd. Introduce the notation
\[ \C \drl Z \drr:= \C\drl z \drr \otimes \C[\theta], \quad
\C \dl Z \dr:= \C\dl z \dr \otimes \C[\theta]. \]
Since $\theta^2=0$ we have $\C[\theta]= \C\oplus \C\theta$. Similarly, % denote by
%\[ \C[\![ Z,W]\!] := \C[\![ z,w]\!] \otimes \C[\theta,\zeta],\quad C(\!( Z,W)\!)
%:= \C(\!( z,w)\!) \otimes \C[\theta,\zeta]\]
%where $\C(\!( z,w)\!)$ is the field of fraction of $\C[\![z,w]\!]$ and
\ben
\C[Z, Z^{-1}]:= \C[z,z^{-1}] \otimes \C[\theta], \quad \C[\![Z, Z^{-1}]\!]:=
\C[\![z,z^{-1}]\!] \otimes \C[\theta].
\een
Furthermore, set
\ben
\begin{aligned}
& Z-W := (z-w-\theta\zeta, \theta-\zeta), \\
& Z^{j_0|j_1} := z^{j_0} \theta^{j_1}  \quad  \text{ for } \ j_0\in \Z, \ j_1=0, 1, \\
& (Z-W)^{j_0|j_1} :=(z-w-\theta\zeta)^{j_0}(\theta-\zeta)^{j_1}.\\
%& (Z-W)^{-1|0} := \frac{1}{z-w} + \frac{\theta\zeta}{(z-w)^2},\\
%& (Z-W)^{-1|1} := \frac{\theta-\zeta}{z-w}.
\end{aligned}
\een

Let $\mathcal{U}=\mathcal{U}_{\bar{0}}\oplus \mathcal{U}_{\bar{1}}$ be
a $\Z/2\Z$-graded vector space which we will also call
a {\em vector superspace}. Accordingly, elements
$a\in \mathcal{U}_{\bar{0}}$ (resp. $a\in \mathcal{U}_{\bar{1}}$)
are called {\em even} (resp. {\em odd}) with the parity
$p(a)=\bar 0$ (resp. $p(a)=\bar 1$). The corresponding endomorphism algebra
$\End\mathcal{U}=(\End\mathcal{U})_{\bar{0}}\oplus
(\End\mathcal{U})_{\bar{1}}$ is a superalgebra, where
\ben
f\in (\End\mathcal{U})_{\bari}\quad \Longleftrightarrow \quad
f\big((\End\mathcal{U})_{\barj}\big)\subset (\End\mathcal{U})_{\bari+\barj}
\een
for any $\bari,\barj\in \Z/2\Z$.

Any element of the vector superspace
$\mathcal{U}[\![ Z, Z^{-1}]\!]:=
\mathcal{U} \otimes \C[\![ Z, Z^{-1}]\!]$ is called
a {\it $\mathcal{U}$-valued formal distribution}. It has the form
\begin{equation}\label{Formal dist}
a(Z)= \sum_{j_0\in \Z, \, j_1=0,1} Z^{j_0|j_1} a_{j_0|j_1}\in \mathcal{U}[\![ Z, Z^{-1}]\!],
\qquad a_{j_0|j_1}\in \mathcal{U}.
\end{equation}
The {\it super residue} of a formal distribution $a(Z)$ is defined by
\[ \text{res}_Z \,  a(Z):= a_{-1|1} \in \mathcal{U}.\]
Since
$ \text{res}_Z \, Z^{j_0|j_1} a(Z)= a_{-1-j_0|1-j_1}, $ it is convenient to use the notation
\[ a_{(j_0|j_1)}:= \text{res}_Z \, Z^{j_0|j_1} a(Z)\]
so that $a_{j_0|j_1}= a_{(-1-j_0|1-j_1)}$ and
the distribution $a(Z)$ in \eqref{Formal dist} takes the form
\[ a(Z)= \sum_{j_0\in \Z, \, j_1=0,1} Z^{-1-j_0|1-j_1} a_{(j_0|j_1)}.\]

An $\text{End}\, \mathcal{U}$-valued formal distribution $a(Z)$ is called
a {\it super field} if for any given $v\in \mathcal{U}$ there exists $N\in\Z_{\geqslant 0}$ such that
\[ a_{(j_0|j_1)}v=0 \quad\text{for all}\quad j_0 \geqslant N\,, \ j_1=0,1.\]

Similarly, a {\it $\mathcal{U}$-valued formal distribution in two variables} is an
element of the vector superspace $\mathcal{U} [\![ Z, Z^{-1}, W, W^{-1}]\!]$:
\[ a(Z,W)= \sum_{\substack{j_0, k_0 \in \Z,\\ \, j_1, k_1=0,1}}
Z^{j_0|j_1}W^{k_0|k_1}a_{j_0|j_1, k_0|k_1}\in \mathcal{U} [\![ Z, Z^{-1}, W, W^{-1}]\!]\]
with $ a_{j_0|j_1, k_0|k_1}\in \mathcal{U}$.  A formal
distribution $a(Z,W)$ is called {\it local} if
\[ (z-w)^n a(Z,W)=0\]
for some $n\in \Z_{\geqslant 0}.$
We let the {\it formal $\delta$-distribution} be defined by
\[ \delta(Z,W)= (\theta-\zeta) \sum_{n\in \Z} z^n w^{-n-1}.\]
Note that for any $f\in \mathcal{U}[\![ Z, Z^{-1}]\!]$ we have
\[ \text{res}_Z \delta(Z,W)f(Z) =f(W).\]
Since $(z-w) \delta(Z,W)=0$, the
formal $\delta$-distribution is local.

The differential operators $\partial_z$, $\partial_\theta$, $\partial_w$
and $\partial_\zeta$ act naturally on $\C[\![Z,Z^{-1}, W, W^{-1} ]\!]$.
Consider two more odd differential operators
\ben
D_Z= \partial_\theta+\theta \partial_z, \quad D_W=
\partial_\zeta +\zeta \partial_w.
\een
Then $[D_Z, D_Z]=2\partial_z$. Set
\[ D_Z^{j_0|j_1}= \partial_z^{j_0} D_Z^{j_1}, \quad D_Z^{(j_0|j_1)}=(-1)^{j_1}
\frac{1}{j_0 !} D_Z^{j_0|j_1}.\]

\begin{lemma}\label{Decomposition lemma}
Let $a(Z,W)$ be a local formal distribution. Then
\[ a(Z,W)= \sum_{\substack{j_0\in \Z_{\geqslant 0},\\  j_1=0,1}}
D_W^{(j_0|j_1)} \delta(Z,W)\ts c_{j_0|j_1}(W),
\]
where the sum is finite, and
\[ c_{j_0|j_1}(W)= \text{res}_Z (Z-W)^{j_0|j_1} a(Z,W).\]
\end{lemma}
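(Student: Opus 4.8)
The plan is to mimic the classical (non-super) decomposition theorem for local formal distributions, using the super $\delta$-distribution $\delta(Z,W)$ and the operators $D_W^{(j_0|j_1)}$ as the analogue of the divided-power derivatives $\frac{1}{n!}\partial_w^n$. First I would record the key elementary identities: that multiplication by $(z-w)$ annihilates $\delta(Z,W)$, that $D_W$ satisfies $[D_W,D_W]=2\partial_w$ so that applying $D_W$ once in the odd direction and then powers of $\partial_w$ exhausts all the divided-power operators, and crucially the "sifting" identities
\[
\operatorname{res}_Z (Z-W)^{j_0|j_1}\, D_W^{(k_0|k_1)}\delta(Z,W) = \delta_{j_0,k_0}\,\delta_{j_1,k_1},
\]
valid for $j_0,k_0\in\Z_{\geqslant 0}$ and $j_1,k_1\in\{0,1\}$. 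This last family of identities is what makes the formula for $c_{j_0|j_1}(W)$ forced: once we know $a(Z,W)$ has the claimed expansion, applying $\operatorname{res}_Z(Z-W)^{j_0|j_1}$ and using the sifting identity picks out exactly $c_{j_0|j_1}(W)$.

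Next I would establish existence of the expansion. Since $a(Z,W)$ is local, $(z-w)^n a(Z,W)=0$ for some $n\in\Z_{\geqslant0}$. I would argue by an induction that reduces the order of vanishing. Because $\C[\theta,\zeta]$ is spanned by $1,\theta,\zeta,\theta\zeta$ and $\theta-\zeta$ together with $1$ generate it over the "even" part, one can first expand $a(Z,W)$ in the odd variables relative to powers of $(\theta-\zeta)$; the genuinely new content over the even case is the bookkeeping of these two odd layers, indexed by $j_1=0,1$. For the even part, the argument is the standard one: if $(z-w)a(Z,W)=0$ then $a(Z,W)$ is a (finite, by the super-field-type support condition coming from locality) sum of terms $\partial_w$-divided-powers of $\delta$ with coefficients in $W$ alone; and if $(z-w)^n a=0$ with $n>1$, one subtracts off the "leading" $\delta$-terms determined by $c_{j_0|j_1}(W)=\operatorname{res}_Z(Z-W)^{j_0|j_1}a(Z,W)$ for the top values of $j_0$, checks that the remainder is killed by $(z-w)^{n-1}$, and invokes the inductive hypothesis. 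Finiteness of the sum follows because $(z-w)^n a=0$ forces $c_{j_0|j_1}=0$ for $j_0\geqslant n$.

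The step I expect to be the main obstacle is verifying the sifting/orthogonality identities $\operatorname{res}_Z (Z-W)^{j_0|j_1} D_W^{(k_0|k_1)}\delta(Z,W)=\delta_{j_0,k_0}\delta_{j_1,k_1}$ cleanly, together with getting all the signs right in the normalization $D_Z^{(j_0|j_1)}=(-1)^{j_1}\frac{1}{j_0!}D_Z^{j_0|j_1}$. One has to be careful because $D_W$ differentiates both $\zeta$ and $w$, and because $Z-W=(z-w-\theta\zeta,\theta-\zeta)$ mixes the even and odd coordinates; expanding $(Z-W)^{j_0|j_1}=(z-w-\theta\zeta)^{j_0}(\theta-\zeta)^{j_1}$ and tracking how $D_W$ interacts with $\theta\zeta$ is where sign errors creep in. I would handle this by first treating the purely even identity $\operatorname{res}_Z(z-w)^{j_0}\frac{1}{k_0!}\partial_w^{k_0}\delta(Z,W)=\delta_{j_0,k_0}$ (which follows from $\operatorname{res}_Z z^a w^{-b-1}(\theta-\zeta)=\ldots$ and the binomial expansion), then bootstrapping to include the odd layer by applying $D_W$ once and using $[D_W,D_W]=2\partial_w$. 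The rest of the argument — induction on the order of locality and finiteness — is routine once these identities are in hand.
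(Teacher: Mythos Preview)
The paper does not actually prove this lemma. Section~2 opens with the disclaimer that supersymmetric vertex algebras are introduced following \cite{HK06} and \cite{K96} and that ``Proofs and additional details can be found in these references''; Lemma~\ref{Decomposition lemma} is then stated without argument. So there is no in-paper proof to compare against.

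That said, your proposal is the standard route taken in those references: verify the orthogonality (sifting) relations between $(Z-W)^{j_0|j_1}$ and $D_W^{(k_0|k_1)}\delta(Z,W)$ under $\operatorname{res}_Z$, use them to force the formula for $c_{j_0|j_1}(W)$, and establish existence by inducting on the order $n$ of locality, subtracting off the top $\delta$-terms at each step. This is precisely the super analogue of the classical argument (e.g.\ in \cite{K96}), and is how Heluani--Kac \cite{HK06} handle it. Your identification of the sign bookkeeping in $D_W^{(j_0|j_1)}=(-1)^{j_1}\frac{1}{j_0!}\partial_w^{j_0}D_W^{j_1}$ and in the mixed term $-\theta\zeta$ of $(Z-W)$ as the place requiring care is accurate; once those identities are checked the rest is routine. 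One small wording issue: in your base case you write that if $(z-w)a(Z,W)=0$ then $a$ is a sum of ``$\partial_w$-divided-powers of $\delta$'' --- for $n=1$ you only get the $j_0=0$ layer (i.e.\ $\delta$ and $D_W\delta$), not higher $\partial_w$-powers. But this is just phrasing in a sketch and does not affect the validity of the inductive scheme.
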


\begin{definition}\label{SUSYVA}
A {\em supersymmetric vertex algebra} is a tuple $(V, \vac, S, Y)$
where $V$ is a vector superspace, $\vac\in V$ is a {\em vacuum vector},
$S$ is an odd endomorphism of $V$, and the {\em state-field correspondence} $Y$
is a parity preserving linear map from $V$ to the space of
$\End V$-valued super fields
\ben
 Y: V \to \End V[\![ Z,Z^{-1} ]\!] , \quad a\mapsto a(Z)
 \een
satisfying the following axioms:
\begin{itemize}
\item ({\em vacuum}) $a(Z)\vac |_{z=0,\ts \theta=0}=a,\ts S\vac=0$,
\item ({\em translation covariance}) $[S, a(Z)]=(\partial_\theta-\theta \partial_z) a(Z)$,
\item ({\em locality}) for any $a,b\in V$ there exists $N \in \Z_+$ such that\\
$(z-w)^N [a(Z), b(W)]=0$.
\end{itemize}
\end{definition}

By Lemma~\ref{Decomposition lemma}, the locality axiom implies
a finite sum decomposition
\ben
\, [a(Z), b(W)]=\sum_{\substack{j_0\in \Z_{\geqslant 0},\\ \ts j_1=0,1}}
\big(D_W^{(j_0|j_1)} \delta(Z,W)\big) \, a(W)_{(j_0|j_1)}b(W)  \,
\een
for  $a(W)_{(j_0|j_1)}b(W):= \text{res}_Z (Z-W)^{j_0|j_1} [a(Z),b(W)]$.
The expression $a(W)_{(j_0|j_1)}b(W)$ is called the {\em $(j_0|j_1)$-th product of
the super fields $a(W)$ and $b(W)$}.

\begin{definition}\label{Def:products}
\begin{enumerate}
\item The {\it normally ordered product} of two $\End V$-valued
formal distributions $a(Z)$ and $b(Z)$ is defined by
\[ :a(Z)b(Z):= a_+(Z)b(Z)+(-1)^{p(a)p(b)} b(Z) a_-(Z),\]
where
\ben
a_+(Z)= \sum_{j_0\in \Z_{\geqslant 0},\ts j_1=0,1} Z^{j_0|j_1} a_{j_0|j_1}\fand
a_-(Z)= \sum_{j_0\in \Z_{< 0},\ts j_1=0,1} Z^{j_0|j_1} a_{j_0|j_1}.
\een
\item If $j_0\leqslant -2$ and $j_1=0,1$, or $j_0=-1$ and $j_1=0$, then $a(Z)_{(j_0|j_1)} b(Z)$
is  given by
\[ a(Z)_{(j_0|j_1)} b(Z)= (-1)^{1-j_1} :
\big( D_Z^{(-1-j_0|1-j_1)} a(Z)\big) b(Z): .\]
\end{enumerate}
\end{definition}

\begin{remark} \label{Rem:products}
One can check that
\[ :a(Z)b(Z):\vac|_{z=0, \theta=0}= a_{(-1|1)}b\]
and
\[ a(Z)_{(j_0|j_1)}b(Z) \vac |_{z=0, \theta=0} = a_{(j_0|j_1)}b\]
for $(j_0,j_1)$ as in part (2) of Definition \ref{Def:products}.
\end{remark}

\begin{lemma}[Dong's lemma]
Let $a(Z), b(Z), c(Z)$ be pairwise local formal distributions.
Then $\big(a(Z), (b_{(j_0|j_1)}c)(Z)\big)$ is local for any $j_0\in \Z$ and $j_1=0,1$.
\end{lemma}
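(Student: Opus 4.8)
The plan is to produce, for each $j_0\in\Z$ and $j_1\in\{0,1\}$, an integer $M\geqslant 0$ with $(z-w)^M\,[a(Z),(b_{(j_0|j_1)}c)(W)]=0$; by Lemma~\ref{Decomposition lemma} this is exactly the asserted locality. Fix $r\geqslant 0$ large enough that $(z-w)^r$ annihilates each of the three pairwise super-commutators of $a(Z),b(Z),c(Z)$. Two ingredients will be used. The first is the graded Jacobi identity for $\End V$-valued formal distributions, written in three independent couples $Z_i=(z_i,\theta_i)$:
\[
[a(Z_1),[b(Z_2),c(Z_3)]]=[[a(Z_1),b(Z_2)],c(Z_3)]+(-1)^{p(a)p(b)}\,[b(Z_2),[a(Z_1),c(Z_3)]].
\]
The second is that for \emph{every} $j_0\in\Z$ the $(j_0|j_1)$-th product has the super-residue presentation
\[
(b_{(j_0|j_1)}c)(W)=\operatorname{res}_{Z_2}\!\Big(\iota_{z_2,w}\big((Z_2-W)^{j_0|j_1}\big)\,b(Z_2)c(W)-(-1)^{p(b)p(c)}\,\iota_{w,z_2}\big((Z_2-W)^{j_0|j_1}\big)\,c(W)b(Z_2)\Big),
\]
where $\iota_{z_2,w}$ and $\iota_{w,z_2}$ denote the expansions of $(z_2-w-\theta_2\zeta)^{j_0}$ in nonnegative and in negative powers of $w$ respectively: for $j_0\geqslant 0$ the two expansions agree and this is the decomposition of Lemma~\ref{Decomposition lemma}; for $j_0<0$ their difference is a derivative of $\delta(Z_2,W)$ and the right-hand side reduces, via Definition~\ref{Def:products}(2) and Remark~\ref{Rem:products}, to $(-1)^{1-j_1}:\!\big(D_W^{(-1-j_0|1-j_1)}b(W)\big)c(W)\!:$.

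First I would establish the three-variable locality relation
\[
(z_1-z_2)^r\,(z_1-z_3)^r\,[a(Z_1),[b(Z_2),c(Z_3)]]=0.
\]
Multiplying the Jacobi identity by $(z_1-z_3)^r$ kills the second term on the right, because $(z_1-z_3)^r[a(Z_1),c(Z_3)]=0$ and $(z_1-z_3)^r$ commutes past $b(Z_2)$; multiplying the resulting equality by $(z_1-z_2)^r$ kills the remaining term, because $(z_1-z_2)^r[a(Z_1),b(Z_2)]=0$ and $(z_1-z_2)^r$ commutes past $c(Z_3)$.

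To conclude, substitute the residue presentation into $[a(Z),(b_{(j_0|j_1)}c)(W)]$, move $[a(Z),-\,]$ inside $\operatorname{res}_{Z_2}$ and past the scalar $w$- and $z_2$-coefficients, and expand the inner commutators $[a(Z),b(Z_2)c(W)]$, $[a(Z),c(W)b(Z_2)]$ by the graded Leibniz rule. The terms that contain $[a(Z),c(W)]$ are annihilated by $(z-w)^r$, so $(z-w)^r[a(Z),(b_{(j_0|j_1)}c)(W)]$ is expressed through the double commutator $[a(Z),[b(Z_2),c(W)]]$ with coefficient $(z_2-w-\theta_2\zeta)^{j_0}$. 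Now write $z-z_2=(z-w)-(z_2-w)$, expand $(z-z_2)^r$ binomially, and apply the three-variable relation with $(Z_1,Z_2,Z_3)=(Z,Z_2,W)$; using $(\theta_2\zeta)^2=0$ one arrives at a finite recursion of the form
\[
(z-w)^{2r}\,[a(Z),(b_{(j_0|j_1)}c)(W)]=\sum_{k\geqslant 1}\gamma_k\,(z-w)^{2r-k}\,[a(Z),(b_{(j_0+k|j_1)}c)(W)]\ +\ (\text{odd corrections}),\qquad \gamma_k\in\C.
\]
Since $(z-w-\theta_2\zeta)^{j_0}$ is killed by $(z-w)^r$ as soon as $j_0\geqslant r+1$, we have $b_{(j_0|j_1)}c=0$ for such $j_0$; hence a downward induction on $j_0$—trivial base cases $j_0\geqslant r+1$, and inductive step supplied by the recursion, all of whose right-hand indices exceed $j_0$—delivers the required $M$ for every $j_0$.

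The step I expect to be the real obstacle is the bookkeeping with the odd coordinates, which has no analogue in the purely even case. The three super-differences $Z_1-Z_2$, $Z_1-Z_3$, $Z_2-Z_3$ are not additively related, because of the terms $\theta_i\zeta_j$ hidden in $Z-W=(z-w-\theta\zeta,\theta-\zeta)$; the safe route is to carry out every locality estimate in the even variables $z_i-z_j$—the ones in which locality is actually phrased—and to treat the $\theta\zeta$-pieces as strictly nilpotent perturbations. One then has to check that the ``odd corrections'' in the recursion do not break the induction: for $j_1=1$ they vanish because $\theta_2\zeta(\theta_2-\zeta)=0$, while for $j_1=0$ they have the form $\operatorname{res}_{Z_2}\big((z_2-w-\theta_2\zeta)^{m}\theta_2\zeta\,[a(Z),[b(Z_2),c(W)]]\big)$ with $m\geqslant j_0$ and must be rewritten in terms of already-controlled products, using the relations among the $D_W^{(\cdot|\cdot)}$-derivatives of the products together with the elementary fact that $D_W$ and $\partial_w$ send a distribution local with $a(Z)$ to one local with $a(Z)$ (differentiate $(z-w)^N[a(Z),d(W)]=0$ and absorb the extra factors $z-w$). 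This last fact also allows, if one prefers, disposing of the range $j_0<0$ straight from Definition~\ref{Def:products}(2), by reducing it to the locality of normally ordered products of distributions local with $a(Z)$.
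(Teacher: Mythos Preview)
The paper does not give a proof of Dong's lemma: it is stated in Section~2 alongside the Decomposition Lemma, the Uniqueness Lemma, and the Existence Theorem, all of which are quoted from \cite{HK06,K96} with the blanket remark that ``proofs and additional details can be found in these references.'' So there is no paper proof to compare against.

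Your strategy is the standard Dong's-lemma argument transported to the $N_K=1$ setting, and the essential pieces are sound: the three-variable locality $(z_1-z_2)^r(z_1-z_3)^r[a,[b,c]]=0$ is derived correctly, locality is phrased only in the even variables $z_i-z_j$ (exactly as the paper defines it), and the binomial identity $0=(z-z_2)^r(z-w)^r[a,[b,c]]=\sum_k\binom{r}{k}(-1)^k(z-w)^{2r-k}(z_2-w)^k[a,[b,c]]$ feeds a downward induction on $j_0$ after taking $\operatorname{res}_{Z_2}(Z_2-W)^{j_0|j_1}(\cdot)$. Your bookkeeping for the nilpotent correction $(z_2-w)^k=(z_2-w-\theta_2\zeta)^k+k\,\theta_2\zeta\,(z_2-w-\theta_2\zeta)^{k-1}$ is right, including the observation that it vanishes when $j_1=1$.

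One step is muddled and deserves repair. You write that after applying Leibniz to $[a,bc]$ and $[a,cb]$ and killing the $[a,c]$ pieces with $(z-w)^r$, the result is ``expressed through the double commutator $[a(Z),[b(Z_2),c(W)]]$.'' That is immediate for $j_0\geqslant 0$, where the two $\iota$-expansions coincide and $(b_{(j_0|j_1)}c)(W)=\operatorname{res}_{Z_2}(Z_2-W)^{j_0|j_1}[b(Z_2),c(W)]$, so no Leibniz step is needed at all. For $j_0<0$, however, Leibniz leaves you with combinations of $[a,b]\,c$ and $c\,[a,b]$ (with different $\iota$-expansions), not with $[a,[b,c]]$; your recursion as written does not apply to those terms. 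The clean fix is precisely the alternative you sketch in your last paragraph: for $j_0<0$ invoke Definition~\ref{Def:products}(2), note that $D_W$ and $\partial_w$ preserve locality with $a(Z)$, and reduce to showing that $:\!b(W)c(W)\!:$ is local with $a(Z)$---which is handled by the same three-variable estimate together with the decomposition $b=b_++b_-$. I would restructure the argument so that the residue/recursion route is used only for $j_0\geqslant 0$ and the $j_0<0$ case is dispatched separately via normally ordered products; then the ``double commutator'' sentence becomes an honest identity rather than an overreach.
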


\begin{lemma}[Uniqueness lemma]
Let $V$ be a supersymmetric vertex algebra. If $a(Z)$ is a super field
such that $(a(Z), b(Z))$ is local for every $b\in V$
and $a(Z) \vac=0$ then $a(Z)=0$.
\end{lemma}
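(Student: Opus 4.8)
The plan is to adapt the standard uniqueness theorem for vertex algebras to the supersymmetric setting. Expanding $a(Z)=\sum_{j_0\in\Z,\,j_1=0,1}Z^{j_0|j_1}a_{j_0|j_1}$ with $a_{j_0|j_1}\in\End V$, it suffices to prove that $a(Z)\,b=0$ for every $b\in V$, since this forces each coefficient $a_{j_0|j_1}$ to vanish. We may take $a$ and $b$ homogeneous and recover the general case by linearity.

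Fix such a $b$ and set $b(W)=Y(b,W)$. By the locality hypothesis there is $N\in\Z_{\geqslant 0}$ with $(z-w)^N[a(Z),b(W)]=0$. Applying this identity to $\vac$ and using $a(Z)\vac=0$ to kill the term $\pm\,b(W)\,a(Z)\vac$ of the super-commutator, we obtain
\[ (z-w)^N\,a(Z)\,b(W)\vac=0 .\]
Now I would invoke the regularity of $b(W)\vac$: by the vacuum axiom $b(W)\vac\in V[[W]]$ and $b(W)\vac\,\big|_{w=0,\,\zeta=0}=b$, so in particular $b(W)\vac$ has no negative powers of $w$. Since $a(Z)$ involves only the $V$-component and the variables $z$ and $\theta$, it commutes with the substitutions $w=0$ and $\zeta=0$, and the left-hand side above has no negative powers of $w$; setting $w=0$ and $\zeta=0$ therefore yields $z^N\,a(Z)\,b=0$. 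As multiplication by $z^N$ is injective on $\End V$-valued formal distributions, $a(Z)\,b=0$, and since $b$ was arbitrary, $a(Z)=0$.

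This argument is almost entirely formal, so the one point I expect to require care is the regularity $b(W)\vac\in V[[W]]$ that legitimizes the substitution $w=0$. If it is not read off directly from the vacuum axiom, it should be extracted from translation covariance: from $[S,b(W)]=(\partial_\zeta-\zeta\partial_w)b(W)$ and $S\vac=0$ one gets $S(b(W)\vac)=(\partial_\zeta-\zeta\partial_w)(b(W)\vac)$, and combining this recursion on the coefficients of $b(W)\vac$ with the super field property of $b(W)$ forces the lowest $w$-degree occurring in $b(W)\vac$ to be nonnegative, exactly as in the purely even case. The remaining bookkeeping — the parity signs in the super-commutator $[a(Z),b(W)]$ and in commuting $a(Z)$ past the odd variable $\zeta$ — does not affect the conclusion.
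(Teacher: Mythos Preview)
The paper does not supply its own proof of this lemma: it is stated in Section~2 as background material, with the blanket remark that ``proofs and additional details can be found in'' \cite{HK06,K96}. Your argument is correct and is precisely the standard vertex-algebra uniqueness proof, carried over verbatim to the $N_K=1$ setting; this is the same argument one finds in those references. One small remark: you do not need the backup derivation of regularity from translation covariance, since the vacuum axiom as formulated in the paper, $b(W)\vac\,|_{w=0,\,\zeta=0}=b$, already presupposes $b(W)\vac\in V[[W]]$ (otherwise the evaluation would be meaningless), so your primary route goes through without further justification.
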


By the uniqueness lemma and Remark \ref{Rem:products},
\[ a(Z)_{(j_0|j_1)} b(Z)= (a_{(j_0|j_1)} b)(Z),\]
and  we set
\[ :ab: \, = a_{(-1|1)}b=\,  :a(Z)b(Z): \vac|_{z=0,\ts \theta=0}.\]

Note that for a given supersymmetric vertex algebra $V$, the state-field correspondence map
\[ Y: V \to (\text{End} \, V) [\![Z,Z^{-1}]\!], \quad a \mapsto a(Z), \]
is injective. Hence a supersymmetric vertex algebra $V$ can be considered
as a set of super fields $Y(V)$. In the following theorem, we construct
a vertex algebra as a set of super fields.

\begin{theorem}[Existence theorem]
Let $V$ be a vector superspace and $\wh V$ be a set of pairwise
local $\End V$-valued super fields. Suppose $\text{Id}\in \wh V$
is the constant field and $\wh V$ is invariant under the operator
$D=\partial_\theta+ \theta \partial_z$  and all $(j_0|j_1)$-products.
Then the superspace $V$ with the vacuum vector $\text{Id}$, the operator $S$ given by
$Sa(Z)=  D(a(Z))$ and the $(j_0|j_1)$-products
is a supersymmetric vertex algebra.
\end{theorem}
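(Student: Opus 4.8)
The plan is to verify, one at a time, the three axioms of Definition~\ref{SUSYVA} for the proposed data, following the proof of the (non-super) Existence Theorem in \cite{K96} and its $N_K=1$ adaptation in \cite{HK06}. Since the $(j_0|j_1)$-products and the normally ordered product are bilinear and $D$ is linear, I may replace $\wh V$ by its linear span and thus assume it is a vector superspace. Define the state--field correspondence $Y$ on $\wh V$ by
\[
Y(a)(Z)\,b\seteq\sum_{j_0\in\Z,\ j_1=0,1} Z^{-1-j_0|1-j_1}\,\big(a(Z)_{(j_0|j_1)}b(Z)\big),\qquad a,b\in\wh V,
\]
which is well posed because $\wh V$ is closed under all $(j_0|j_1)$-products. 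First I would check that each $Y(a)(Z)$ is an $\End\wh V$-valued super field: for fixed $b$, locality of $a(Z)$ and $b(W)$ together with Lemma~\ref{Decomposition lemma} yields a finite operator product expansion, hence $a(Z)_{(j_0|j_1)}b(Z)=0$ for all large $j_0$; linearity and parity preservation of $Y$ are inherited from the corresponding properties of the products. The vacuum axioms are then short computations: the constant field $\text{Id}$ super-commutes with every $b(Z)$, so all its $(j_0|j_1)$-products with $j_0\geqslant 0$ vanish, while part~(2) of Definition~\ref{Def:products} gives $\text{Id}(Z)_{(-1|1)}b(Z)=\,:\text{Id}(Z)\,b(Z):\,=b(Z)$ and $\text{Id}(Z)_{(j_0|j_1)}b(Z)=0$ for the remaining negative $j_0$ (every positive power of $\partial_z$, and $D_Z$, kill a constant), so $Y(\text{Id})(Z)=\mathrm{id}_{\wh V}$; similarly $a(Z)_{(-1|1)}\text{Id}(Z)=\,:a(Z)\,\text{Id}(Z):\,=a(Z)$, so $Y(a)(Z)\text{Id}\big|_{z=0,\,\theta=0}=a$; and $S\,\text{Id}=D(\text{Id})=0$.

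For translation covariance one must show $[S,Y(a)(Z)]=(\partial_\theta-\theta\partial_z)Y(a)(Z)$. Applying both sides to $b\in\wh V$ and comparing coefficients of $Z^{-1-j_0|1-j_1}$, this reduces to two families of identities for the products: a super-Leibniz rule expressing $D\big(a(Z)_{(j_0|j_1)}b(Z)\big)$ through $\big(Da\big)(Z)_{(j_0|j_1)}b(Z)$ and $a(Z)_{(j_0|j_1)}\big(Db\big)(Z)$, and a sesquilinearity-type index shift, of the shape $\big(Da\big)(Z)_{(j_0|1)}b(Z)=a(Z)_{(j_0|0)}b(Z)$ and $\big(Da\big)(Z)_{(j_0|0)}b(Z)=-j_0\,a(Z)_{(j_0-1|1)}b(Z)$. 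Both follow from the residue definition of the products by super integration by parts ($\text{res}_Z D_Z(\,\cdot\,)=0$), using the explicit relations $D_Z\big((Z-W)^{j_0|0}\big)=j_0\,(Z-W)^{j_0-1|1}$ and $D_Z\big((Z-W)^{j_0|1}\big)=(Z-W)^{j_0|0}$ (and their $D_W$-counterparts), together with a parallel check for the normally ordered ($j_0<0$) products via part~(2) of Definition~\ref{Def:products}. The only delicate point here is the bookkeeping of the Koszul signs contributed by the odd operator $D$ and the odd monomials $Z^{j_0|1}$; otherwise the verification is routine.

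The remaining axiom, locality, is where the real work lies, and I expect it to be the main obstacle. One must exhibit, for each $a,b\in\wh V$, an integer $N$ with $(z-w)^N[Y(a)(Z),Y(b)(W)]=0$. Applying the bracket to an arbitrary $c\in\wh V$ turns this into a Borcherds-type (Jacobi) identity relating the iterated products $a(Z)_{(j_0|j_1)}\big(b(W)_{(k_0|k_1)}c\big)$, the term $(-1)^{p(a)p(b)}\,b(W)_{(k_0|k_1)}\big(a(Z)_{(j_0|j_1)}c\big)$, and the products obtained by first expanding $Y(a)(Z-W)b$; this is established as in \cite{K96,HK06}. Concretely, Lemma~\ref{Decomposition lemma} supplies the operator product expansion $[a(Z),b(W)]=\sum_{j_0\geqslant 0,\,j_1}\big(D_W^{(j_0|j_1)}\delta(Z,W)\big)\,a(W)_{(j_0|j_1)}b(W)$, Dong's lemma (applied repeatedly) guarantees that every super field obtained from the pairwise local family $\{c(Z):c\in\wh V\}$ by $(j_0|j_1)$-products and by $D$ is again pairwise local with all of them, and the uniqueness lemma identifies the three expressions above once their singular parts have been removed by multiplication by a suitable $(z-w)^N$. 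Running this standard argument yields the required locality; the subtle points are the combinatorics of the sums and the super signs in the Borcherds identity. Once locality is in place, all the axioms of Definition~\ref{SUSYVA} have been verified, so $\big(\wh V,\text{Id},S,Y\big)$ is a supersymmetric vertex algebra.
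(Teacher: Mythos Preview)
The paper does not actually prove this theorem: it is stated without proof in Section~2, with the blanket remark at the start of the section that ``Proofs and additional details can be found in these references'' (namely \cite{HK06} and \cite{K96}). Your proposal follows exactly the standard route of those references---closing $\wh V$ under all $(j_0|j_1)$-products, verifying vacuum and translation covariance by direct residue/OPE computations, and obtaining locality from Dong's lemma together with the decomposition of Lemma~\ref{Decomposition lemma}---so there is nothing to compare against and your outline is the expected one.

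One small point of presentation: the theorem as stated in the paper says ``the superspace $V$'', but the vacuum $\text{Id}$, the operator $S$, and the products are all given on fields, so (as you implicitly do) the vertex algebra structure is really being put on $\wh V$; it would be worth flagging this identification explicitly at the outset of your write-up.
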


\subsection{Supersymmetric Lie conformal algebras} \label{sec:LCA}

Recall that a Lie conformal algebra (LCA) $R$ gives rise to a vertex
algebra called  a universal enveloping vertex algebra $V(R)$ \cite{BK03,K96}.
Now we introduce its supersymmetric analogue: that is, a
supersymmetric LCA and the corresponding
universal enveloping supersymmetric vertex algebra.
Consider two superalgebras:
\begin{itemize}
\item Let $\mathcal{L}$ be the associative superalgebra
generated by a pair of elements $\Lambda=( \lambda, \chi )$, where $\lambda$
is even and $\chi$ is odd, such that
\[ \, [\lambda, \chi]=0, \quad [\chi, \chi]=2\chi^2=-2\lambda. \, \]
\item Let $\mathcal{K}$ be another associative superalgebra
generated by a pair of elements $\nabla=(T,S)$, where $T$
is even and $S$ is odd, such that
\[ \, [T,S]=0, \quad [S, S]=2S^2= 2T.\]
\end{itemize}
Note that $\mathcal{L}$ and $\mathcal{K}$ are isomorphic via
the map $\lambda\mapsto -T$ and $\chi\mapsto -S$.

Set
\ben
 (Z-W) \Lambda = (z-w-\theta\zeta)\lambda + (\theta-\zeta)\chi.
 \een
Given a formal distribution $a(Z,W)$ of two variables $Z$ and $W$,
consider the {\em formal Fourier transformation}
\ben
\mathcal{F}^\Lambda_{Z,W} \, a(Z,W)= \text{res}^{}_Z \, \text{exp}\big((Z-W)\Lambda\big) a(Z,W)
\een
which can be expanded as
\ben
\mathcal{F}^\Lambda_{Z,W} \, a(Z,W) = \sum_{\substack{j_0\in \Z_{\geqslant 0}, \ts
j_1=0,1}} (-1)^{j_1} \Lambda^{(j_0|j_1)} c_{j_0|j_1}(W),
\een
where
\ben
\Lambda^{(j_0|j_1)} = (-1)^{j_1} \frac{\lambda^{j_0} \chi^{j_1}}{j_0 !}
\een
and $c_{j_0|j_1}(W)$ is defined in Lemma \ref{Decomposition lemma}.

Define the $\Lambda$-{\em bracket} $(a,b)\to[a_\Lambda b]$ of a
local pair $\big(a(Z), b(Z)\big)$ by
\ben
 [a_\Lambda b](W):= \mathcal{F}^\Lambda_{Z,W}[a(Z), b(W)].
 \een

\begin{proposition}\label{prop:Lambda-bracket}
The $\Lambda$-bracket satisfies the following
properties for all pairwise local distributions $(a(Z), b(Z), c(Z))$:
\begin{enumerate}
\item (sesquilinearity) \[ \, [Sa_\Lambda b]= \chi[a_\Lambda b],\quad
[a_\Lambda Sb]= -(-1)^{p(a)}(S+\chi)[a_\Lambda b];\, \]
\item (skew-symmetry) \[ \, [b_\Lambda a] = (-1)^{p(a)p(b)}
[a_{-\Lambda-\nabla} b],\, \]
where
\ben
[a_{-\Lambda-\nabla} b]= \sum_{j_0\in \Z_{\geqslant 0},\ts j_1=0,1}
(-1)^{j_1}(-\Lambda-\nabla)^{(j_0|j_1)} a_{(j_0|j_1)}b
\een
for
$-\Lambda-\nabla= (-\lambda-T, -\chi-S)$ with
\ben
[\chi, S]= 2\lambda\Fand [\chi, T]=[\lambda, T]=[\lambda, S]=0;
\een
\item (Jacobi identity)
\[ \, [ a_\Lambda [ b_\Gamma c]]= -(-1)^{p(a)}[[a_\Lambda b]_{\Lambda+\Gamma} c]
+(-1)^{(p(a)+1)(p(b)+1)} [b_\Gamma[a_\Lambda c]],\]
where
\begin{enumerate}[(i)]
\item $\Gamma=(\gamma, \eta)$ with $[\gamma, \eta]=[\gamma, \gamma]=0$
and $[\eta,\eta]=-2\gamma$,
\item$\Lambda+\Gamma=(\lambda+\gamma, \zeta+\eta)$ with
$[\lambda, \eta]=[\lambda,\gamma]=[\zeta,\gamma]= [\zeta, \eta]=0$.
\end{enumerate}
\end{enumerate}
\end{proposition}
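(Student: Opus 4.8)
The plan is to derive each of the three properties directly from the definition of the $\Lambda$-bracket as a formal Fourier transform of the commutator $[a(Z),b(W)]$, reducing everything to manipulations of the operators $\text{res}_Z$, $\exp\big((Z-W)\Lambda\big)$, $D_Z$, and the super $\delta$-distribution. First I would record the elementary "integration-by-parts" identities on which all three parts rest: $\text{res}_Z\, D_Z\big(f(Z)\big)=0$ for any formal distribution $f$, and $D_Z\,\exp\big((Z-W)\Lambda\big)=\chi\,\exp\big((Z-W)\Lambda\big)$ together with $D_W\,\exp\big((Z-W)\Lambda\big)=-(S+\chi)\,\exp\big((Z-W)\Lambda\big)$ after one commutes $D_W$ past the $W$-dependence (here $S=\nabla_1$ acts as a derivation through the coefficient $c_{j_0|j_1}(W)$). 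Here one must be careful with signs coming from moving the odd operators $D_Z$, $D_W$, $\chi$, $S$ past each other and past $a(Z)$ or $b(W)$; this bookkeeping is the only genuinely delicate point, and it is what forces the factors $(-1)^{p(a)}$ and $(-1)^{p(a)p(b)}$ to appear exactly where they do.

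For sesquilinearity, I would start from translation covariance in Definition~\ref{SUSYVA}, rewritten as $[S,a(Z)]=-D_Z a(Z)$ up to the conventional sign, so that $(Sa)(Z)=D_Z\big(a(Z)\big)$ as a super field; then $[(Sa)_\Lambda b](W)=\text{res}_Z\exp\big((Z-W)\Lambda\big)\,[D_Z(a(Z)),b(W)]$. Integrating by parts in $Z$ (using $\text{res}_Z D_Z(\,\cdot\,)=0$ and the Leibniz rule for $D_Z$, noting $D_Z$ does not hit $b(W)$) moves $D_Z$ onto the exponential and produces the scalar $\chi$, giving $[(Sa)_\Lambda b]=\chi[a_\Lambda b]$. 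The second identity is analogous but one applies $D_W$ to the whole expression $[a_\Lambda Sb](W)$: since $[a(Z),(Sb)(W)]=[a(Z),D_W(b(W))]=(-1)^{p(a)}D_W[a(Z),b(W)]+$ (a correction term that vanishes because $D_W$ acts only on the $W$-slot and $[a(Z),b(W)]$ has the stated $D_W$-covariance), one pushes $D_W$ outside the residue and past $\exp\big((Z-W)\Lambda\big)$, collecting $-(-1)^{p(a)}(S+\chi)$.

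For skew-symmetry I would use the locality decomposition already quoted after Definition~\ref{SUSYVA}, namely $[a(Z),b(W)]=\sum_{j_0,j_1}\big(D_W^{(j_0|j_1)}\delta(Z,W)\big)(a_{(j_0|j_1)}b)(W)$, together with the corresponding expansion of $[b(W),a(Z)]=-(-1)^{p(a)p(b)}[a(Z),b(W)]$ in powers of $\delta$ centered now at $Z$; swapping the roles of $Z$ and $W$ in the $\delta$-distribution introduces a sign and replaces $\Lambda$ by $-\Lambda-\nabla$ after applying $\mathcal F^\Lambda_{Z,W}$, exactly as in the non-super case of \cite{K96}, and one checks the commutation relations among $\chi,S,\lambda,T$ are precisely what is needed for $-\Lambda-\nabla$ to be a legitimate element of the enlarged algebra. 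For the Jacobi identity I would compute $[a_\Lambda[b_\Gamma c]]$ as an iterated residue $\text{res}_Z\text{res}_X\exp\big((Z-W)\Lambda\big)\exp\big((X-W)\Gamma\big)[a(Z),[b(X),c(W)]]$, apply the ordinary (super) Jacobi identity for the commutators $[a(Z),[b(X),c(W)]]=[[a(Z),b(X)],c(W)]\pm[b(X),[a(Z),c(W)]]$, and then in the first resulting term change variables from $(Z,X)$ to recenter the inner commutator $[a(Z),b(X)]$ at $X$, which converts $\exp\big((Z-W)\Lambda\big)$ into $\exp\big((Z-X)\Lambda\big)\exp\big((X-W)\Lambda\big)$ and hence turns the total $X$-exponent into $\Lambda+\Gamma$. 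The main obstacle throughout is tracking the Koszul signs when odd operators and odd elements are permuted, especially in the iterated-residue computation for the Jacobi identity where three potentially odd fields and two odd exponential parameters interact; once a consistent sign convention is fixed at the outset (matching the one in \cite{HK06}), each identity follows by a short symbolic manipulation.
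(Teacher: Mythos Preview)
The paper does not actually prove this proposition. At the beginning of Section~2 it declares that ``Proofs and additional details can be found in these references'' (namely \cite{HK06,K96}), and Proposition~\ref{prop:Lambda-bracket} is stated without argument and immediately used to motivate Definition~\ref{Def:susyLCA}. So there is no ``paper's own proof'' to compare against; the implicit proof is the one in \cite{HK06}.

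Your outline is precisely that standard argument: verify each axiom by manipulating the formal Fourier transform $\mathcal F^\Lambda_{Z,W}$, using integration by parts in $Z$ (via $\text{res}_Z D_Z(\cdot)=0$) for sesquilinearity, the swap of expansion center in the $\delta$-decomposition for skew-symmetry, and the iterated residue together with the super-Jacobi identity for commutators and the factorization $\exp\big((Z-W)\Lambda\big)=\exp\big((Z-X)\Lambda\big)\exp\big((X-W)\Lambda\big)$ for the Jacobi identity. This matches \cite{HK06}, Section~3.1.

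One small correction of detail: you invoke translation covariance as ``$[S,a(Z)]=-D_Z a(Z)$ up to the conventional sign,'' but Definition~\ref{SUSYVA} has $[S,a(Z)]=(\partial_\theta-\theta\partial_z)a(Z)$, which is \emph{not} $\pm D_Z$. The identity you actually need, $(Sa)(Z)=D_Z\, a(Z)$, is true (cf.\ the Existence Theorem, where $S$ is defined on fields by $Sa(Z)=D(a(Z))$ with $D=D_Z$), but it is a consequence of the vacuum and translation axioms combined, not a restatement of translation covariance alone. With that adjustment your argument goes through and agrees with the referenced proof.
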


This motivates the following definition.

\begin{definition}\label{Def:susyLCA}
A {\it supersymmetric Lie conformal algebra (LCA)} $\mathcal{R}$
is a $\Z/2\Z$-graded $\mathcal{K}$-module  endowed with
odd bilinear map $\mathcal{R} \otimes \mathcal{R} \to \mathcal{L} \otimes \mathcal{R}$,
called $\Lambda$-bracket, given by a finite sum expansion
\[ a\otimes b \mapsto [a_\Lambda b] = \sum_{\substack{j_0\in \Z_{\geqslant 0},\ts j_1=0,1 }}
(-1)^{j_1} \Lambda^{(j_0|j_1)} a_{(j_0|j_1)} b\]
with $a_{(j_0|j_1)}b \in \mathcal{R}$,
satisfying the following properties:
\begin{enumerate}
\item (sesquilinearity) In $\mathcal{L}\otimes \mathcal{R}$ we have
\[ \, [Sa_\Lambda b]= \chi[a_\Lambda b],\quad
[a_\Lambda Sb]= -(-1)^{p(a)}(S+\chi)[a_\Lambda b]\, , \]
where $S$ and $\chi$ obey the relation $[S, \chi]= 2\lambda$;
\item (skew-symmetry) In $\mathcal{L}\otimes \mathcal{R}$ we have
\[ \, [b_\Lambda a] = (-1)^{p(a)p(b)} [a_{-\Lambda-\nabla} b],\, \]
where
\ben
[a_{-\Lambda-\nabla} b]= \sum_{j_0\in \Z_{\geqslant 0},\ts j_1=0,1}
(-1)^{j_1}(-\Lambda-\nabla)^{(j_0|j_1)} a_{(j_0|j_1)}b
\een
for
$-\Lambda-\nabla= (-\lambda-T, -\chi-S)$ satisfying
\ben
[\chi, S]= 2\lambda\Fand [\chi, T]=[\lambda, T]=[\lambda, S]=0;
\een
\item (Jacobi-identity) In $\mathcal{L}\otimes \mathcal{L}' \otimes \mathcal{R}$ we have
\[ \, [ a_\Lambda [ b_\Gamma c]]= -(-1)^{p(a)}[[a_\Lambda b]_{\Lambda+\Gamma} c]
+(-1)^{(p(a)+1)(p(b)+1)} [b_\Gamma[a_\Lambda c]],\]
where
\begin{enumerate}[(i)]
\item $\Gamma=(\gamma, \eta)$ such that $[\gamma, \eta]=[\gamma, \gamma]=0$
and $[\eta,\eta]=-2\gamma$,
\item$\Lambda+\Gamma=(\lambda+\gamma, \zeta+\eta)$ such that
$[\lambda, \eta]=[\lambda,\gamma]=[\zeta,\gamma]= [\zeta, \eta]=0$.
\end{enumerate}
\end{enumerate}
\end{definition}

Note that the tensor product sign is often omitted in the notation.

The next theorem provides an equivalent definition of supersymmetric vertex algebras in terms
of $\Lambda$-brackets; cf. \cite[Thm.~4.1]{k:iv}.

\begin{theorem}A supersymmetric vertex algebra is a tuple
$(V, S,  [\, _\Lambda\, ], \vac, : \, \, :)$ such that
\begin{enumerate}[(i)]
\item $(V, S,  [\, _\Lambda\, ])$ is a supersymmetric Lie conformal algebra.
\item $(V, S, \vac, : \, \, :)$ is a unital
differential superalgebra, where $S$ is an odd derivation of the product $: \, \, :$,
and the following properties hold:
\begin{equation}\label{Eqn:q-comm/assoc}
\begin{aligned}
&  :ab:-(-1)^{p(a)p(b)} :ba:= (-1)^{p(a)p(b)}
\sum_{j\geqslant 1} \frac{(-T)^j}{j!} (b_{(-1+j|1)}a),  \\
&  ::ab:c:-:a:bc::= \sum_{j\geqslant 0} a_{(-2-j|1)}(b_{(j|1)}c)+(-1)^{p(a)p(b)}
\sum_{j\geqslant 0} b_{(-2-j|1)}(a_{(j|1)}c).
\end{aligned}
\end{equation}
\item The $\Lambda$-bracket and the product $: \, \, :$ are related by the
non-commutative Wick formula\ts:
\begin{equation} \label{Wick}
 \, [a_\Lambda :bc:] =\sum_{k\geqslant 0} \frac{\lambda^k}{k!} [a_\Lambda b]_{(k-1|1)}c
 + (-1)^{(p(a)+1)p(b)} :b[a_\Lambda c]: \, .
 \end{equation}
\end{enumerate}
\end{theorem}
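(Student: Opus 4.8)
The statement is the supersymmetric counterpart of Kac's characterisation of vertex algebras by $\lambda$-brackets \cite[Thm.~4.1]{k:iv}, and the plan is to prove the two implications separately, following the structure of that proof.

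First I would show that a supersymmetric vertex algebra $(V,\vac,S,Y)$ in the sense of Definition~\ref{SUSYVA} gives rise to such a tuple. The operator $S$, the vacuum $\vac$, the $\Lambda$-bracket $[a_\Lambda b](W)=\mathcal{F}^\Lambda_{Z,W}[a(Z),b(W)]$ and the normally ordered product $:ab:\,=a_{(-1|1)}b$ are all already available. Property~(i) follows from Proposition~\ref{prop:Lambda-bracket} together with the locality axiom, which via Lemma~\ref{Decomposition lemma} supplies the required finite-sum expansion and makes $V$ into a $\mathcal{K}$-module. For (ii), that $S$ is an odd derivation of $:\ :$ is read off from translation covariance and the vacuum axiom, while the q-commutativity and q-associativity identities in \eqref{Eqn:q-comm/assoc} are obtained by decomposing the operator product $a(Z)b(W)$ into its singular part --- governed by the products $a_{(j_0|j_1)}b$ with $j_0\geqslant 0$ --- and its regular part, applying Lemma~\ref{Decomposition lemma} and the uniqueness lemma, and comparing coefficients; part~(iii), the non-commutative Wick formula \eqref{Wick}, comes out of the same kind of decomposition applied to $[a(Z),{:}b(W)c(W){:}]$. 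Throughout one must track the Koszul signs generated by the odd coordinate $\theta$ and the odd operators $S,\chi$; this is the only non-routine aspect of this direction.

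Conversely, suppose $(V,S,[\,_\Lambda\,],\vac,:\ :)$ satisfies (i)--(iii). For each $a\in V$ I would define the formal distribution
\[
a(Z)=\sum_{j_0\in\Z,\,j_1=0,1}Z^{-1-j_0|1-j_1}\,a_{(j_0|j_1)},
\]
where $a_{(j_0|j_1)}\in\End V$ is taken from the $\Lambda$-bracket expansion when $j_0\geqslant 0$, equals left multiplication by $a$ in the product $:\ :$ when $(j_0|j_1)=(-1|1)$, and is built from $:\ :$ and the odd operator $S$ exactly as in part~(2) of Definition~\ref{Def:products} (with $S$ playing the role of $D_Z$) when $j_0\leqslant-2$ or $(j_0|j_1)=(-1|0)$. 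I would then check that each $a(Z)$ is a super field, that $\vac$ gives the constant field $\text{Id}$, and that the collection $\wh{V}$ of all these fields is invariant under $D=\partial_\theta+\theta\partial_z$ --- using sesquilinearity, which also shows that $D$ is implemented by $S$ --- and under all $(j_0|j_1)$-products, so that the Existence Theorem produces a supersymmetric vertex algebra; by construction its associated $\Lambda$-bracket and normally ordered product are the given ones, and the uniqueness lemma identifies $a(Z)_{(j_0|j_1)}b(Z)$ with $(a_{(j_0|j_1)}b)(Z)$.

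The hard part will be verifying, in the converse direction, that the reconstructed fields $a(Z)$ and $b(Z)$ are pairwise local. I expect to handle this as in \cite{k:iv}: the quadratic identities postulated in (i)--(iii) --- the Jacobi identity for the $\Lambda$-bracket, the non-commutative Wick formula, and the q-associativity relation --- together amount to a single identity of Borcherds type, which forces the commutator $[a(Z),b(W)]$ to be supported on the diagonal $z=w$, that is, to be local. Once the generating fields are known to be local, Dong's lemma propagates locality to every field obtained by iterated $(j_0|j_1)$-products, so $\wh{V}$ consists of pairwise local fields and the Existence Theorem applies. The delicate point here is again the sign bookkeeping: the odd coordinate, the odd derivation $S$, and the odd generators $\lambda,\chi$ of $\mathcal{L}$ introduce signs at every rearrangement, and one must check that the Borcherds-type identity assembled from (i)--(iii) reproduces precisely the conventions fixed in Definitions~\ref{SUSYVA} and~\ref{Def:products}.
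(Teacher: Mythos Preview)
The paper does not prove this theorem. Section~2 opens with the disclaimer that ``proofs and additional details can be found in these references'' (namely \cite{HK06} and \cite{K96}), and the theorem itself is introduced only with ``cf.\ \cite[Thm.~4.1]{k:iv}'' and no argument whatsoever. So there is nothing in the paper to compare your proposal against; the result is simply quoted from Heluani--Kac.

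Your sketch is a reasonable outline of how the Heluani--Kac proof goes, and the two-implication structure modelled on \cite[Thm.~4.1]{k:iv} is the right template. One small slip: you write ``the odd generators $\lambda,\chi$ of $\mathcal{L}$'', but $\lambda$ is even (only $\chi$ is odd); this matters precisely for the sign bookkeeping you flag as delicate. Also, in the converse direction your plan to invoke the Existence Theorem is slightly circular as stated: that theorem already presupposes pairwise locality of the fields in $\wh V$, so you must establish locality \emph{before} appealing to it, not as a consequence. You do say this in the next paragraph, but the logical order should be: first derive the Borcherds-type identity from (i)--(iii), deduce locality of the reconstructed fields, and only then apply the Existence Theorem.
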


The properties \eqref{Eqn:q-comm/assoc} of the product $: \, \, :$ are referred to as the
{\em quasi-commutativity} and {\em quasi-associativity}, respectively.

\begin{definition}
\begin{enumerate}
\item A set $\mathcal{B}=\{a_i\ |\ i\in I\}$  of elements in  a supersymmetric
vertex algebra $V$ {\it strongly generates} $V$ if the set of  monomials
\[\{\, :a_{j_1} a_{j_2} \dots a_{j_s}: \,  |\,  j_1, \dots, j_s\in I, \, s\in \Z_{\geqslant 0}\}\]
spans $V$. If $s=0$, the monomial is understood as $\vac$. For $s>2$
the product in the monomial is applied consecutively from right to left.
\item An ordered set $\mathcal{B} =\{ a_i\ |\ i\in I\}\subset V $
{\it freely
generates}  a supersymmetric vertex algebra $V$ if the set of monomials
\[\{\,  :a_{j_1} a_{j_2}\dots a_{j_s}:\, | \,  j_r \leqslant j_{r+1}
\text{ and } j_r < j_{r+1} \text{ if  }
p(a_{j_r}) = \bar 1 \}\]
 forms a basis of $V$ over $\C$.
\end{enumerate}
\end{definition}

\begin{theorem}\label{Thm:universal SUSY VA}
Let $\mathcal{R}$ be a supersymmetric Lie conformal algebra with
an ordered $\C$-basis $\mathcal{B}=\{ a_i\ |\ i\in I\}.$ Then there
exists a unique supersymmetric vertex algebra $V(\mathcal{R})$ such that
\begin{enumerate}[(i)]
\item $V(\mathcal{R})$ is freely generated by $\mathcal{B}$,
\item the operator $S$ on $V(\mathcal{R})$ is defined by $S(:ab:)=:(Sa)b: +(-1)^{p(a)} :a(Sb):$,
\item the $\Lambda$-bracket on $\mathcal{R}$ extends
to the $\Lambda$-bracket on $V(\mathcal{R})$ via the Wick formula \eqref{Wick}.
\end{enumerate}
\end{theorem}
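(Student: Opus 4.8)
The plan is to dispose of uniqueness first, then construct $V(\mathcal R)$ by an induced‑module argument modelled on the non‑supersymmetric case \cite{K96,BK03}, with the freeness of $\mathcal B$ as the last and most delicate step (throughout, $\mathcal B$ is understood as a homogeneous $\mathcal K$‑module basis of $\mathcal R$). For \emph{uniqueness}: suppose $(V,S,[\,_\Lambda\,],\vac,:\,:)$ satisfies (i)--(iii) and is freely generated by $\mathcal B$. Using the non‑commutative Wick formula \eqref{Wick} repeatedly, together with the sesquilinearity and skew‑symmetry relations of Proposition~\ref{prop:Lambda-bracket}, the $\Lambda$-bracket of any two ordered monomials in the $a_i$ is expressed through the brackets $[a_{i\,\Lambda}a_j]$, which are prescribed in $\mathcal R$. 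Dually, sesquilinearity of the $(j_0|j_1)$-products (Definition~\ref{Def:products}) expresses each $a_{(j_0|j_1)}b$ via $:\,:$ and $\Lambda$-brackets, and the quasi‑commutativity/quasi‑associativity identities \eqref{Eqn:q-comm/assoc} then allow any normally ordered monomial in the $a_i$ to be rewritten, by induction on length (refined by conformal weight), as a linear combination of the ordered monomials appearing in the definition of free generation. Since the latter are required to form a basis, the entire multiplication table of $V$ is pinned down, so $V(\mathcal R)$ is unique up to isomorphism.

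For \emph{existence}, I would attach to $\mathcal R$ its annihilation Lie superalgebra. Introduce an odd variable $\vartheta$ and the odd derivation $D_t=\partial_\vartheta+\vartheta\partial_t$ of $\C[t,t^{-1},\vartheta]$ (so $D_t^2=\partial_t$, matching $S^2=T$ in $\mathcal K$), and set $\mathrm{Lie}(\mathcal R)=\big(\mathcal R\otimes\C[t,t^{-1},\vartheta]\big)\big/\big((S\otimes 1+1\otimes D_t)\,(\mathcal R\otimes\C[t,t^{-1},\vartheta])\big)$, writing $a_{[j_0|j_1]}$ for the class of $a\otimes t^{-1-j_0}\vartheta^{1-j_1}$. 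The bracket on $\mathrm{Lie}(\mathcal R)$ is the one whose mode expansion reproduces the structure constants $a_{(j_0|j_1)}b$ of the $\Lambda$-bracket; its Jacobi identity is equivalent to that of Definition~\ref{Def:susyLCA}, and $\mathrm{Lie}(\mathcal R)=\mathrm{Lie}(\mathcal R)_-\oplus\mathrm{Lie}(\mathcal R)_+$, with $\mathrm{Lie}(\mathcal R)_+$ (spanned by the $a_{[j_0|j_1]}$ with $j_0\geqslant 0$) a subalgebra. Put $V(\mathcal R)=U(\mathrm{Lie}(\mathcal R))\otimes_{U(\mathrm{Lie}(\mathcal R)_+)}\C$, $\vac=1\otimes 1$, and define for each $i$ the $\End V(\mathcal R)$-valued distribution $a_i(Z)=\sum_{j_0\in\Z,\,j_1=0,1}Z^{-1-j_0|1-j_1}a_{i,[j_0|j_1]}$; the standard finiteness (``restricted module'') argument shows each $a_i(Z)$ is a super field. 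One then checks the $a_i(Z)$ are pairwise local --- the operator reformulation of the finiteness of the $\Lambda$-bracket expansion --- so that, by Dong's lemma, the closure $\wh V$ of $\{\mathrm{Id}\}\cup\{a_i(Z)\}$ under $D=\partial_\theta+\theta\partial_z$ and all $(j_0|j_1)$-products is local. The Existence theorem then makes $V(\mathcal R)$ a supersymmetric vertex algebra with vacuum $\vac$ and translation operator $S=D$; since $a_{i,[-1|1]}\vac=a_i$, the set $\mathcal B$ strongly generates it, and the identity $a_{i\,(j_0|j_1)}a_j=(a_{i\,(j_0|j_1)}a_j)$ shows the $\Lambda$-bracket restricts on $\mathcal R$ to the given one. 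Conditions (ii) and (iii) then come for free: in any supersymmetric vertex algebra $S$ is an odd derivation of $:\,:$ (which is exactly the asserted formula for $S(:ab:)$) and the Wick formula \eqref{Wick} holds.

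The remaining point --- and the crux --- is that $\mathcal B$ \emph{freely} generates $V(\mathcal R)$, i.e.\ that the monomials $:a_{j_1}\cdots a_{j_s}:$ with $j_r\leqslant j_{r+1}$, and $j_r<j_{r+1}$ whenever $p(a_{j_r})=\bar 1$, form a $\C$-basis. Spanning follows from strong generation plus quasi‑commutativity, quasi‑associativity and sesquilinearity, which reduce any monomial to ordered form; the parity constraint is forced because for odd $a$ quasi‑commutativity gives $:aa:=-\tfrac12\sum_{j\geqslant 1}\tfrac{(-T)^j}{j!}\,(a_{(-1+j|1)}a)$, a combination of strictly shorter monomials. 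For linear independence I would use that $V(\mathcal R)$ is an induced module: a Poincar\'e--Birkhoff--Witt theorem for the Lie superalgebra $\mathrm{Lie}(\mathcal R)$ gives a $\C$-basis of $V(\mathcal R)$ consisting of the ordered PBW monomials in the creation operators $a_{i,[j_0|j_1]}$ ($a_i\in\mathcal B$, $j_0<0$) applied to $\vac$; then, filtering $V(\mathcal R)$ by total monomial length (refined by conformal weight), I would show that the class of $:a_{j_1}\cdots a_{j_s}:$ in the associated graded equals, up to a nonzero scalar and terms of shorter length, the PBW monomial $a_{j_1,[-1|1]}\cdots a_{j_s,[-1|1]}\vac$ --- the ordering conditions matching precisely the symmetric/exterior ordering of PBW, the strict inequality for odd $a_{j_r}$ reflecting $a_{j_r,[-1|1]}^{\,2}=0$. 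Comparing the two bases on the associated graded yields independence. \textbf{Main obstacle.} This last step is where the work lies: one needs a correctly signed PBW theorem for $\mathrm{Lie}(\mathcal R)$ (the odd variable $\vartheta$ interacting with odd basis elements requires care) and a clean associated‑graded argument identifying normally ordered monomials with PBW monomials modulo lower‑order terms. Everything feeding into existence --- sesquilinearity, locality, Dong's lemma, the Existence theorem --- is either quoted from the theory recalled above or a routine computation with the $\Lambda$-bracket axioms.
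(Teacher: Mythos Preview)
The paper does not actually prove this theorem: Section~2 opens by announcing that ``proofs and additional details can be found in'' \cite{HK06,K96}, and Theorem~\ref{Thm:universal SUSY VA} is then stated without argument. Your sketch is therefore not being compared against a proof in the paper, but it is precisely the standard construction in its supersymmetric incarnation as carried out in \cite{HK06}: form the Lie superalgebra $\mathrm{Lie}(\mathcal R)$ attached to $\mathcal R$, induce from the annihilation part to obtain a vacuum module $V(\mathcal R)$, verify locality of the generating fields and invoke the Existence theorem, and finally deduce free generation from PBW on the associated graded. Your identification of the ``main obstacle'' --- the signed PBW for $\mathrm{Lie}(\mathcal R)$ with the odd formal variable present, and the matching of normally ordered monomials with PBW monomials modulo lower length --- is accurate; that is exactly where the work lies.

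Two remarks. First, in the paper's statement $\mathcal B$ is an ordered $\C$-basis of $\mathcal R$, not a $\mathcal K$-module basis; in particular $\mathcal B$ already contains all $S$-derivatives, so the definition of ``freely generated'' here involves only normally ordered monomials in the $a_i$ themselves, with no extra derivatives. Your parenthetical reinterpretation changes the statement, though the argument you outline adapts to either reading. Second, it is worth noting that in Section~\ref{subsec:snlca} the paper, following \cite{DK06}, adopts a genuinely different construction for the (more general) nonlinear case: one extends the $\Lambda$-bracket and the product $:\ \ :$ to the full tensor algebra $\mathcal T(\mathcal R)$ and then quotients by the two-sided ideal $\mathcal J(\mathcal R)$ generated by the quasi-commutativity relation. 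That route bypasses $\mathrm{Lie}(\mathcal R)$ and induced modules entirely; the trade-off is that well-definedness of the bracket and product on the quotient, and the freeness of $\mathcal B$, must be checked directly rather than read off from PBW.
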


\begin{definition}
\label{def:uea}
For a given supersymmetric Lie conformal algebra $\mathcal{R}$,
the supersymmetric vertex algebra $V(\mathcal{R})$ in
Theorem \ref{Thm:universal SUSY VA} is called
the {\it universal enveloping supersymmetric vertex algebra}
associated to $\mathcal{R}$.
\end{definition}

\subsection{Supersymmetric nonlinear LCAs}
\label{subsec:snlca}
In this section we follow Section 3 of \cite{DK06} to introduce {\em nonlinear} supersymmetric LCAs.
We omit the arguments which
are straightforward supersymmetric analogues of those in \cite{DK06}.

 For a positive integer $n$, consider a $\mathcal{K}$-module
 $\mathcal{R}= \bigoplus_{\zeta\in \N/n} \mathcal{R}_\zeta$ with $(\N/n)$-grading
 so that $\text{gr}(a)=\zeta$ for $a\in \mathcal{R}_\zeta$.
 The grading $\text{gr}$ is naturally extended to the grading of
 the tensor algebra $\mathcal{T}(\mathcal{R})$ by
\[ \text{gr}(a\otimes b)= \text{gr}(a)+ \text{gr}(b).\]
Set
\[ \mathcal{T}(\mathcal{R}) _{(\zeta)-}
= \bigoplus_{\zeta'<\zeta}\mathcal{T}(\mathcal{R})_{\zeta'}.\]

\begin{definition}
Suppose that
$\mathcal{R}$ is endowed with a {\it nonlinear $\Lambda$-bracket}
\ben
 [\mathcal{R}_{\zeta\, \Lambda\, } \mathcal{R}_{\zeta'}] \subset
 \mathcal{L}\otimes \mathcal{T}(\mathcal{R})_{(\zeta+\zeta')_-},
 \een
 satisfying skew-symmetry, sesquilinearity and Jacobi identity in
 Definition \ref{Def:susyLCA}. Then $\mathcal{R}$ is
 called {\it supersymmetric nonlinear
 Lie conformal algebra.}
\end{definition}

\begin{proposition} Let $\mathcal{R}$ be a supersymmetric nonlinear LCA.
Then the normally ordered product and $\La$-bracket admit unique extensions
to the linear maps
\ben
\begin{aligned}
& \mathcal{T}(\mathcal{R}) \otimes \mathcal{T}(\mathcal{R}) \to
\mathcal{T}(\mathcal{R}), \quad A\otimes B \mapsto :AB:, \\
&  \mathcal{T}(\mathcal{R}) \otimes \mathcal{T}(\mathcal{R}) \to
\mathcal{L} \otimes \mathcal{T}(\mathcal{R}), \quad A\otimes B \mapsto [A_\Lambda B],
\end{aligned}
\een
in such a way that for any $a,b\in \mathcal{R}$ and $A,B,C\in \mathcal{T}(\mathcal{R})$ we have
\begin{enumerate}[(i)]
\item $[a_\Lambda b]$ is defined by the $\Lambda$-bracket on $\mathcal{R}$,
\item $:aB:= a\otimes B$,
\item $:1A:=:A1:=A$,
\item $:(a\otimes B) C:- :a:BC::$ is defined by the quasi-associativity,
\item $[A_\Lambda(b\otimes C)]$ and $[(a\otimes B)_\Lambda C]$
are defined by the Wick formula.
\end{enumerate}
\end{proposition}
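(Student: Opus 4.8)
The plan is to build both extensions by a recursion that is \emph{forced} by the identities (i)--(v), so that uniqueness is automatic, and then to prove — by a simultaneous induction on the grading — that the recursion is consistent, i.e. independent of the choices made along the way. This is the pattern of Section~3 of \cite{DK06}, and the whole point of the grading hypothesis on $\mathcal{R}$ is to make the induction terminate.

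First I would fix an ordered homogeneous $\C$-basis of $\mathcal{R}$ and filter $\mathcal{T}(\mathcal{R})$ both by the length of a tensor monomial and by the $(\N/n)$-grading $\text{gr}$, and then read (i)--(v) as reduction rules. Properties (ii) and (iii) fix $:\vac A:=:A\vac:=A$ and $:aB:=a\otimes B$ for $a\in\mathcal{R}$. For a monomial $A=a\otimes B$ of length $\geqslant 2$ with $a\in\mathcal{R}$ a basis vector, property (iv) (quasi-associativity) forces
\[
:AC:\;=\;a\otimes(:BC:)\;+\;\sum_{j\geqslant 0} a_{(-2-j|1)}\big(B_{(j|1)}C\big)\;+\;(-1)^{p(a)p(B)}\sum_{j\geqslant 0} B_{(-2-j|1)}\big(a_{(j|1)}C\big).
\]
Here $:BC:$ has a left factor $B$ of strictly smaller length than $A$, while for $j\geqslant 0$ the elements $B_{(j|1)}C$ and $a_{(j|1)}C$ lie in $\mathcal{T}(\mathcal{R})_{(\text{gr}(B)+\text{gr}(C))_-}$ by the defining condition on the nonlinear $\Lambda$-bracket, hence have strictly smaller $\text{gr}$; so every term on the right is already defined by induction and $:AC:$ is determined. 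Likewise, for $[A_\Lambda C]$ with $A=a\otimes B$, properties (i) and (v) reduce the computation — via sesquilinearity, the non-commutative Wick formula \eqref{Wick}, and (for the left slot) its skew-symmetric counterpart — to brackets $[a_\Lambda\,\cdot\,]$ with $a\in\mathcal{R}$, given by the nonlinear $\Lambda$-bracket, together with $\Lambda$-brackets and normally ordered products of strictly smaller filtration. This shows that at most one extension exists.

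For existence, which is the real content of the Proposition, I would show that these prescriptions are well defined: $:AC:$ must not depend on the splitting $A=a\otimes B$, and $[A_\Lambda C]$ must be the same whether a generator is peeled off $A$ or off $C$. Following \cite{DK06}, one runs a single induction on the total grading $N=\text{gr}(A)+\text{gr}(B)+\text{gr}(C)$: assuming that quasi-commutativity and quasi-associativity \eqref{Eqn:q-comm/assoc}, the Wick formula \eqref{Wick}, sesquilinearity, skew-symmetry and the Jacobi identity all hold on tensors of total grading $<N$, one verifies each of them in grading $N$ by a direct manipulation, repeatedly invoking the lower-degree instances and pushing every correction term into $\mathcal{T}(\mathcal{R})_{(\cdot)_-}$. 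The only input beyond what is available in the linear (universal enveloping) case is the Jacobi identity for the nonlinear $\Lambda$-bracket on $\mathcal{R}$, which is part of the hypotheses.

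The main obstacle is precisely this consistency induction. It is a long sign-bookkeeping argument: one must track every Koszul sign coming from the parities of elements of $\mathcal{R}$ and from the odd operators $S$ and $\chi$, and then check that the two expansions of a threefold normally ordered product agree and that $[(a\otimes B)_\Lambda C]$ and $[A_\Lambda(b\otimes C)]$ are unambiguous. These computations are the straightforward $N_K=1$ supersymmetric analogues of those in \cite{DK06}, so in practice I would transcribe that argument with the signs adjusted to the present setting and omit the routine verifications, as announced at the start of this subsection.
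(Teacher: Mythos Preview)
Your proposal is correct and matches the paper's own treatment: the paper does not write out a proof at all but announces at the start of \S\ref{subsec:snlca} that it follows \cite[Sec.~3]{DK06} and omits the arguments as ``straightforward supersymmetric analogues'' of those in \cite{DK06}. Your sketch is in fact more detailed than what the paper provides, but it is exactly the De~Sole--Kac induction on the grading that the paper has in mind.
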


For a given supersymmetric nonlinear LCA $\mathcal{R}$, consider
the two-sided ideal $\mathcal{J}(\mathcal{R})$ of $\mathcal{T}(\mathcal{R})$ generated by
elements of the form
\[ (:ab:-(-1)^{p(a)p(b)}:ba:)- (-1)^{p(a)p(b)} \sum_{j\geqslant 1} \frac{(-T)^j}{j!} b_{(-1+j|1) a},\]
where
\ben
[b_\Lambda a]= \sum_{j_0\in \Z_{\geqslant 0},\ts j_1=0,1}(-1)^{j_1}\Lambda^{(j_0|j_1)} b_{(j_0|j_1)}a.
\een
Then the $\La$-bracket and the product $: \ \ :$ on
$\mathcal{T}(\mathcal{R})$ induce a well-defined $\La$-bracket and product on the quotient
\[ V(\mathcal{R})= \mathcal{T}(\mathcal{R})/\mathcal{J}(\mathcal{R}).\]
Since $V(\mathcal{R})$ satisfies quasi-commutativity, quasi-associativity and Wick formula,
it is a supersymmetric vertex algebra which is
called the {\it universal enveloping supersymmetric vertex algebra of $\mathcal{R}$};
cf. Definition~\ref{def:uea}.

\begin{proposition}
For a given ordered basis $\mathcal{B}$ of $\mathcal{R}$,
the supersymmetric vertex algebra $V(\mathcal{R})$
is freely generated by $\mathcal{B}$.
\end{proposition}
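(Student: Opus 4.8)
The plan is to prove the two halves of free generation separately: that the ordered monomials
\[
:a_{j_1}a_{j_2}\cdots a_{j_s}:,\qquad j_1\leqslant j_2\leqslant\cdots\leqslant j_s,\qquad j_r<j_{r+1}\ \text{whenever}\ p(a_{j_r})=\bar 1,
\]
span $V(\mathcal{R})$ over $\C$, and that they are linearly independent. The organizing tool throughout is the increasing filtration of $\mathcal{T}(\mathcal{R})$ by total grading, $F_\zeta=\bigoplus_{\zeta'\leqslant\zeta}\mathcal{T}(\mathcal{R})_{\zeta'}$, which descends to a filtration of $V(\mathcal{R})$. The defining property of a supersymmetric nonlinear LCA — that $[\mathcal{R}_{\zeta\,\Lambda\,}\mathcal{R}_{\zeta'}]$ lies in $\mathcal{L}\otimes\mathcal{T}(\mathcal{R})_{(\zeta+\zeta')_-}$ — guarantees that every correction term on the right-hand side of the quasi-commutativity relation, of quasi-associativity \eqref{Eqn:q-comm/assoc}, and of the non-commutative Wick formula \eqref{Wick} has strictly smaller total grading than the monomial it corrects.

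For the spanning statement I would set up a rewriting procedure on monomials $a_{j_1}\otimes\cdots\otimes a_{j_s}$: use quasi-associativity to bring the monomial to the standard right-nested normally ordered form, and use quasi-commutativity to transpose any adjacent pair violating the ordering condition (this includes $a\otimes a$ with $p(a)=\bar 1$, for which quasi-commutativity gives $2\,{:}aa{:}=-\sum_{j\geqslant 1}\frac{(-T)^j}{j!}\,a_{(-1+j|1)}a$, expressing it through terms of strictly smaller grading). Each elementary move either keeps the grading fixed while strictly decreasing a combinatorial complexity — say, the length together with the number of out-of-order adjacent pairs, ordered lexicographically — or produces a sum of monomials of strictly smaller grading. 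A double induction, outer on $\zeta\in\N/n$ and inner on the combinatorial complexity, terminates and rewrites every element of $V(\mathcal{R})$ as a $\C$-combination of ordered monomials.

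For linear independence I would exploit the \emph{abelian} supersymmetric LCA $\mathcal{R}^{\ab}$, namely $\mathcal{R}$ regarded as a $\mathcal{K}$-module with zero $\Lambda$-bracket, whose universal enveloping supersymmetric vertex algebra $V(\mathcal{R}^{\ab})$ is, by Theorem~\ref{Thm:universal SUSY VA}, freely generated by $\mathcal{B}$; in particular the ordered monomials form a $\C$-basis of $V(\mathcal{R}^{\ab})$. I would make $V(\mathcal{R}^{\ab})$ into a module over $\mathcal{T}(\mathcal{R})$, letting each generator $a_i$ act through the combination of $(j_0|j_1)$-products dictated by its $\Lambda$-bracket together with left normal ordering, and then check — this is the essential point — that the ideal $\mathcal{J}(\mathcal{R})$ acts by zero, so that $V(\mathcal{R})$ itself acts on $V(\mathcal{R}^{\ab})$. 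The map $A\mapsto A\cdot\vac$ is then a surjection $V(\mathcal{R})\to V(\mathcal{R}^{\ab})$ carrying an ordered monomial of $V(\mathcal{R})$ to the corresponding basis element of $V(\mathcal{R}^{\ab})$ modulo ordered monomials of strictly smaller grading (the bracket corrections lower the grading and $V(\mathcal{R}^{\ab})$ is supercommutative). Being unitriangular with respect to the grading, this map is injective on the span of the ordered monomials, which gives their linear independence; combined with the spanning statement this shows that $\mathcal{B}$ freely generates $V(\mathcal{R})$.

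The step I expect to be the main obstacle is that last consistency check — that the generators of $\mathcal{J}(\mathcal{R})$ are respected by the putative action on $V(\mathcal{R}^{\ab})$, equivalently that the enveloping structure is well defined. This is precisely where the Jacobi identity (together with the compatibility of sesquilinearity with it) enters in full strength; the reordering argument and the supercommutativity of the associated graded are essentially bookkeeping by comparison. Since every one of these steps is a line-by-line supersymmetric transcription of the corresponding argument in \cite[Sect.~3]{DK06}, with $\Z_{\geqslant 0}$-indexed products replaced by $(j_0|j_1)$-indexed ones and signs inserted according to parity, I would only write out the places where the odd coordinate $\theta$ and the relation $[S,S]=2T$ genuinely enter, and otherwise refer to \emph{loc. cit.}
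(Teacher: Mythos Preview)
The paper gives no proof of this proposition: the opening of Section~\ref{subsec:snlca} declares that all arguments there are omitted as ``straightforward supersymmetric analogues of those in \cite{DK06}''. Your plan to transcribe \cite[Sect.~3]{DK06} is therefore exactly what the paper itself has in mind, and your spanning argument---reorder via quasi-commutativity and quasi-associativity, double induction on the $\N/n$-grading and a combinatorial complexity---is the standard route.

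One caution on the linear-independence half. The module strategy you sketch---construct a $V(\mathcal{R})$-action on $V(\mathcal{R}^{\ab})$ and read off injectivity from unitriangularity of $A\mapsto A\cdot\vac$---is a legitimate PBW technique, but the action cannot be plain left normal ordering in $V(\mathcal{R}^{\ab})$: under that action the supercommutator ${:}ab{:}-(-1)^{p(a)p(b)}{:}ba{:}$ already acts by zero (since $V(\mathcal{R}^{\ab})$ is supercommutative), whereas the bracket-correction side of the relation in $\mathcal{J}(\mathcal{R})$ is nonzero in general, so $\mathcal{J}(\mathcal{R})$ would \emph{not} act trivially. Making it act trivially forces you to insert lower-grading correction operators inductively, and checking their consistency is precisely where the Jacobi identity enters in full---you rightly flag this as the main obstacle. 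If you genuinely intend a line-by-line transcription of \cite{DK06}, verify whether their argument actually proceeds via such a module construction or rather via a direct filtration/confluence analysis on $\mathcal{T}(\mathcal{R})$; both are valid, but they are organized differently enough that calling yours a transcription may overstate the match.
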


\section{Good filtered complexes of supersymmetric nonlinear LCAs}
\label{Sec:filtered complex}

Here we reproduce some useful facts about bigraded complexes.
Proofs can be obtained by suitable supersymmetric versions
of the arguments in \cite[Sec.~4]{DK06}. Introduce the notation
\[ \Gamma= \frac{\Z}{2}, \quad \Gamma_+= \frac{\Z_{\geqslant 0}}{2}, \quad
\Gamma'_+= \frac{\Z_{>0}}{2}. \]
Let $\g$ be a graded vector superspace and $\mathcal{R}= \mathcal{K} \otimes \g$
be a nonlinear Lie conformal algebra such that
\begin{equation}\label{Eqn:bigrading}
 \g= \bigoplus_{\substack{p,q\in \Gamma,\  p+q=\Z_+, \\ \Delta\in \Gamma'_+}}
 \g^{p,q}[\Delta], \qquad \mathcal{R}= \bigoplus_{\substack{p,q\in \Gamma,\
 p+q=\Z_+, \\ \Delta\in \Gamma'_+}}  \mathcal{R}^{p,q}[\Delta],
 \end{equation}
where
\ben
\mathcal{R}^{p,q}[\Delta]= \bigoplus_{n\geqslant 0}
S^n\otimes \g^{p,q}\big[\Delta-\frac{n}{2}\big].
\een.

The universal enveloping supersymmetric vertex algebra $V(\mathcal{R})$,
which is strongly generated by a basis  $\{a_i\ts |\ts i\in I\}$  of $\mathcal{R}$,
has the $\Gamma'_+$-grading
\[ V(\mathcal{R})= \bigoplus_{\Delta\in \Gamma'_+}V(\mathcal{R})[\Delta]\]
where
\[  V(\mathcal{R})[\Delta]= \textstyle
\text{span}_\C\{\, :a_{i_1}a_{i_2}\dots a_{i_s}: \, |   \, i_k\in I,\, a_{i_k}
\in \mathcal{R}[\Delta_k],  \,  \sum_{k=1}^s \Delta_k=\Delta\}.\]
We assume that
\[ V(\mathcal{R})[\Delta_1] _{(n_0|n_1)} V(\mathcal{R})[\Delta_1]
\subset V(\mathcal{R})\big[\Delta_1+\Delta_2 -n_0-\frac{n_1}{2} -\frac{1}{2}\big].\]

Consider a $\Gamma$-filtration and a $\Z$-grading of $\mathcal{R}$
induced from \eqref{Eqn:bigrading}
\[ F^p \mathcal{R}= \bigoplus_{\substack{p'\geqslant p, \\  q, \Delta}}
\mathcal{R}^{p', q}[\Delta], \qquad \mathcal{R}^n = \bigoplus_{p+q=n} \mathcal{R}^{p,q},\]
and the corresponding filtration and $\Z$-grading of $V(\mathcal{R})$ defined by
\ben
\begin{aligned}
& V(\mathcal{R})^n = \textstyle \text{span}_\C\{\, :a_{i_1}a_{i_2}
\dots a_{i_s}: \, |   \, i_k\in I,\, a_{i_k} \in \mathcal{R}^{p_k, q_k},  \,
\sum_{k=1}^s p_k+q_k=n\},\\[0.5em]
&    F^p V(\mathcal{R})= \textstyle \text{span}_\C\{\, :a_{i_1}a_{i_2}\dots a_{i_s}: \, |
\, i_k\in I,\, a_{i_k} \in \mathcal{R}^{p_k, q_k},  \,  \sum_{k=1}^s p_k\geqslant p\}.
\end{aligned}
 \een
Set
 \[ F^pV(\mathcal{R})^n = F^pV(\mathcal{R}) \cap V(\mathcal{R})^n,
 \quad F^pV(\mathcal{R})^n[\Delta]= F^pV(\mathcal{R})^n \cap V(\mathcal{R})[\Delta] \]
and consider the associated graded algebra
 \[ \text{gr}\ts V(\mathcal{R}) = \bigoplus_{p,q\in \Gamma} \text{gr}^{p,q}V(\mathcal{R}),\]
where
 \ben
 \begin{aligned}
 & \text{gr}^{p,q} V(\mathcal{R})[\Delta]= F^p V(\mathcal{R})^{p+q}
 [\Delta]/F^{p+\frac{1}{2}} V(\mathcal{R})^{p+q}[\Delta],\\[0.4em]
 & \text{gr}^{p,q} V(\mathcal{R}) = F^p V(\mathcal{R})^{p+q}/ F^{p+\frac{1}{2}}
 V(\mathcal{R})^{p+q}= \bigoplus_{\Delta\in \Gamma'_+}  \text{gr}^{p,q} V(\mathcal{R})[\Delta].
 \end{aligned}
\een
\vskip 2mm
Suppose a differential map $ d: V(\mathcal{R}) \to V(\mathcal{R})$ satisfies
\begin{equation}\label{d:level1}
 d( F^pV(\mathcal{R})^n) \subset F^p V(\mathcal{R})^{n+1}, \quad
 d(V(\mathcal{R}[\Delta]) \subset V(\mathcal{R})[\Delta].
 \end{equation}
 Then we set for the cohomology spaces
 \ben
\begin{aligned}
&  F^pH^n(V(\mathcal{R}), d) = \Ker(d|_{F^pV(\mathcal{R})^n}) /\text{Im}\ts
d \cap F^p V(\mathcal{R})^n, \\[0.4em]
& \text{gr}^{p,q}H(V(\mathcal{R}), d)= F^p H^{p+q}(V(\mathcal{R}), d) / F^{p+\frac{1}{2}}
H^{p+q}(V(\mathcal{R}), d).
\end{aligned}
\een
In addition, for  the graded differential map $d^{\tss\text{gr}}: \text{gr}\ts V(\mathcal{R})
\to\text{gr}\ts V(\mathcal{R})$ induced from $d$, we define cohomology spaces by
\ben
 H^{p,q}(\text{gr}\ts V(\mathcal{R}), d^{\tss\text{gr}}) =
 \Ker\ts d^{\tss\text{gr}}|_{\text{gr}^{p,q}V(\mathcal{R})} /\text{Im}\ts d^{\tss\text{gr}}
 \cap \text{gr}^{p,q}V(\mathcal{R}).
\een

\begin{definition}
Let $d$ be a differential on $V(\mathcal{R})$ satisfying \eqref{d:level1}.
\begin{enumerate}
\item  We say $d$ is {\it almost linear differential of $\mathcal{R}$} if
\[ d^{\tss\text{gr}}( \g^{p,q}[\Delta] ) \subset \g^{p,q+1}[\Delta];\]
or, equivalently, $ d(\g^{p,q}[\Delta]) \subset \g^{p,q+1}[\Delta]
\oplus F^{p+\frac{1}{2}} V(\mathcal{R})^{p+q+1}$.
\item A differential $d$ is called a {\it good} almost linear differential of $\mathcal{R}$ if
\[ H^{p,q}(\g, d^{\tss\text{gr}})=0 \quad \text{ if } \quad p+q\neq 0.\]
\end{enumerate}
\end{definition}

In the rest of this section we assume that  $ V(\mathcal{R})[\Delta]$
has  finite dimension  for any $\Delta \in \Gamma'_+$ and $d$ is a good
almost linear differential of $\mathcal{R}$.  Take bases
\ben
\begin{aligned}
& \mathcal{B}_{\g}^p[\Delta]= \{\, e_i\, |\, i\in \mathcal{I}_\g^p[\Delta]\, \}
\qquad\text{for some index sets}\ \  \mathcal{I}_\g^p[\Delta],\\[0.2em]
&  \mathcal{B}_{\mathcal{R}}^p[\Delta] = \{ e_{(i,n)} \, | \, e_{(i,n)}= S^n e_i, \, e_i
\in \mathcal{B}_{\g}^{p}[\Delta'], \,  \Delta' +\frac{n}{2} =\Delta\},
\end{aligned}
\een
of $\g^{p,-p}[\Delta] \cap \Ker\ts d^{\tss\text{gr}} $  and $\mathcal{R}^{p,-p}[\Delta]
\cap \Ker\ts d^{\tss\text{gr}}= H^{p,-p}(\text{gr}\ts\mathcal{R}, d^{\tss\text{gr}})[\Delta]$,
respectively.
Then
\begin{equation*}
\mathcal{B}_{\mathcal{R}}:=\bigsqcup_{\substack{\Delta\in \Gamma'_+,\ts  p\in \Gamma }}
\mathcal{B}_{\mathcal{R}}^p[\Delta]= \{\,  e_{(i,n)}\, | \,
e_{(i,n)}= S^n e_i, \, i\in \mathcal{I}_\g \, \}
\end{equation*}
is a basis of
$ H(\text{gr}\ts\mathcal{R}, d^{\tss\text{gr}})$,
where
\ben
\mathcal{I}_\g:= \bigsqcup_{\substack{\Delta\in \Gamma'_+,\ts  p\in \Gamma }}
\mathcal{I}_\g^p[\Delta].
\een

\begin{proposition}\hfill
\begin{enumerate}
\item $H(\text{gr}\ts V(\mathcal{R}), d^{\tss\text{gr}})$ is freely
generated by $\mathcal{B}_{\mathcal{R}}$.
\item $H^{p,-p}(\text{gr}\ts V(\mathcal{R}), d^{\tss\text{gr}})[\Delta]$ has
the basis
\[  \mathcal{B}_{V(\mathcal{R})}^p[\Delta]= \big\{ : e_{(i_1, n_1)}e_{(i_2, n_2)}
\dots e_{(i_k, n_k)}:\big\},
\]
where the sets of indices $( i_t , n_t) \in
\mathcal{I}_\g^{p_t}[\Delta_t]\times \Z_{\geqslant 0} $
satisfy the conditions:
\begin{enumerate}[(i)]
\item $(i_t, n_t)\leqslant (i_{t+1}, n_{t+1})$,
\item if
$e_{(i_t, n_t)}$ and $e_{(i_{t+1}, n_{t+1})}$  are odd then $(i_t, n_t)<(i_{t+1}, n_{t+1})$,
\item $\sum_{t=1}^{k} i_t= p,$
\item $\sum_{t=1}^k \big(\Delta_t+ \frac{n_t}{2}\big)= \Delta
.$
\end{enumerate}
\end{enumerate}
\end{proposition}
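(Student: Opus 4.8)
The plan is to run the supersymmetric analogue of the argument in \cite[Sec.~4]{DK06}, which reduces the statement to a computation of the cohomology of a free supercommutative differential algebra. Throughout I use that $d$ is an odd derivation of the products of $V(\mathcal R)$, so $d^{\tss\text{gr}}$ is an odd derivation of $\text{gr}\ts V(\mathcal R)$. The first step is to describe $\text{gr}\ts V(\mathcal R)$ as a differential algebra: since $V(\mathcal R)$ is freely generated by an ordered basis of $\mathcal R$, it suffices to check that passing to the associated graded for $F^{\bullet}V(\mathcal R)$ kills all quantum corrections -- the right-hand sides of the quasi-commutativity and quasi-associativity relations \eqref{Eqn:q-comm/assoc}, and all products $a_{(j_0|j_1)}b$ with $j_0\geqslant 0$, lie in strictly deeper filtration than their leading terms; this is where the bigrading \eqref{Eqn:bigrading} and the weight bound on $(n_0|n_1)$-products are used, exactly as in the non-supersymmetric case. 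It follows that $\text{gr}\ts V(\mathcal R)$ is the free supercommutative algebra $S(\mathcal R)$ on the vector superspace $\mathcal R$ (a supersymmetric Poisson vertex algebra), with $d^{\tss\text{gr}}$ the unique odd derivation extending $d^{\tss\text{gr}}|_{\mathcal R}$. Almost linearity gives $d^{\tss\text{gr}}(\mathcal R^{p,q})\subset\mathcal R^{p,q+1}$, so $(\mathcal R,d^{\tss\text{gr}})$ is a subcomplex, and since $d^{\tss\text{gr}}$ commutes with $S$ up to sign it preserves the decomposition $\mathcal R=\bigoplus_{n\geqslant 0}S^{n}\otimes\g$.

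The second step computes $H(\mathcal R,d^{\tss\text{gr}})$. Since $d^{\tss\text{gr}}$ acts on each $S^{n}\otimes\g$ as $\pm\, d^{\tss\text{gr}}|_{\g}$, one gets $H(\mathcal R,d^{\tss\text{gr}})=\mathcal K\otimes H(\g,d^{\tss\text{gr}})$. By \eqref{Eqn:bigrading} the space $\g$ is concentrated in cohomological degrees $p+q\geqslant 0$, so in bidegree $(p,-p)$ there are no coboundaries (they would come from $\g^{p,-p-1}=0$), while goodness kills $H^{p,q}(\g,d^{\tss\text{gr}})$ for $p+q\neq 0$. Hence $H^{p,-p}(\g,d^{\tss\text{gr}})[\Delta]=\g^{p,-p}[\Delta]\cap\Ker d^{\tss\text{gr}}$ has basis $\mathcal B_{\g}^{p}[\Delta]$, and $H(\mathcal R,d^{\tss\text{gr}})$ is the free $\mathcal K$-module on $\mathcal B_{\g}$, with $\C$-basis $\mathcal B_{\mathcal R}=\{\,e_{(i,n)}=S^{n}e_{i}\,\}$.

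The last step is a homological lemma: over $\C$ the symmetric-algebra functor commutes with cohomology, $H(S(C),d)\cong S(H(C,d))$, for any complex $(C,d)$ of vector superspaces with $d$ odd. Indeed $(C,d)$ splits over $\C$ as $H(C,d)$ plus a direct sum of two-term acyclic complexes $\C\xi\xrightarrow{\sim}\C\,d\xi$; since $d$ is odd, $\xi$ and $d\xi$ have opposite parity, and a direct check shows $S(\C\xi\oplus\C\,d\xi)$ has cohomology $\C$ concentrated in degree $0$, so tensoring (degree by degree, legitimate since the conformal weight spaces are finite-dimensional) gives the claim. Applying it with $C=\mathcal R$,
\[ H(\text{gr}\ts V(\mathcal R),d^{\tss\text{gr}})\ \cong\ S\big(H(\mathcal R,d^{\tss\text{gr}})\big),\]
i.e.\ the ordered monomials in $\mathcal B_{\mathcal R}$ form a basis of the cohomology; this is part~(1). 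Part~(2) follows by restricting this isomorphism to the component of bidegree $(p,-p)$ and weight $\Delta$: each generator $e_{(i_t,n_t)}=S^{n_t}e_{i_t}$ lies in $\mathcal R^{i_t,-i_t}$, so a monomial $:e_{(i_1,n_1)}\dots e_{(i_k,n_k)}:$ automatically has bidegree $\big(\sum_t i_t,-\sum_t i_t\big)$, whence condition~(iii) selects the bidegree, condition~(iv) the weight, and conditions~(i)--(ii) are exactly the ordering conditions for a basis of the free supercommutative algebra on the ordered set $\mathcal B_{\mathcal R}$.

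The main obstacle is the first step -- showing $\text{gr}\ts V(\mathcal R)$ is genuinely the \emph{free} supercommutative algebra on $\mathcal R$ and that $d^{\tss\text{gr}}$ is the evident derivation on it. This needs the careful (if routine) supersymmetric bookkeeping that every quantum correction in the normally ordered product and in the $\Lambda$-bracket strictly lowers the $F$-filtration, the supersymmetric analogue of the estimates of \cite[Sec.~4]{DK06}; once that is in hand the remaining homological algebra over a characteristic-zero field is standard.
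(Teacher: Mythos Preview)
The paper does not supply its own proof here; the opening of Section~\ref{Sec:filtered complex} says that proofs are ``straightforward supersymmetric analogues of those in \cite[Sec.~4]{DK06}.'' Your proposal is exactly such an adaptation, and your overall architecture---decompose $\g$ into cohomology plus acyclic pairs, extend over $\mathcal K$, and use that the symmetric algebra functor commutes with cohomology---is the strategy of \cite[Prop.~4.7]{DK06}. In that sense your approach coincides with what the paper intends.

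There is, however, a genuine gap in your Step~1. You claim that the quasi-commutator and all $(j_0|j_1)$-products with $j_0\geqslant 0$ lie in strictly deeper $F^p$-filtration, so that $\text{gr}\,V(\mathcal R)$ is literally the free supercommutative algebra $S(\mathcal R)$. The hypotheses of Section~3 do not give this: the only compatibility assumed is with the conformal weight $\Delta$ (the displayed inclusion just after \eqref{Eqn:bigrading}), not with the $p$-component of the bigrading. In the paper's own application (Section~4.3) the $\Lambda$-bracket of two building blocks $J_{\bar a}$ and $J_{\bar b}$ has $p$-value equal to $j_a+j_b$, the same as that of $:J_{\bar a}J_{\bar b}:$, so the quasi-commutator does \emph{not} drop into $F^{p+\frac12}$ and $\text{gr}\,V(\mathcal R)$ is not supercommutative. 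What survives is only the PBW basis of each $\text{gr}^{p,q}V(\mathcal R)[\Delta]$, together with the almost-linearity of $d^{\tss\text{gr}}$ on generators. The argument in \cite{DK06} uses precisely these two ingredients: one builds a contracting homotopy from the acyclic summand of $\g$ and runs an induction on the PBW monomials (or, equivalently, introduces a secondary length filtration whose associated graded \emph{is} $S(\mathcal R)$ and runs a spectral sequence). Your Steps~2 and~3 are correct and complete the proof once Step~1 is reformulated in one of these ways; the ``free supercommutative algebra'' shortcut, as written, does not go through.
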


For $e_i \in \g^{p,-p}[\Delta] \cap \Ker\ts d^{\tss\text{gr}}$ there
exists an element $f_i\in F^{p+\frac{1}{2}}V(\mathcal{R})^0[\Delta]$ such
that $E_i=e_i+ f_i \in F^p V(\mathcal{R})^0[\Delta] \cap \Ker\ts d$.
Set
\[ H^{p,-p}(\g, d)[\Delta]= \text{span}\ts\{ \, E_i\, |\, i\in
\mathcal{I}_{\g}^p[\Delta]\, \}, \quad H(\g, d)[\Delta]= \bigoplus_{p\in \Gamma}
H^{p,-p}(\g, d)[\Delta] .\]

\begin{theorem} \label{Thm:cohomology}\hfill
\begin{enumerate}
\item $H(V(\mathcal{R}), d) = H^0 (V(\mathcal{R}), d).$
\item If the $\mathcal{K}$-module $H(\mathcal{R}, d)=\mathcal{K} \otimes H(\g, d)$
admits a nonlinear supersymmetric LCA structure, then
\[ H(V(\mathcal{R}), d)\simeq V (H(\mathcal{R}, d)).\]
\end{enumerate}
\end{theorem}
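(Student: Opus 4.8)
The plan is to run the spectral sequence of the good filtered complex $\big(V(\mathcal{R}),d\big)$, exactly as in \cite[Sec.~4]{DK06} but now in the $\Gamma=\Z/2$-graded setting, so as to obtain part (1) together with an identification of associated graded objects, and then to lift that identification to an isomorphism of supersymmetric vertex algebras.

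For (1), fix $\Delta\in\Gamma'_+$. By the standing assumption $V(\mathcal{R})[\Delta]$ is finite-dimensional, so the decreasing filtration $F^pV(\mathcal{R})[\Delta]$ on the complex $\big(V(\mathcal{R})^\bullet[\Delta],d\big)$ is finite and the associated spectral sequence converges: $E_r^{p,q}[\Delta]\Rightarrow H^{p+q}(V(\mathcal{R}),d)[\Delta]$, with $E_1^{p,q}[\Delta]=H^{p,q}(\text{gr}\,V(\mathcal{R}),d^{\text{gr}})[\Delta]$ and each $d_r$ ($r\geqslant 1$) raising the total degree $p+q$ by one. By the preceding Proposition, $H(\text{gr}\,V(\mathcal{R}),d^{\text{gr}})$ is freely generated by $\mathcal{B}_{\mathcal{R}}$, all of whose generators are of the form $S^ne_i$ with $e_i$ of bidegree $(p,-p)$, hence lie on the anti-diagonal $p+q=0$; consequently $E_1^{p,q}[\Delta]=0$ whenever $p+q\neq 0$. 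Since every $d_r$ shifts $p+q$ by one, it is forced to vanish, the spectral sequence degenerates at $E_1$, and $H^n(V(\mathcal{R}),d)[\Delta]=\bigoplus_{p+q=n}E_1^{p,q}[\Delta]=0$ for $n\neq 0$. Summing over $\Delta$ yields $H(V(\mathcal{R}),d)=H^0(V(\mathcal{R}),d)$.

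For (2), degeneration of the spectral sequence also gives, for the filtration induced by $F^\bullet$ on $H(V(\mathcal{R}),d)$, a canonical identification
\[ \text{gr}^{p,-p}H(V(\mathcal{R}),d)[\Delta]\;\cong\;E_1^{p,-p}[\Delta]\;=\;H^{p,-p}(\text{gr}\,V(\mathcal{R}),d^{\text{gr}})[\Delta], \]
so by part (2) of the Proposition $\text{gr}\,H(V(\mathcal{R}),d)$ is freely generated by the symbols of the $S^ne_i$. Next I pass to genuine cocycles: the lifts $E_i=e_i+f_i\in F^pV(\mathcal{R})^0[\Delta]\cap\Ker d$ are given, and since $S$ preserves the filtration and commutes with $d$, each $S^nE_i$ is a cocycle lifting $e_{(i,n)}=S^ne_i$. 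As $d$ is an odd derivation of the normally ordered product (and of the $\Lambda$-bracket), $H(V(\mathcal{R}),d)$ inherits a supersymmetric vertex algebra structure, the normally ordered monomials $:E_{(i_1,n_1)}E_{(i_2,n_2)}\cdots E_{(i_k,n_k)}:$ are cocycles, and their images in $\text{gr}\,H(V(\mathcal{R}),d)$ are precisely the PBW-ordered monomials $:e_{(i_1,n_1)}\cdots e_{(i_k,n_k)}:$ appearing in the Proposition. Since each $V(\mathcal{R})[\Delta]$ is finite-dimensional, the standard leading-term argument then shows these $E$-monomials form a $\C$-basis of $H(V(\mathcal{R}),d)$; in particular $\{E_i\mid i\in\mathcal{I}_\g\}$ strongly generates it and $H(\mathcal{R},d)=\mathcal{K}\otimes H(\g,d)\hookrightarrow H(V(\mathcal{R}),d)$ is freely embedded.

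Finally, granting the hypothesis that $H(\mathcal{R},d)$ carries a nonlinear supersymmetric LCA structure — necessarily the one in which $[E_i{}_\Lambda E_j]$ is expressed through monomials in the $S^nE_i$, i.e. lies in $\mathcal{T}(H(\mathcal{R},d))$ modulo the quasi-commutativity relations — the universal property of the universal enveloping supersymmetric vertex algebra (Theorem~\ref{Thm:universal SUSY VA} in its nonlinear form) produces a homomorphism $\phi\colon V(H(\mathcal{R},d))\to H(V(\mathcal{R}),d)$ of supersymmetric vertex algebras sending the $i$-th generator to $E_i$. Strong generation makes $\phi$ surjective, and matching the free PBW basis of $V(H(\mathcal{R},d))$ with the $E$-monomial basis of $H(V(\mathcal{R}),d)$ makes it injective, so $\phi$ is an isomorphism. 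The main obstacle is the bookkeeping of part (2): one must check that the lower-filtration corrections $f_i$ cannot produce spurious linear relations among the $E$-monomials, so that their leading symbols really do exhaust the basis of the associated graded, and that the nonlinear LCA structure induced on $H(\mathcal{R},d)$ from $H(V(\mathcal{R}),d)$ coincides with the one posited in the hypothesis, so that $\phi$ is well defined; the input from the Proposition, which rests on the $d^{\text{gr}}$-goodness already used in part (1), is exactly what makes both points go through.
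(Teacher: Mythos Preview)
Your argument is correct and is precisely the standard spectral-sequence/degeneration-at-$E_1$ argument from \cite[Sec.~4]{DK06}, adapted to the $\frac{1}{2}\Z$-graded supersymmetric setting. The paper itself does not give a proof of this theorem: at the start of Section~\ref{Sec:filtered complex} it states that ``Proofs can be obtained by suitable supersymmetric versions of the arguments in \cite[Sec.~4]{DK06}'', and your write-up is exactly such a version. One small slip: $S$ does not commute with $d$ but anticommutes (both are odd, cf.\ $S(QX)=-Q(SX)$ in the BRST discussion); this does not affect your conclusion that $S^nE_i$ is a cocycle.
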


\section{BRST cohomology}
We are now in a position to define supersymmetric $W$-algebras
via BRST cohomology following \cite{MR94}. We will rely on the
supersymmetric vertex algebra theory developed by Heluani and Kac~\cite{HK06,K96}
to describe the structure of the $W$-algebras
associated with odd nilpotent elements of Lie superalgebras.

\subsection{BRST complex}
Let $\g$ be a finite-dimensional simple Lie superalgebra with a $(\frac12\Z)$-grading
$\g= \bigoplus_{i\in \frac12\Z} \g(i)$ satisfying the following conditions:
\begin{enumerate}[(i)]
%\item $(e,h,f)$ is an $\sll_2$-triple in $\g$,
\item There exists $h\in \g_{\bar{0}}$ such that $\g(i)= \{a\in \g\ts |\ts \frac{1}{2} [h,a]= ia\}$.
\item There are odd elements $f_{\text{\rm odd}}\in \g(-\frac{1}{2})$ and
$e_{\text{\rm odd}}\in \g(\frac{1}{2})$  such that
\[ \text{span}\{e,e_{\text{\rm odd}}, h, f_{\text{\rm odd}}, f\}\simeq {\mathfrak{osp}}(1|2),\]
where $(e,h,f)$ is an $\sll_2$-triple.
\end{enumerate}
We will suppose that $\g$ is equipped with
a nondegenerate invariant bilinear form $(\, |\, )$
normalized by the conditions $(e|f)= \frac{1}{2} (h|h)=1$.

Introduce two supersymmetric vertex algebras.
\begin{enumerate}
\item Let $\overline{\g}=\{ \overline{a}\ts|\ts a\in \g\}$ be the vector superspace
defined by $\overline{\g}_{\bar{1}} = \g_{\bar{0}}$ and
$\overline{\g}_{\bar{0}} = \g_{\bar{1}}.$ The
{\it supersymmetric current nonlinear LCA}  is
\[ \mathcal{R}_{\text{cur}}:= \mathcal{K} \otimes \overline{\g}\]
endowed with the $\Lambda$-bracket
\[ [ \overline{a}_\Lambda \overline{b}] =(-1)^{p(a)p(\overline{b})}
\overline{[a,b]}+ k\ts\chi(a|b).\]
\item  Set $\n = \bigoplus_{i>0} \g(i)$ and $\n_- = \bigoplus_{i<0} \g(i)$.
Then there are bases \[ \{u_\alpha\ts|\ts \alpha\in I_+\} \quad \text{ and }
\quad  \{u^\alpha\ts|\ts \alpha\in I_+\}\] of $\n$ and $\n_-$, respectively,
parameterized by a certain index set $I_+$, such that
$(u^\alpha|u_\beta)= \delta_{\alpha,\beta}$. Introduce two vector superspaces
\[ \phi_\n \simeq \n \subset \g, \qquad \phi^{\overline{\n}_-}
\simeq \overline{\n}_- \subset \overline \g,
\]
spanned by the respective families of elements $\phi_b$ and $\phi^{\overline{a}}$
with $b\in \n$ and $\overline{a}\in \overline{\n}_-$.
Consider the supersymmetric nonlinear LCA $\mathcal{R}_{\text{ch}}
=\mathcal{K} \otimes (\phi_\n \oplus  \phi^{\overline{\n}_-})$
endowed with the $\Lambda$-bracket
\[ [ \phi^{\overline{a}}\, _\Lambda \phi_b]= [\phi_{b\, \Lambda}
\phi^{\overline{a}}]= (a|b).\]

\end{enumerate}
Due to the results of Section~\ref{subsec:snlca}, the two above
supersymmetric nonlinear LCAs give rise to
respective universal enveloping supersymmetric vertex algebras
$V(\mathcal{R}_{\text{cur}})$ and $V(\mathcal{R}_{\text{ch}})$.
Their tensor product
\ben
C(\overline{\g}, f_{\text{\rm odd}}, k) =  V(\mathcal{R}_{\text{cur}})\otimes
V(\mathcal{R}_{\text{ch}})
\een
also carries
a supersymmetric vertex algebra structure. Introduce the element $d$ by
\begin{equation}\label{Eqn:BRST_d}
d= \sum_{\alpha\in I_+} : (\overline{u}_\alpha-(f_{\text{\rm odd}}|u_\alpha))
\phi^\alpha:+\frac{1}{2} \sum_{\alpha,\beta\in I_+} (-1)^{p(\alpha)
p(\overline{\beta})}: \phi_{[u_\alpha, u_\beta]} \phi^\beta \phi^\alpha:\, ,
\end{equation}
where $\phi^\alpha= \phi^{\overline{u}^\alpha}$,
$\phi_\alpha= \phi_{u_\alpha}$, $p(\alpha)=p(u_\alpha)$
and $p(\overline{\alpha}) = p(\overline{u}_\alpha)$.

\begin{proposition} \label{Prop:d-Lambda-C}
The $\Lambda$-brackets between $d$ and elements in
$C(\overline{\g}, f_{\text{\rm odd}}, k)$ have the form:
\ben
\bal[]
[d_\Lambda \overline{a}]&= \sum_{\alpha\in I_+}
(-1)^{p(\overline{a})p(\alpha)} :\phi^\alpha \overline{[u_\alpha, a]}:
+ \sum_{\alpha\in I_+}(-1)^{p(\overline{\alpha})} k (\chi+S) \phi^\alpha (u_\alpha|a),\\
[d_\Lambda \phi^\alpha]&= \frac{1}{2}
\sum_{\alpha,\beta\in I_+} (-1)^{p(\overline{\alpha})p(\beta)}
:\phi^\beta \phi^{\overline{[u_\beta, u^\alpha]}}:,\\
[d_\Lambda \phi_\alpha] & = (-1)^{p(\overline{\alpha})}
u_\alpha -(f_{\text{\rm odd}}|u_\alpha) + \sum_{\beta\in I_+}
(-1)^{p(\overline{\alpha}) p(\beta)} :\phi^\beta \phi_{[u_\beta, u_\alpha]}:.
\eal
\een
\end{proposition}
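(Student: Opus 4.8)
The plan is to expand each of the three brackets $[d_\Lambda x]$, for $x\in\{\overline{a},\phi^\alpha,\phi_\alpha\}$, directly from the definition \eqref{Eqn:BRST_d} of $d$. Write $d=d_1+d_2$, where $d_1=\sum_{\alpha\in I_+}:(\overline{u}_\alpha-(f_{\text{\rm odd}}|u_\alpha))\phi^\alpha:$ is the quadratic part and $d_2=\tfrac12\sum_{\alpha,\beta\in I_+}(-1)^{p(\alpha)p(\overline{\beta})}:\phi_{[u_\alpha,u_\beta]}\phi^\beta\phi^\alpha:$ the cubic part. The only nonvanishing elementary $\Lambda$-brackets among the generators of $C(\overline{\g},f_{\text{\rm odd}},k)$ are $[\overline{a}_\Lambda\overline{b}]=(-1)^{p(a)p(\overline{b})}\overline{[a,b]}+k\chi(a|b)$ and $[\phi^{\overline{a}}{}_\Lambda\phi_b]=[\phi_b{}_\Lambda\phi^{\overline{a}}]=(a|b)$; in particular every bracket between the current sector and the ghost sector vanishes, and so do $[\phi^{\overline{a}}{}_\Lambda\phi^{\overline{b}}]$ and $[\phi_a{}_\Lambda\phi_b]$. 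I would compute $[x_\Lambda d]$ by the non-commutative Wick formula \eqref{Wick}, applied to the normally ordered products of $d_1$ and, upon one iteration, to the triple product of $d_2$, and then pass to $[d_\Lambda x]$ by skew-symmetry (Proposition~\ref{prop:Lambda-bracket}(2)); quasi-commutativity from \eqref{Eqn:q-comm/assoc} is used at the end to put the surviving normally ordered products in the order displayed in the statement. The key simplification is that the correction terms in quasi-commutativity, in quasi-associativity, and in the $\lambda^k$-sum of \eqref{Wick}, all involve $(j_0|j_1)$-products that cross between the current and ghost sectors and therefore vanish, so every Wick expansion collapses to at most two surviving terms.

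Carrying this out: for $x=\overline{a}$, the argument sees only the $\overline{u}_\alpha$ factor of $d_1$ (it annihilates every $\phi$) and $d_2$ contributes nothing, so after substituting $[\overline{a}_\Lambda\overline{u}_\alpha]=(-1)^{p(a)p(\overline{u}_\alpha)}\overline{[a,u_\alpha]}+k\chi(a|u_\alpha)$, keeping only the $\lambda^0$-term of the Wick sum, reordering by quasi-commutativity, and applying skew-symmetry (which substitutes $-\Lambda-\nabla$ for $\Lambda$ and thereby turns the residual $\chi$ into $\chi+S$), the identity $[a,u_\alpha]=-(-1)^{p(a)p(u_\alpha)}[u_\alpha,a]$ gives the first formula. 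For $x=\phi^\gamma$, $d_1$ contributes nothing ($\phi^\gamma$ annihilates $\overline{u}_\alpha$ and every $\phi^\alpha$), while in $d_2$ the argument contracts only with $\phi_{[u_\alpha,u_\beta]}$, producing a scalar $(u^\gamma|[u_\alpha,u_\beta])$ times $:\phi^\beta\phi^\alpha:$; expanding $[u_\alpha,u_\beta]$ in the basis $\{u_\gamma\}$, using $(u^\alpha|u_\beta)=\delta_{\alpha,\beta}$ and invariance of the form to recognise the resulting combinations as the components of $\phi^{\overline{[u_\beta,u^\alpha]}}$, and exploiting the $\alpha\leftrightarrow\beta$ symmetry to absorb the factor $\tfrac12$, gives the second formula. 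For $x=\phi_\alpha$, the argument contracts with the single $\phi^\beta$ in each summand of $d_1$ (forcing $\beta=\alpha$, which produces the linear terms coming from the factor $\overline{u}_\alpha-(f_{\text{\rm odd}}|u_\alpha)$), and with each of the two ghost factors $\phi^\beta,\phi^\alpha$ inside $d_2$ (the two contractions being identified by antisymmetry of the bracket, which cancels the $\tfrac12$), producing the remaining sum $\sum_\beta(-1)^{p(\overline{\alpha})p(\beta)}:\phi^\beta\phi_{[u_\beta,u_\alpha]}:$.

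I expect the bulk of the work to be sign bookkeeping rather than anything conceptual: each application of \eqref{Wick}, of skew-symmetry, and each transposition inside a normally ordered product produces Koszul signs depending on the parities $p(a)$, $p(\alpha)=p(u_\alpha)$ and $p(\overline{\alpha})=p(u_\alpha)+\bar 1$, and these must be reconciled with the signs $(-1)^{p(\alpha)p(\overline{\beta})}$ already built into $d$ and with the $\tfrac12$-cancellations coming from antisymmetry. One technical point worth flagging is that \eqref{Wick} as stated has the normally ordered product on the \emph{right} of the bracket, so one either works throughout with $[x_\Lambda d]$ and uses skew-symmetry only at the last step, as above, or one first records the ``left'' Wick formula for $[:bc:{}_\Lambda a]$ deduced from \eqref{Wick} and skew-symmetry. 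I would take the first route, since it uses only results already established in the excerpt.
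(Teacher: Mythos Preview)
Your approach is correct and is precisely the ``direct calculation'' the paper's own proof invokes (the paper's proof is a single sentence referring to the analogous classical computation in \cite{KS18}); computing $[x_\Lambda d]$ via the Wick formula \eqref{Wick} and then applying skew-symmetry, with the observation that all cross-sector and higher-$\lambda$ correction terms vanish, is exactly that calculation spelled out. One small caution on bookkeeping: in the $[d_\Lambda\phi^\alpha]$ case the statement retains the factor $\tfrac12$, so recheck whether the summation-index symmetry you invoke actually doubles the sum or merely shows two rewritings of the same single sum coincide.
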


\begin{proof}
The formulas are verified by a direct calculation in the same way as
for the supersymmetric classical $W$-algebras; see \cite{KS18}.
\end{proof}

Set $Q:=d_{(0|0)}$. Then, by the Wick formula \eqref{Wick}, we have
 \begin{equation} \label{d_property}
Q ( :A\, B:) = : Q(A) \, B:+ (-1)^{p(A)} :A \,Q(B): \, .
 \end{equation}

\begin{proposition}
The linear map $ Q$ on $C(\overline{\g}, f_{\text{\rm odd}}, k)$ satisfies $ Q^2=0$.
\end{proposition}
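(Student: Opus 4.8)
The plan is to reduce $Q^{2}=0$ to the single relation $Q(d)=d_{(0|0)}d=0$ and then to establish the latter by a direct, bookkeeping-heavy computation. From the explicit form \eqref{Eqn:BRST_d} of $d$ one first checks that $d$ is \emph{even}: since $p(\overline{u}_\alpha)=p(\phi^{\alpha})=p(u_\alpha)+\bar 1$ and $p(\phi_\alpha)=p(u_\alpha)$, every monomial occurring in $d$ is a normally ordered product of an even number of odd generators, so $Q=d_{(0|0)}$ is an odd operator. Now apply the Jacobi identity of Proposition~\ref{prop:Lambda-bracket}(3) with $a=b=d$ and extract the term that is constant in both $\Lambda$ and $\Gamma$; using $(-1)^{p(d)}=1$ and $(-1)^{(p(d)+1)(p(d)+1)}=-1$ one gets
\[
Q^{2}(c)=-\big(Q(d)\big)_{(0|0)}c-Q^{2}(c)\qquad\text{for all}\quad c\in C(\overline{\g},f_{\text{\rm odd}},k),
\]
whence $2\,Q^{2}(c)=-\big(Q(d)\big)_{(0|0)}c$. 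Thus it suffices to prove $Q(d)=0$. (Alternatively, by \eqref{d_property} the map $Q$ is an odd derivation of the normally ordered product, so $Q^{2}$ is an even derivation and one is reduced to checking $Q^{2}$ on the generators $\overline{a},\phi^{\alpha},\phi_{\alpha}$; I prefer the first reduction as it is the cleanest.)

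To compute $Q(d)$ I would apply the derivation $Q$ to \eqref{Eqn:BRST_d} via \eqref{d_property}, substituting the formulas of Proposition~\ref{Prop:d-Lambda-C} for $Q(\overline{u}_\alpha)$, $Q(\phi^{\alpha})$, $Q(\phi_{\alpha})$ (these are the $\Lambda$-independent parts of the corresponding $\Lambda$-brackets with $d$). Two observations make this tractable. First, the only nonzero $\Lambda$-brackets among the ghosts are $[\phi^{\overline{a}}{}_\Lambda\phi_b]=[\phi_b{}_\Lambda\phi^{\overline{a}}]=(a|b)$, which are \emph{constants}, and the brackets between the factors $V(\mathcal{R}_{\text{cur}})$ and $V(\mathcal{R}_{\text{ch}})$ vanish; hence every $(j_0|j_1)$-product with $(j_0,j_1)\neq(0|0)$ that would enter the quasi-commutativity and quasi-associativity corrections \eqref{Eqn:q-comm/assoc} is zero, so the ghost monomials may be reordered freely and the triple normal product in $d$ re-bracketed without correction terms — the ghosts behave like free super-commuting symbols and $\overline{\g}$ sits in a separate tensor factor. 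Second, the $k$-dependent term of $[d_\Lambda\overline{a}]$ is proportional to $(u_\alpha|a)$, and in $Q(d)$ it is evaluated at $a=u_\beta\in\n$; since the grading is compatible with $(\,|\,)$ one has $\g(i)\perp\g(j)$ for $i+j\neq 0$, so $\n$ is isotropic, $(u_\alpha|u_\beta)=0$, and no $k$-anomaly survives.

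With these simplifications $Q(d)$ splits into groups according to ghost content: terms quadratic in ghosts and linear in $\overline{\n}$; terms quadratic in ghosts of charged type $\phi^{\bullet}$; terms cubic in ghosts; and terms linear in a single ghost with scalar coefficient (from the $(f_{\text{\rm odd}}|u_\alpha)$ pieces of $d$ and of $Q(\phi_\gamma)$). Each group must be shown to vanish separately: the quadratic groups cancel once one uses invariance of $(\,|\,)$ to identify $([u_\alpha,u_\beta]|u^{\gamma})$ with $(u_\alpha|[u_\beta,u^{\gamma}])$ and rewrites the double sums in a common basis; the cubic group vanishes by the Jacobi identity in $\n$, which kills the total antisymmetrization of $[[u_\alpha,u_\beta],u_\gamma]$; and the scalar group cancels because $f_{\text{\rm odd}}\in\g(-\tfrac12)$ and $[\n,\n]\subset\n$, matching $(f_{\text{\rm odd}}|[u_\alpha,u_\beta])$ against $(f_{\text{\rm odd}}|u_\gamma)$. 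This is a supersymmetric transcription of the cocycle identity behind the classical BRST differential and runs parallel to the classical supersymmetric case of \cite{KS18} and the even case of \cite{krw:qr,dBTj}. The main obstacle is precisely this last step: tracking the Koszul signs $(-1)^{p(\alpha)p(\overline{\beta})}$, etc., through the reindexing of the sums, and confirming that the groups cancel exactly — including the factors of $\tfrac12$ produced by the symmetric double sum in $d$ and the signs from the identification of $S^{n}$-derivatives. Once the conventions are pinned down and the two simplifications above are in place, every group cancels term by term, so $Q(d)=0$ and hence $Q^{2}=0$.
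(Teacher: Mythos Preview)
Your proposal is correct and follows essentially the same approach as the paper's proof, which simply says the result follows by a direct computation with the use of Proposition~\ref{Prop:d-Lambda-C} and property~\eqref{d_property}. Your reduction to $Q(d)=0$ via the Jacobi identity and your organization of the cancellations (isotropy of $\n$ killing the $k$-term, invariance of the form, and the Jacobi identity in $\n$) flesh out exactly the computation the paper leaves to the reader.
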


\begin{proof}
This follows by a direct computation with the use of
Proposition~\ref{Prop:d-Lambda-C} and property \eqref{d_property}.
\end{proof}

By taking the cohomology of the BRST
complex $C(\overline{\g}, f_{\text{\rm odd}}, k)$ with the differential $Q$,
we can now define the corresponding
supersymmetric $W$-algebra as in \cite{MR94}; cf. \cite{a:iw} and
\cite[Ch.~15]{fb:va}.

\begin{definition}
The {\em supersymmetric $W$-algebra associated to $\overline{\g}$, $f_{\text{\rm odd}}$ and $k\in \C$} is
\[ W(\overline{\g}, f_{\text{\rm odd}}, k)= H(C(\overline{\g}, f_{\text{\rm odd}}, k),Q).\]
\end{definition}

\begin{proposition} Let $A,B\in C(\overline{\g}, f_{\text{\rm odd}}, k)$
satisfy $ Q(A)= Q(B)=0$ and $C$ be any element in $C(\overline{\g}, f_{\text{\rm odd}}, k)$.
Then the following holds:
\begin{enumerate}
\item  $ Q(SA)= Q(:AB:)=0$ and $ Q([A_\Lambda B])=0$;
\item  $S(QC)$, $: Q(C)\, B:$ and $[ Q(C)_\Lambda B]$ belong to the image of $ Q.$
\end{enumerate}
\end{proposition}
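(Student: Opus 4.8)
The plan is to reduce the statement to the single fact that $Q = d_{(0|0)}$ is an odd derivation of each of the three operations in play — the endomorphism $S$, the normally ordered product, and the $\Lambda$-bracket — after which parts (1) and (2) become the two standard consequences of such derivation rules. Each rule I would obtain by specializing the corresponding $\Lambda$-bracket axiom — sesquilinearity, the non-commutative Wick formula \eqref{Wick}, or the Jacobi identity — to $a = d$ and then setting $\Lambda = (\lambda,\chi) = 0$; by the expansion $[a_\Lambda b] = \sum_{j_0,j_1}(-1)^{j_1}\Lambda^{(j_0|j_1)}a_{(j_0|j_1)}b$, this substitution isolates $a_{(0|0)}b$, that is, $Q(b)$ when $a = d$. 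One rule is already on record: \eqref{d_property}, namely $Q(:AB:) = :Q(A)\,B: + (-1)^{p(A)}:A\,Q(B):$, is exactly \eqref{Wick} at $a = d$, $\Lambda = 0$; its sign $(-1)^{p(A)}$ shows $Q$ is odd, hence $d$ is even (which a short parity count on \eqref{Eqn:BRST_d} using $p(\overline a) = p(a)+\bar 1$ confirms directly).

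Next I would extract the other two rules. From sesquilinearity, $[d_\Lambda SX] = -(-1)^{p(d)}(S+\chi)[d_\Lambda X]$; on setting $\Lambda = 0$ the summand $\chi\,[d_\Lambda X]$ contributes nothing to the constant component (every $\chi\,\Lambda^{(j_0|j_1)}$ retains a factor of $\chi$, or of $\lambda$ via $\chi^2 = -\lambda$), so one is left with $Q(SX) = -S(QX)$ since $p(d) = \bar 0$. From the Jacobi identity with $a = d$, $b = X$, $c = Y$, setting $\Lambda = 0$ (so that $[d_\Lambda X]$ becomes $QX$ and $\Lambda+\Gamma$ becomes $\Gamma$) gives $Q([X_\Lambda Y]) = -[(QX)_\Lambda Y] - (-1)^{p(X)}[X_\Lambda(QY)]$, the prefactors $-(-1)^{p(d)}$ and $(-1)^{(p(d)+1)(p(X)+1)}$ being evaluated at $p(d) = \bar 0$.

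With these three identities both parts follow formally, using only the hypotheses $Q(A) = Q(B) = 0$ and not $Q^2 = 0$. For (1): $Q(SA) = -S(QA) = 0$; $Q(:AB:) = :Q(A)\,B: + (-1)^{p(A)}:A\,Q(B): = 0$ by \eqref{d_property}; and $Q([A_\Lambda B]) = -[(QA)_\Lambda B] - (-1)^{p(A)}[A_\Lambda(QB)] = 0$, all vanishings read off component-wise in $\Lambda$. For (2): $S(QC) = -Q(SC)$ lies in $\operatorname{Im} Q$; and when $Q(B) = 0$ the two product rules give $:Q(C)\,B: = Q(:CB:)$ and $[Q(C)_\Lambda B] = -Q([C_\Lambda B])$, both in $\operatorname{Im} Q$.

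The one point that needs genuine care — and the only real obstacle — is this passage from the $\Lambda$-bracket axioms to the statements about $Q = d_{(0|0)}$: I must check that taking the $\Lambda^{(0|0)}$-component commutes appropriately with the action of $S$ on $\Lambda$-bracket coefficients and with the substitution $\Lambda+\Gamma \mapsto \Gamma$, and must track all signs through $p(d) = \bar 0$. This is routine once one unwinds the definition of the monomials $\Lambda^{(j_0|j_1)}$ together with $\chi^2 = -\lambda$; the remainder of the argument is purely formal manipulation with the three Leibniz rules.
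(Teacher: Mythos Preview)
Your proposal is correct and follows essentially the same route as the paper: derive the three Leibniz-type rules for $Q$ with respect to $S$, the normally ordered product, and the $\Lambda$-bracket by specializing sesquilinearity, \eqref{d_property}, and the Jacobi identity at $\Lambda=0$, then read off parts (1) and (2). The paper's proof states exactly these three identities (with the same signs, since $(-1)^{p(X)+1}=-(-1)^{p(X)}$) but omits the bookkeeping you supply on the parity $p(d)=\bar 0$ and on why extracting the $\Lambda^{(0|0)}$-component interacts correctly with the $\chi$- and $(\Lambda+\Gamma)$-shifts.
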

\begin{proof}
By sesquilinearity of supersymmetric LCAs, for any
$X\in  C(\overline{\g}, f_{\text{\rm odd}}, k)$ we have $S( QX)=- Q(SX)$.
Hence the first properties in (1) and (2) hold. The second properties follow from \eqref{d_property}.
By the Jacobi identity of
supersymmetric LCAs, for $X,Y\in  C(\overline{\g}, f_{\text{\rm odd}}, k)$ we have
\ben
Q([X_\Lambda Y]) = -[Q(X)_\Lambda Y] +(-1)^{p(X)+1}[X_\Lambda Q(Y)]
\een
which gives the third properties in (1) and (2).
\end{proof}

\begin{corollary}
The supersymmetric $W$-algebra $W(\overline{\g}, f_{\text{\rm odd}}, k)$
is a supersymmetric vertex algebra.
\end{corollary}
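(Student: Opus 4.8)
The plan is to show that the four structure maps of the supersymmetric vertex algebra $C:=C(\overline{\g}, f_{\text{\rm odd}}, k)$ --- the translation operator $S$, the vacuum $\vac$, the normally ordered product $:\,\,:$, and the $\Lambda$-bracket $[\,_\Lambda\,]$ (equivalently, all products $a_{(j_0|j_1)}b$) --- descend to well-defined operations on $H(C,Q)=\Ker Q/\operatorname{Im}Q$, and then to observe that every defining identity of a supersymmetric vertex algebra, being a polynomial (or locally finite sum) identity in precisely these operations, is automatically inherited. The preceding Proposition supplies exactly what is needed for descent: applied to cocycles, $S$, $:\,\,:$ and $[\,_\Lambda\,]$ again produce cocycles; applied with a coboundary in one slot and a cocycle in the other, they produce a coboundary.

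For $S$: if $Q(A)=0$ then $Q(SA)=0$, and if $A=QC$ then $SA=S(QC)=-Q(SC)\in\operatorname{Im}Q$ by sesquilinearity; hence $S$ acts on $H(C,Q)$. For $:\,\,:$: if $Q(A)=Q(B)=0$ then $Q(:AB:)=0$, and replacing $A$ by $A+QC$ and $B$ by $B+QD$ (with $QD$ again a cocycle, since $Q^2=0$) changes $:AB:$ by $:(QC)B:+:A(QD):+:(QC)(QD):$; each term is a coboundary, because \eqref{d_property} with $Q(B)=0$ gives $:(QC)B:=Q(:CB:)$, with $Q(A)=0$ gives $:A(QD):=(-1)^{p(A)}Q(:AD:)$, and the third term reduces to the first case. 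The same three-term bookkeeping, using $Q([X_\Lambda Y])=-[Q(X)_\Lambda Y]+(-1)^{p(X)+1}[X_\Lambda Q(Y)]$, shows that $[\,_\Lambda\,]$ descends. Finally $Q(\vac)=0$ (for instance because $Q$ is an odd derivation of $:\,\,:$ by \eqref{d_property} and $:\vac\vac:=\vac$), so the class of $\vac$ is defined; it is nonzero and serves as the identity for the induced product since $:[\vac]\,[A]:=[:\vac A:]=[A]$ at the level of classes, hence it is a vacuum vector.

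It remains to transfer the axioms. Since $Q=d_{(0|0)}$ is an odd derivation of all the $(j_0|j_1)$-products (part (2) of the preceding Proposition together with \eqref{d_property} and Definition~\ref{Def:products}), every such product descends compatibly, and the induced state--field correspondence on $H(C,Q)$ sends the class of $a$ to the field whose products are given by the classes of $a_{(j_0|j_1)}b$. Translation covariance, the vacuum axioms, quasi-commutativity and quasi-associativity \eqref{Eqn:q-comm/assoc}, the non-commutative Wick formula \eqref{Wick}, and the sesquilinearity, skew-symmetry and Jacobi identity of the $\Lambda$-bracket all hold in $C$ and are statements about exactly these operations applied to elements; applying the quotient map yields the identical statements in $H(C,Q)$, so by the equivalent $\Lambda$-bracket description of supersymmetric vertex algebras $H(C,Q)$ is one. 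The only point needing care is that several of these identities contain sums over $j\geqslant 0$: these are finite on each homogeneous vector because $C$ inherits a conformal-weight grading by $\Gamma'_+$ with finite-dimensional graded components and $Q$ preserves it, so the sums remain finite in the quotient and the induced $Y$ indeed takes values in super fields. I expect the main obstacle to be not any single computation but securing representative-independence of the induced products --- and that is precisely the content of the preceding Proposition, so once it is in hand the argument is bookkeeping.
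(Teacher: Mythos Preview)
Your proposal is correct and is precisely the argument the paper has in mind: the Corollary is stated immediately after the Proposition with no separate proof, because the Proposition was designed exactly to show that $S$, $:\,\,:$ and $[\,_\Lambda\,]$ send cocycles to cocycles and coboundaries to coboundaries, after which the supersymmetric vertex algebra axioms descend verbatim. Your write-up simply spells out in detail what the paper leaves implicit; the only minor remark is that the finiteness of the sums in \eqref{Eqn:q-comm/assoc} and the super field property of the induced $Y$ already follow from the corresponding properties in $C$ without invoking any conformal-weight grading, so that appeal is unnecessary (and, at this point in the paper, not yet set up).
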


\subsection{Building blocks of supersymmetric $W$-algebras}
For any $\bar{a}\in \bar{\g}$ set
\ben
J_{\bar{a}}= \bar{a}+\sum_{\beta\in I_+} (-1)^{p(\bar{a})p(\bar{\beta})}
:\phi^\beta \phi_{[u_\beta,a]}: \, \in C(\overline{\g}, f_{\text{\rm odd}}, k).
\een

\begin{proposition} \label{Prop: d(J)}
For the element $d$ defined in \eqref{Eqn:BRST_d} we have
\[ [ d_\Lambda J_{\bar{a}}]= \sum_{\beta\in I_+} (-1)^{p(\overline{a})p(\beta)}
:\phi^\beta (J_{\overline{\pi_{\leqslant 0}[u_\beta,a]}}+(f_{\text{\rm odd}}|[u_\beta, a])):
+\sum_{\beta \in I_+}(-1)^{\overline{\beta}}k\tss (S+\chi)\tss \phi^\beta(u_\beta|a), \]
where $\pi_{\leqslant 0}: \g \to \oplus_{i\leqslant 0} \g(i)$ is the projection map
with the kernel $\oplus_{i> 0} \g(i)$.
\end{proposition}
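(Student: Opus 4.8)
The plan is to compute $[d_\Lambda J_{\bar a}]$ directly from the three building-block $\Lambda$-brackets of Proposition~\ref{Prop:d-Lambda-C}, pushing the $\Lambda$-bracket with $d$ through the normally ordered products occurring in $J_{\bar a}$ with the help of the non-commutative Wick formula~\eqref{Wick}, sesquilinearity, and quasi-associativity. Since
\[
J_{\bar a} = \overline{a} + \sum_{\beta\in I_+}(-1)^{p(\bar a)p(\overline{\beta})}\,:\!\phi^\beta\phi_{[u_\beta,a]}\!: ,
\]
linearity of the bracket gives $[d_\Lambda J_{\bar a}] = [d_\Lambda\overline a] + \sum_{\beta}(-1)^{p(\bar a)p(\overline{\beta})}[d_\Lambda\,:\!\phi^\beta\phi_{[u_\beta,a]}\!:]$. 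The first term is read off directly from Proposition~\ref{Prop:d-Lambda-C}: it already yields the anomaly contribution $\sum_\beta(-1)^{\overline{\beta}}k(S+\chi)\phi^\beta(u_\beta|a)$ of the claimed identity, as well as a current term $\sum_\beta(-1)^{p(\bar a)p(\beta)}:\!\phi^\beta\,\overline{[u_\beta,a]}\!:$ in which the current runs over the \emph{full} bracket $[u_\beta,a]\in\g$.

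For each quadratic summand I would apply \eqref{Wick} with $(a,b,c)=(d,\phi^\beta,\phi_{[u_\beta,a]})$, reducing $[d_\Lambda\,:\!\phi^\beta\phi_{[u_\beta,a]}\!:]$ to the brackets $[d_\Lambda\phi^\beta]$ and $[d_\Lambda\phi_{[u_\beta,a]}]$; the latter is obtained by extending $[d_\Lambda\phi_c]$ linearly in $c$ under the standing conventions $\phi_b=0$ for $b\notin\n$ and $\phi^{\overline b}=0$ for $\overline b\notin\overline{\n}_-$, so that every commutator $[u_\beta,\cdot]$ inside a ghost field is tacitly projected onto $\n$, resp.\ $\n_-$. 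The key simplification is that the only nonzero $\Lambda$-bracket among ghost fields is the \emph{constant} one $[\phi^{\overline{c}}{}_\Lambda\phi_b]=(c|b)$, whose $(j_0|j_1)$-products all vanish except the $(0|0)$-one; hence the Wick sum over $k$ is finite, and after rewriting the surviving products with Definition~\ref{Def:products}(2) one is left with normally ordered monomials in the ghost fields, linear or cubic in them, multiplied by currents or scalars.

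The last step is the regrouping, organized by the number of ghost fields. Among the terms linear in a single ghost $\phi^\beta$, the current pieces combine $:\!\phi^\beta\,\overline{[u_\beta,a]}\!:$ from $[d_\Lambda\overline a]$ with the current contribution of $:\!\phi^\beta[d_\Lambda\phi_{[u_\beta,a]}]\!:$; since $\phi$ sees only the $\n$-part and $[d_\Lambda\phi_\alpha]$ produces its current term for $\alpha\in I_+$ only, that contribution runs over the $\n$-component of $[u_\beta,a]$ alone, and it must cancel the matching part of $\overline{[u_\beta,a]}$, leaving $\overline{\pi_{\leqslant 0}[u_\beta,a]}$; this is the origin of the projection in the statement. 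The scalar $(f_{\text{\rm odd}}|[u_\beta,a])$ comes from the $-(f_{\text{\rm odd}}|u_\alpha)$ summand of $[d_\Lambda\phi_\alpha]$ (only the $\g(\tfrac12)$-component pairs nontrivially with $f_{\text{\rm odd}}\in\g(-\tfrac12)$), and the anomaly term passes through unchanged. The cubic-in-ghost terms, coming from $[d_\Lambda\phi^\beta]$ paired with $\phi_{[u_\beta,a]}$, from $:\!\phi^\beta\times\bigl(\text{the quadratic-ghost part of }[d_\Lambda\phi_{[u_\beta,a]}]\bigr)\!:$, and from the $k\geqslant 1$ terms of the Wick sum, must, after one invokes the super Jacobi identity of $\g$ and invariance of $(\,|\,)$, reassemble into $:\!\phi^\beta$ times the quadratic tail of $J_{\overline{\pi_{\leqslant 0}[u_\beta,a]}}$. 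Consistency of all these matchings is exactly what forces the coefficient $(-1)^{p(\bar a)p(\overline{\beta})}$ in the definition of $J_{\bar a}$.

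I expect the main obstacle to be precisely this sign and combinatorial bookkeeping: the Koszul signs from the $\Z/2\Z$-grading propagate through the Wick formula, through every reordering inside the normally ordered monomials, and through the factors $(-1)^{1-j_1}$ of Definition~\ref{Def:products}, and one must verify that both the cancellation of the $\n$-current pieces and the reassembly of the quadratic tail of $J_{\overline{\pi_{\leqslant 0}[u_\beta,a]}}$ come out with exactly matching signs. As with Proposition~\ref{Prop:d-Lambda-C}, the computation parallels its classical counterpart treated in \cite{KS18}, which can serve as a template.
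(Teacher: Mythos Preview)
Your plan is essentially the paper's: expand by linearity and the Wick formula~\eqref{Wick}, feed in the three brackets of Proposition~\ref{Prop:d-Lambda-C}, regroup, and defer the residual sign bookkeeping to \cite{KS18}. One sharpening worth noting: the $k\geqslant 1$ terms of the Wick sum are not a third source of cubic contributions but in fact vanish outright---the paper observes that $[\phi_{[u_\beta,a]}{}_\Lambda\!:\phi^\gamma\phi^{\overline{[u_\gamma,u^\beta]}}\!:]$ carries no $\chi$-component (the ghost brackets are scalars), so by skew-symmetry neither does the reversed bracket, killing all $(k-1|1)$-products for $k\geqslant 1$; with only two cubic sources left, quasi-associativity (using $\phi^{\bar n}{}_{(j|1)}\phi_m=0$ for $j\geqslant 0$) reassociates $:\!:\phi^\gamma\phi^{\overline{[u_\gamma,u^\beta]}}\!:\phi_{[u_\beta,a]}\!:$ into $:\!\phi^\gamma\!:\phi^{\overline{[u_\gamma,u^\beta]}}\phi_{[u_\beta,a]}\!:\!:$ before the match with the quadratic tail of $J_{\overline{\pi_{\leqslant 0}[u_\beta,a]}}$.
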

\begin{proof} By the Wick formula,
\begin{align}
\, [d_\Lambda J_{\bar{a}}]& =   [d_\Lambda \bar{a}]+ \sum_{\beta\in I_+}
(-1)^{p(\bar{a})p(\bar{\beta})}[d \, _\Lambda :\phi^\beta \phi_{[u_\beta,a]}:] \nonumber \\
&{} =  [d_\Lambda \bar{a}] +  \sum_{\beta\in I_+}
 (-1)^{p(\bar{a})p(\bar{\beta})}: [d \, _\Lambda \phi^\beta] \phi_{[u_\beta,a]}:
  \label{(3.5)_1105}\\
 & + \sum_{\beta,\gamma\in I_+,\, k\geqslant 1} \frac{\lambda^k}{2k!}
 (-1)^{p(\bar{\beta})(p(\gamma)+p(a)+1)}  \left(:\phi^\gamma
 \phi^{\overline{[u_\gamma, u^\beta]}}:\right)_{(k-1|1)} \phi_{[u_\beta,a]}\label{(3.6)_1105}\\[0.5em]
 & + \sum_{\beta\in I_+}(-1)^{p(\bar{a})p(\bar{\beta})}
 :\phi^\beta[d_\Lambda \phi_{[u_\beta, a]}]:.
 \non
\end{align}
 Since the coefficients of $\Lambda^{j_0} \chi$ in
 $[\phi_{[u_\beta, a]\ \Lambda} : \phi^\gamma \phi^{\overline{[u_\gamma, u^\beta]}}:]$
 are all zero, the coefficients of $\Lambda^{j_0} \chi$ in
\[\, [: \phi^\gamma \phi^{\overline{[u_\gamma, u^\beta]}}:
\,  _\Lambda \phi_{[u_\beta, a]} ] = (-1)^{p(\beta)p(\bar{a})}
[\phi_{[u_\beta, a]\ -\Lambda-\nabla} : \phi^\gamma
\phi^{\overline{[u_\gamma, u^\beta]}}:] \, \]
are also $0$ so that the expression in $\eqref{(3.6)_1105}$ vanishes.
The second term in \eqref{(3.5)_1105} equals
\[  \sum_{\beta,\gamma\in I_+} \frac{1}{2}
(-1)^{p(\bar{\beta})(p(\gamma)+p(a)+1)} :\ts :\phi^\gamma
\phi^{\overline{[u_\gamma, u^\beta]}}:\ts\phi_{[u_\beta,a]}:.\]
By the quasi-associativity in \eqref{Eqn:q-comm/assoc} and the fact
that $\phi^{\bar{n}}\, _{(j|1)}\phi_{m}=0$ for any $n\in \n$ and $m\in \n_-$
with $j\geqslant 0$, we have
\[ :\ts :\phi^\gamma \phi^{\overline{[u_\gamma, u^\beta]}} :\phi_{[u_\beta,a]}:\ts =\ts
  : \phi^\gamma  : \phi^{\overline{[u_\gamma, u^\beta]}} \phi_{[u_\beta,a]}:\ts :\ts .\]
The remaining computations are straightforward, they are
analogous to the classical case in \cite{KS18}.
\end{proof}

\begin{proposition}
If $a,b\in \bigoplus_{i\leqslant 0} \g(i)$ or $a,b\in \bigoplus_{i>0}\g$ then
\[\,[ J_{\bar{a}\, \Lambda } J_{\bar{b}}]=(-1)^{p(a)p(\bar{b})}J_{\overline{[a,b]}}+k\tss(S+\chi)\tss(a|b).\]
\end{proposition}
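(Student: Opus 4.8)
The plan is to split each building block as $J_{\bar a}=\bar a+N_a$ with
\[ N_a:=\sum_{\beta\in I_+}(-1)^{p(\bar a)p(\bar\beta)}:\phi^\beta\phi_{[u_\beta,a]}:\ \in V(\mathcal R_{\text{ch}}),\qquad \phi_{[u_\beta,a]}:=\phi_{\pi_{>0}[u_\beta,a]}, \]
and to expand $[J_{\bar a}{}_\Lambda J_{\bar b}]=[\bar a{}_\Lambda\bar b]+[\bar a{}_\Lambda N_b]+[N_a{}_\Lambda\bar b]+[N_a{}_\Lambda N_b]$ by bilinearity. The first term is the defining bracket of $\mathcal R_{\text{cur}}$, namely $(-1)^{p(a)p(\bar b)}\overline{[a,b]}+k\ts\chi(a|b)$. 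Because $C(\overline\g,f_{\text{\rm odd}},k)=V(\mathcal R_{\text{cur}})\otimes V(\mathcal R_{\text{ch}})$ is a tensor product of supersymmetric vertex algebras, all elementary brackets between a current $\bar c$ and the ghosts $\phi^\gamma,\phi_d$ vanish; applying the Wick formula \eqref{Wick} (and skew-symmetry to handle the normally ordered product sitting on the left) then shows that the two mixed terms $[\bar a{}_\Lambda N_b]$ and $[N_a{}_\Lambda\bar b]$ are zero. So the proposition reduces to the identity $[N_a{}_\Lambda N_b]=(-1)^{p(a)p(\bar b)}N_{[a,b]}$.

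To prove that, I would fix $\beta,\gamma\in I_+$ and compute $[:\phi^\beta\phi_{[u_\beta,a]}:{}_\Lambda:\phi^\gamma\phi_{[u_\gamma,b]}:]$. First, apply \eqref{Wick} to the right-hand factor, which reduces the problem to the inner brackets $[:\phi^\beta\phi_{[u_\beta,a]}:{}_\Lambda\phi^\gamma]$ and $[:\phi^\beta\phi_{[u_\beta,a]}:{}_\Lambda\phi_{[u_\gamma,b]}]$; using skew-symmetry followed by \eqref{Wick} again, and the only nonzero elementary ghost bracket $[\phi^{\overline c}{}_\Lambda\phi_d]=(c|d)$, each of these collapses to a single ghost monomial, namely (up to Koszul signs) $(u^\gamma|[u_\beta,a])\,\phi^\beta$ and $(u^\beta|[u_\gamma,b])\,\phi_{[u_\beta,a]}$. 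Substituting back, and using that every $(j_0|j_1)$-product $\phi^{\overline c}{}_{(j_0|j_1)}\phi_d$ with $j_0\geqslant 0$ vanishes (so there are no higher corrections and the quasi-associativity terms in \eqref{Eqn:q-comm/assoc} drop out), one finds that $[:\phi^\beta\phi_{[u_\beta,a]}:{}_\Lambda:\phi^\gamma\phi_{[u_\gamma,b]}:]$ equals a sum of two quadratic ghost terms, proportional to $:\phi^\beta\phi_{[u_\gamma,b]}:$ with coefficient $(u^\gamma|[u_\beta,a])$ and to $:\phi^\gamma\phi_{[u_\beta,a]}:$ with coefficient $(u^\beta|[u_\gamma,b])$.

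It remains to sum over $I_+$. Using the dual-basis completeness $\sum_{\gamma\in I_+}(u^\gamma|x)\,u_\gamma=\pi_{>0}(x)$, and relabelling $\beta\leftrightarrow\gamma$ in the second family, the two contributions become $\sum_\beta:\phi^\beta\phi_{[\pi_{>0}[u_\beta,a],\,b]}:$ and $\sum_\beta:\phi^\beta\phi_{[\pi_{>0}[u_\beta,b],\,a]}:$. This is precisely where the hypothesis is used: when $a,b$ both lie in $\bigoplus_{i\leqslant 0}\g(i)$ or both in $\bigoplus_{i>0}\g(i)$ — each of which is a subalgebra of $\g$ — one has $\pi_{>0}[\pi_{>0}[u_\beta,a],b]=\pi_{>0}[[u_\beta,a],b]$, because $[\pi_{\leqslant 0}[u_\beta,a],b]$ again lies in $\bigoplus_{i\leqslant 0}\g(i)$ (in the $\bigoplus_{i>0}\g(i)$ case no projection is needed at all). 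Hence $\phi_{[\pi_{>0}[u_\beta,a],b]}=\phi_{[[u_\beta,a],b]}$, and the super Jacobi identity $[u_\beta,[a,b]]=[[u_\beta,a],b]\pm[[u_\beta,b],a]$ combines the two families into $\sum_\beta(-1)^{p([a,b])p(\bar\beta)}:\phi^\beta\phi_{[u_\beta,[a,b]]}:=(-1)^{p(a)p(\bar b)}N_{[a,b]}$. Adding back $[\bar a{}_\Lambda\bar b]$ gives $(-1)^{p(a)p(\bar b)}(\overline{[a,b]}+N_{[a,b]})+k\ts\chi(a|b)=(-1)^{p(a)p(\bar b)}J_{\overline{[a,b]}}+k\ts(S+\chi)(a|b)$, the last term because $S$ kills the scalar $(a|b)$. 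The computation is a routine supersymmetric analogue of the classical one in \cite{KS18}; the genuine difficulty is bookkeeping — keeping the Koszul signs (from the parity flip $\overline\g$, from $\phi^\beta$ versus $\phi_\beta$, and from skew-symmetry and the super Jacobi identity) consistent — together with checking that the stated hypothesis is exactly what lets the projections $\pi_{>0}$ be removed so that the Jacobi identity can be applied.
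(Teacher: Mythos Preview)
Your proposal is correct and is precisely the ``direct computation'' the paper alludes to, carried out in detail: the paper's own proof consists of the single sentence ``This is verified by a direct computation,'' and your decomposition $J_{\bar a}=\bar a+N_a$, the vanishing of the mixed terms via the tensor-product structure, the ghost--ghost bracket via two applications of the Wick formula, and the use of the hypothesis to remove the inner $\pi_{>0}$ so that the super Jacobi identity applies, together constitute exactly that computation. The only point worth flagging is that the absence of a double-contraction scalar in $[N_a{}_\Lambda N_b]$ relies on the fact that the sole nonzero elementary ghost product is the $(0|0)$-product, while the Wick correction terms involve $(k-1|1)$-products with $k\geqslant 1$; you gesture at this (``no higher corrections'') but it is the one place where a reader might want the reason made explicit.
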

\begin{proof}
This is verified by a direct computation.
\end{proof}

Introduce the vector superspaces
\[ r_+= \phi_\n \oplus J_{\bar{\n}}\Fand r_-= J_{\bar{\g}_{\leqslant 0}}\oplus \phi^{\bar{\n}_-},\]
where
\ben
J_{\bar{\n}}=\text{span}\ts
\{ J_b\ts |\ts b\in \bar{\n}\}\Fand
J_{\bar{\g}_{\leqslant 0}} = \text{span}\ts
\{ J_{\bar{a}}\ts |\ts a\in \bigoplus_{i\in \Z_{\leqslant 0}}\g(i)\}.
\een
It is not difficult to see that
both $\mathcal{R}_+= \mathcal{K} \otimes r_+$ and $\mathcal{R}_-= \mathcal{K} \otimes r_-$
are supersymmetric nonlinear LCAs and that $C(\bar{\g}, f_{\text{\rm odd}}, k)$ decomposes
into the tensor product of supersymmetric
vertex subalgebras:
\[ C(\bar{\g}, f_{\text{\rm odd}}, k)=V(\mathcal{R}_+) \otimes V(\mathcal{R}_-).\]

\begin{lemma}[K\"unneth lemma]
Let $V_1$ and $V_2$ be vector superspaces and $d_i:V_i\to V_i$, $i=1,2$,
be differentials. If $d:V_1\otimes V_2 \to V_1 \otimes V_2$ is
defined by
\ben
d(a\otimes b)= d_1(a)\otimes b +(-1)^{p(a)} a\otimes d_2(b)
\een
then
\[ H(V,d) \simeq H(V_1, d_1) \otimes H(V_2, d_2).\]
\end{lemma}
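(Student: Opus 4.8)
The plan is to contract each tensor factor onto its cohomology by an explicit homotopy and then chain the two contractions together. Since we work over the field $\C$, every subspace has a complement (so no finite-dimensionality is needed), and for $i=1,2$ I choose a splitting $V_i = B_i \oplus \Hc_i \oplus L_i$ in which $B_i = \operatorname{im} d_i$, the subspace $\Hc_i$ is a complement of $B_i$ inside $\ker d_i$ (so the composite $\Hc_i \hookrightarrow \ker d_i \twoheadrightarrow H(V_i,d_i)$ is an isomorphism), and $L_i$ is a complement of $\ker d_i$ in $V_i$. Because $d_i$ kills $\ker d_i$ and has image $B_i$, the restriction $d_i|_{L_i}\colon L_i \to B_i$ is a linear isomorphism; I then define the odd operator $h_i\colon V_i\to V_i$ to be zero on $\Hc_i\oplus L_i$ and to equal $(d_i|_{L_i})^{-1}$ on $B_i$. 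Here the $d_i$ are odd: this is exactly what makes the given $d$ square to zero, and it is what makes the sign bookkeeping below work. A direct check on the three summands gives $d_i h_i + h_i d_i = \operatorname{id}_{V_i} - p_i$, where $p_i\colon V_i\to\Hc_i\hookrightarrow V_i$ is the projection onto $\Hc_i$ along $B_i\oplus L_i$; note $p_i d_i = d_i p_i = 0$, so $p_i$ is a chain map realizing $(V_i,d_i)\simeq(\Hc_i,0)$.

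Next I tensor. Using the sign rule $d(a\otimes b)=d_1a\otimes b+(-1)^{p(a)}a\otimes d_2b$, the subspace $\Hc_1\otimes V_2$ is $d$-stable, and on it $d$ acts by $x\otimes b\mapsto (-1)^{p(x)}x\otimes d_2b$. I claim the inclusion $\Hc_1\otimes V_2\hookrightarrow V_1\otimes V_2$ and the chain map $p_1\otimes 1$ are mutually inverse homotopy equivalences, with homotopy $h_1\otimes 1$: a short computation using that $h_1$ and $d_1$ are odd shows the unwanted cross terms $\pm(h_1a)\otimes d_2b$ cancel, leaving $d(h_1\otimes 1)+(h_1\otimes 1)d=(d_1h_1+h_1d_1)\otimes 1=\operatorname{id}-p_1\otimes 1$, while $p_1d_1=0$ makes both $p_1\otimes 1$ and the inclusion chain maps. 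Hence $H(V_1\otimes V_2,d)\simeq H(\Hc_1\otimes V_2)$. Running the same argument on the remaining tensor factor, with homotopy $1\otimes h_2$, contracts $\Hc_1\otimes V_2$ onto $(\Hc_1\otimes\Hc_2,0)$, whose cohomology is $\Hc_1\otimes\Hc_2$ itself. Composing, $H(V_1\otimes V_2,d)\simeq\Hc_1\otimes\Hc_2\simeq H(V_1,d_1)\otimes H(V_2,d_2)$, and tracing through the maps shows the isomorphism sends $[a]\otimes[b]$ to $[a\otimes b]$.

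The one step that genuinely needs care is the Koszul-sign bookkeeping: checking that $\Hc_1\otimes V_2$ is $d$-stable, that $p_1\otimes 1$ and the inclusion commute with the differentials, and above all that the homotopy identity $d(h_1\otimes 1)+(h_1\otimes 1)d=(d_1h_1+h_1d_1)\otimes 1$ really holds with the sign convention in force, the two offending terms cancelling precisely because $h_1$ reverses parity the same way $d_1$ does. None of this is deep, but it is where a sign error would be easiest to make, so I would write it out in full. As an alternative one could simply invoke the algebraic K\"unneth theorem, all the Tor terms vanishing because $\C$ is a field; I would nonetheless keep the homotopy argument since it is self-contained and matches the level of detail used elsewhere in the paper.
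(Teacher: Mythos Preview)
The paper does not actually prove this lemma: it is stated as a standard fact and used immediately afterward, with no argument supplied. So there is no ``paper's proof'' to compare against.

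Your argument is correct and self-contained. The key points all check out: over $\C$ the graded complements $\Hc_i$ and $L_i$ can be chosen to be sub-superspaces (since $\ker d_i$ and $\operatorname{im} d_i$ are themselves graded), so the contracting homotopy $h_i$ is a well-defined odd operator; the homotopy identity $d(h_1\otimes 1)+(h_1\otimes 1)d=(d_1h_1+h_1d_1)\otimes 1$ holds exactly because the cross terms $(-1)^{p(a)+1}h_1a\otimes d_2b$ and $(-1)^{p(a)}h_1a\otimes d_2b$ cancel, using that $h_1$ shifts parity; and $p_1\otimes 1$ is a chain map since $d_1p_1=p_1d_1=0$. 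The two-step contraction onto $\Hc_1\otimes\Hc_2$ then gives the result. Your remark that one could alternatively cite the algebraic K\"unneth theorem over a field is also fine, and is presumably what the authors had in mind when they left the lemma unproved.
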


\begin{proposition}
The differential $ Q$ has the properties
\begin{equation}\label{differential_decomp}
 Q(V(\mathcal{R}_+)) \subset V(\mathcal{R}_+)
 \fand Q(V(\mathcal{R}_-)) \subset V(\mathcal{R}_-),
\end{equation}
so that
\begin{equation}\label{apply Kunneth to W-algebra}
W(\bar{\g}, f_{\text{\rm odd}}, k)= H(V(\mathcal{R}_+), Q) \otimes H(V(\mathcal{R}_-), Q).
\end{equation}
\end{proposition}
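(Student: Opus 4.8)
The plan is to establish the inclusions \eqref{differential_decomp} by testing $Q$ on strongly generating sets of $V(\mathcal{R}_+)$ and $V(\mathcal{R}_-)$, and then to deduce \eqref{apply Kunneth to W-algebra} from the K\"unneth lemma. For the reduction, recall that $Q=d_{(0|0)}$ is an odd derivation of the normally ordered product by \eqref{d_property}, and that $Q(SX)=-S(QX)$ by sesquilinearity of the $\Lambda$-bracket. Each $V(\mathcal{R}_\pm)$ is a supersymmetric vertex subalgebra of $C(\overline{\g},f_{\text{\rm odd}},k)$, hence closed under $S$ and under the normally ordered product, and it is spanned by the normally ordered monomials $:S^{n_1}g_1\dots S^{n_s}g_s:$ with the $g_i$ in a basis of $r_\pm$. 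Since the derivation $Q$ sends such a monomial to a sum of monomials in which a single factor $S^{n_i}g_i$ is replaced by $(-1)^{n_i}S^{n_i}Q(g_i)$, the inclusion $Q(V(\mathcal{R}_\pm))\subset V(\mathcal{R}_\pm)$ will follow once we verify that $Q(g)\in V(\mathcal{R}_\pm)$ for $g$ running over a basis of $r_\pm$.

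To carry this out, note that $Q(X)$ is the part of the $\Lambda$-bracket $[d_\Lambda X]$ that is free of $\lambda$ and $\chi$, so all four cases are read off from Propositions~\ref{Prop:d-Lambda-C} and \ref{Prop: d(J)}. For $g=\phi^\alpha\in r_-$, Proposition~\ref{Prop:d-Lambda-C} exhibits $Q(\phi^\alpha)$ inside the span of the products $:\phi^\beta\phi^{\overline{[u_\beta,u^\alpha]}}:$ of two elements of $\phi^{\overline{\n}_-}\subset r_-$, so $Q(\phi^\alpha)\in V(\mathcal{R}_-)$. For $g=J_{\bar a}\in r_-$ with $a\in\bigoplus_{i\leqslant 0}\g(i)$, Proposition~\ref{Prop: d(J)} writes $Q(J_{\bar a})$ as a sum of the terms $:\phi^\beta(J_{\overline{\pi_{\leqslant 0}[u_\beta,a]}}+(f_{\text{\rm odd}}|[u_\beta,a])):$ and $kS\phi^\beta(u_\beta|a)$; here $\phi^\beta\in\phi^{\overline{\n}_-}$ and $\pi_{\leqslant 0}[u_\beta,a]\in\bigoplus_{i\leqslant 0}\g(i)$, so $J_{\overline{\pi_{\leqslant 0}[u_\beta,a]}}\in J_{\overline{\g}_{\leqslant 0}}$, and every summand lies in $V(\mathcal{R}_-)$. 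For $g=J_{\bar a}\in r_+$ with $a\in\n$, one has $[u_\beta,a]\in[\n,\n]\subset\n$ for every $\beta$, so $\pi_{\leqslant 0}[u_\beta,a]=0$, while $(f_{\text{\rm odd}}|[u_\beta,a])=0$ and $(u_\beta|a)=0$ since the invariant form pairs $\g(i)$ only with $\g(-i)$; hence Proposition~\ref{Prop: d(J)} gives $[d_\Lambda J_{\bar a}]=0$ and $Q(J_{\bar a})=0$. The remaining case $g=\phi_\alpha\in r_+$ is the delicate one: Proposition~\ref{Prop:d-Lambda-C} gives $Q(\phi_\alpha)=(-1)^{p(\overline{\alpha})}\overline{u}_\alpha-(f_{\text{\rm odd}}|u_\alpha)\vac+\sum_\beta(-1)^{p(\overline{\alpha})p(\beta)}:\phi^\beta\phi_{[u_\beta,u_\alpha]}:$, and substituting $\overline{u}_\alpha=J_{\overline{u}_\alpha}-\sum_\beta(-1)^{p(\overline{u}_\alpha)p(\overline{\beta})}:\phi^\beta\phi_{[u_\beta,u_\alpha]}:$ makes the ``mixed'' terms $:\phi^\beta\phi_{[u_\beta,u_\alpha]}:$ cancel, leaving $Q(\phi_\alpha)=(-1)^{p(\overline{\alpha})}J_{\overline{u}_\alpha}-(f_{\text{\rm odd}}|u_\alpha)\vac\in V(\mathcal{R}_+)$. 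This establishes \eqref{differential_decomp}.

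With \eqref{differential_decomp} in hand, $Q$ restricts to differentials $Q_\pm$ on $V(\mathcal{R}_\pm)$, and since $C(\overline{\g},f_{\text{\rm odd}},k)=V(\mathcal{R}_+)\otimes V(\mathcal{R}_-)$ and $Q$ is an odd derivation of the product, $Q(A\otimes B)=Q_+(A)\otimes B+(-1)^{p(A)}A\otimes Q_-(B)$. The K\"unneth lemma then yields $W(\overline{\g},f_{\text{\rm odd}},k)=H(C(\overline{\g},f_{\text{\rm odd}},k),Q)\simeq H(V(\mathcal{R}_+),Q_+)\otimes H(V(\mathcal{R}_-),Q_-)$, which is \eqref{apply Kunneth to W-algebra}.

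The hard part will be the cancellation of the mixed terms $:\phi^\beta\phi_{[u_\beta,u_\alpha]}:$ in $Q(\phi_\alpha)$: each is the normally ordered product of the generator $\phi^\beta$ of $r_-$ with the generator $\phi_{[u_\beta,u_\alpha]}$ of $r_+$, and hence \emph{a priori} lies in neither $V(\mathcal{R}_-)$ nor $V(\mathcal{R}_+)$, so the passage from $\overline{a}$ to $J_{\bar a}$ in the definitions of $r_\pm$ is exactly what makes the decomposition $C=V(\mathcal{R}_+)\otimes V(\mathcal{R}_-)$ compatible with $Q$. Verifying this cancellation comes down to the sign identity $(-1)^{p(\overline{\alpha})}(-1)^{p(\overline{u}_\alpha)p(\overline{\beta})}=(-1)^{p(\overline{\alpha})p(\beta)}$, which follows from $p(\overline{u}_\alpha)=p(u_\alpha)+\bar 1$ and $p(\overline{\beta})=p(\beta)+\bar 1$; the grading vanishings invoked in the other three cases are then routine.
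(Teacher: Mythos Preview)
Your proof is correct and follows exactly the route the paper sketches: the paper's own argument consists of the single line ``the inclusions follow from Propositions~\ref{Prop:d-Lambda-C} and~\ref{Prop: d(J)}, and then the K\"unneth lemma gives \eqref{apply Kunneth to W-algebra},'' and you have simply unpacked that line, including the sign check for the cancellation of the mixed terms in $Q(\phi_\alpha)$ that makes the $J_{\bar a}$ variables the right generators.
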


\begin{proof}
The inclusions
\eqref{differential_decomp} follow from Propositions~\ref{Prop:d-Lambda-C}
and~\ref{Prop: d(J)}. The decomposition \eqref{apply Kunneth to W-algebra} is then implied by
the K\"unneth lemma.
\end{proof}

\subsection{Generators of supersymmetric $W$-algebras}

We now aim to describe the cohomologies
$H(V(\mathcal{R}_+), Q)$
and $H(V(\mathcal{R}_-), Q)$.

\begin{proposition}
We have
$H (V(\mathcal{R}_+), Q)= \C$
so that $W(\bar{\g}, f_{\text{\rm odd}}, k)= H (V(\mathcal{R}_-), Q). $
\end{proposition}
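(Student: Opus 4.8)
The plan is to identify $(V(\mathcal{R}_+),Q)$ with the vertex-algebra analogue of a Koszul complex on the two paired generating families $\phi_{\n}$ and $J_{\bar{\n}}$, conclude that only the vacuum survives in cohomology, and then read off the second assertion from the K\"unneth decomposition \eqref{apply Kunneth to W-algebra}.

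First I would compute $Q=d_{(0|0)}$ on the free generators $\phi_b$ and $J_{\bar a}$ with $a,b\in\n$, using Propositions \ref{Prop:d-Lambda-C} and \ref{Prop: d(J)}. For $a\in\n$ one has $[u_\beta,a]\in\n\subset\bigoplus_{i\geqslant 1}\g(i)$, so $\pi_{\leqslant 0}[u_\beta,a]=0$ and $(f_{\text{\rm odd}}|[u_\beta,a])=0$ because $f_{\text{\rm odd}}\in\g(-\tfrac12)$ pairs only with $\g(\tfrac12)$, while $(u_\beta|a)=0$ since $(\n|\n)=0$; hence the right-hand side of Proposition \ref{Prop: d(J)} vanishes and $Q(J_{\bar a})=0$. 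For $\phi_b$ the bracket $[d_\Lambda\phi_b]$ in Proposition \ref{Prop:d-Lambda-C} carries no $\Lambda$, so it equals $Q(\phi_b)$; rewriting its double-ghost part with the identity $p(u_\beta)=p(\bar\beta)+1$ and comparing with the definition $J_{\bar b}=\bar b+\sum_{\beta\in I_+}(-1)^{p(\bar b)p(\bar\beta)}:\phi^\beta\phi_{[u_\beta,b]}:$, one obtains
\[ Q(\phi_b)=(-1)^{p(\bar b)}J_{\bar b}-(f_{\text{\rm odd}}|b)\,\vac .\]
Thus, after the invertible affine change of generators $\phi_b\mapsto\phi_b$, $\;J_{\bar b}\mapsto\psi_b:=(-1)^{p(\bar b)}J_{\bar b}-(f_{\text{\rm odd}}|b)\,\vac$, the differential acts by $Q\phi_b=\psi_b$ and $Q\psi_b=0$.

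To turn this into a cohomology computation I would invoke Section \ref{Sec:filtered complex}. Writing $\g':=r_+$ and $\mathcal{R}_+=\mathcal{K}\otimes\g'$, one equips $\g'$ with the bigrading inherited from the BRST charge and the conformal weight, chosen so that $d^{\text{gr}}$ shifts $q\mapsto q+1$, preserves $\Delta$, and $\phi_\n$, $J_{\bar\n}$ have the same $p$. The computation above then identifies $d^{\text{gr}}|_{\g'}$ with the two-term complex $\phi_b\mapsto(-1)^{p(\bar b)}J_{\bar b}$, $J_{\bar b}\mapsto 0$, whose arrow $\phi_\n\to J_{\bar\n}$ is $\pm$ the identity of $\n$ and hence a linear isomorphism. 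Therefore $H^{p,q}(\g',d^{\text{gr}})=0$ in every bidegree: in particular $Q$ is a good almost linear differential of $\mathcal{R}_+$ (the condition $p+q\neq0$ holding vacuously), and the index set $\mathcal{I}_{\g'}$ is empty, so $H(\g',Q)=0$ and $H(\mathcal{R}_+,Q)=\mathcal{K}\otimes H(\g',Q)=0$ — which trivially carries a nonlinear supersymmetric LCA structure. By Theorem \ref{Thm:cohomology} this gives $H(V(\mathcal{R}_+),Q)=H^0(V(\mathcal{R}_+),Q)\simeq V(H(\mathcal{R}_+,Q))=\C\vac$, i.e.\ $H(V(\mathcal{R}_+),Q)=\C$. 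Substituting this into \eqref{apply Kunneth to W-algebra} yields $W(\bar{\g},f_{\text{\rm odd}},k)=\C\otimes H(V(\mathcal{R}_-),Q)=H(V(\mathcal{R}_-),Q)$.

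The step I expect to be the main obstacle is the bookkeeping underlying the claim that $Q$ is a good almost linear differential of $\mathcal{R}_+$: one has to choose the bigrading on $r_+$ so that the decomposition \eqref{Eqn:bigrading} holds, check that the scalar correction $(f_{\text{\rm odd}}|b)\,\vac$ together with every normally ordered correction produced by $Q$ on longer monomials lands in the deeper filtration piece $F^{p+\frac12}V(\mathcal{R}_+)^{p+q+1}$ (so that $Q$ is almost linear and \eqref{d:level1} holds), and confirm that $d$ preserves the $\Delta$-grading with each $V(\mathcal{R}_+)[\Delta]$ finite-dimensional — these being exactly the standing hypotheses of Section \ref{Sec:filtered complex} under which Theorem \ref{Thm:cohomology} is available. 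Granting this, acyclicity of the graded two-term complex, and hence the vanishing of everything above the vacuum in $H(V(\mathcal{R}_+),Q)$, is immediate.
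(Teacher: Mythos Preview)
Your proposal is correct and follows essentially the same route as the paper: your shifted generators $\psi_b=(-1)^{p(\bar b)}J_{\bar b}-(f_{\text{\rm odd}}|b)$ are exactly the paper's $K_{\bar n}$, and the paper likewise reduces to the acyclic two-term complex $\phi_n\mapsto K_{\bar n}$ and invokes Theorem~\ref{Thm:cohomology}. The paper supplies the explicit bigrading you flag as the main bookkeeping step, namely $\Delta(\phi_n)=\Delta(K_{\bar n})=j_n$, $\text{gr}(\phi_n)=(j_n-1,-j_n)$, $\text{gr}(K_{\bar n})=(j_n-1,-j_n+1)$ for $n\in\g(j_n)$; note that after your change of generators $Q$ is actually \emph{linear} on $r'_+$, so the filtration worries you raise about scalar and normally ordered corrections are moot.
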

\begin{proof}
Set $K_{\bar{n}}= (-1)^{p(\bar{n})}J_{\bar{n}}-(f_{\text{\rm odd}}|n)$ for $n\in \n$
and introduce the superspace
\[r'_+= \phi_\n \oplus K_{\bar{\n}},\qquad
K_{\bar{\n}}=\text{span}\ts
\{ K_{\bar{n}}\ts |\ts \bar{n}\in \bar{\n}\}.
\]
Then $\mathcal{R}_+= \mathcal{K}\otimes r'_+$. Define the conformal
weight $\Delta$ and the bigrading on $r'_+$ by
\[ \Delta(\phi_n)= \Delta(K_{\bar{n}})= j_n, \quad \text{gr}(\phi_n)
= (j_n-1, -j_n), \quad \text{gr}(K_{\bar{n}})= (j_n-1, -j_n+1),\]
 assuming that $n\in  \g(j_n)$.
The graded differential $Q^{\tss\text{gr}}$ associated with $ Q$ is good
almost linear (see Section \ref{Sec:filtered complex}) and
\[ H({r'}_+, Q^{\tss\text{gr}} )=0.\]
By Theorem \ref{Thm:cohomology}, we have $H (V(\mathcal{R}_+), Q)= \C$.
\end{proof}

\medskip

To describe $H(V(\mathcal{R}_-), Q)$, recall that
\begin{align}
Q(J_{\bar{a}})   =   \sum_{\beta\in I_+}(-1)^{p(\bar{a})p(\beta)} &{}:
 \phi^\beta(J_{\overline{\pi_{\leqslant 0} [u_\beta, a] }} +(f_{\text{\rm odd}}|[u_\beta, a])) :
 \non\\
{}&+ \sum_{\beta\in I_+} (-1)^{p(\bar{\beta})} k S\phi^\beta(u_\beta|a)
\label{d(r-)}
\end{align}
and %\rx{I don't understand why the formula was in red, but I changed $d{(0|0)}$ to $d_{(0|0)}$}
\beql{d(r-)two}
Q(\phi^{\bar{m}})= \frac{1}{2} \sum_{\beta\in I_+}
(-1)^{p(\bar{m})p(\beta)} :\phi^\beta \phi^{\overline{[u_\beta, m]}}:.
\eeq

Consider the conformal weight $\Delta$ and the bigrading on $r_-$
satisfying
\ben
\begin{aligned}
& \Delta(J_{\bar{a}})= \frac{1}{2} -j_a, \quad \Delta(\phi^{\bar{m}})= -j_m,\\
& \text{gr}(J_{\bar{a}})= (j_a, -j_a), \quad \text{gr}(\phi^{\bar{m}})
= \big(j_m+\frac{1}{2}, -j_m+\frac{1}{2}\big),
\end{aligned}
\een
where $a\in  \g(j_a)$ and $m\in \g(j_m)$ for $j_a\leqslant 0$ and $j_m<0$.
Note that
\[ \Delta(\phi^\beta)= j_\beta, \quad \text{gr}(\phi^\beta)=
\big(-j_\beta+\frac{1}{2}, j_\beta+\frac{1}{2}\big),\]
where $u^\beta \in \g(-j_\beta).$ Since $\Delta(S)= \frac{1}{2}$
and $\text{gr}(S)= (0,0)$. Every term in \eqref{d(r-)}
has conformal weight $\frac{1}{2}-j_a$ and every term in \eqref{d(r-)two}
has conformal weight $-j_m$. The
bigradings of terms in \eqref{d(r-)} are given by
\begin{equation}\label{bigrading_first}
\begin{aligned}
&  \text{gr}(:\phi^\beta J_{\overline{\pi_{\leqslant 0}[u_\beta, a]}}:)
=\big(j_a+\frac{1}{2}, -j_a+\frac{1}{2}\big) , \\
& \text{gr}( \phi^\beta(f_{\text{\rm odd}}|[u_\beta,a]))= (j_a, -j_a+1),  \\
& \text{gr}(S\phi^\beta(u_\beta|a))=\big (j_a+\frac{1}{2}, -j_a+\frac{1}{2} \big).
\end{aligned}
\end{equation}
The bigradings of terms in \eqref{d(r-)two} are
\begin{equation}\label{bigrading_second}
\text{gr}(\phi^{\bar{m}})=\big (j_m+\frac{1}{2}, -j_m+\frac{1}{2}\big),
\qquad \text{gr}(:\phi^\beta \phi^{\overline{[u_\beta, m]}}:)= (j_m+1, -j_m+1).
\end{equation}

\begin{theorem} \label{Thm_Sec_4}
Let $\Ker\, (\ad \, f_{\text{\rm odd}})=
\{\, u_\alpha\, | \, \alpha\in \mathcal{J}\, \}$ with an index set $\mathcal{J}$.
Then
\begin{enumerate}
\item $W(\bar{\g}, f_{\text{\rm odd}}, k)$ is freely generated by $|\mathcal{J}|$
elements as a differential algebra,
\item there exists a free generating set of the form
\[ \{\, u_\alpha+ A_\alpha \, |\, \alpha\in \mathcal{J} \, \},\]
where  $A_\alpha \in F^{j_\alpha+\frac{1}{2}}
V(\mathcal{\mathcal{R}_-})^0[\frac{1}{2}-j_\alpha]$ for $u_\alpha\in \g(j_\alpha)$.
\end{enumerate}
\end{theorem}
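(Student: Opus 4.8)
The plan is to run the good-filtered-complex machinery of Section~\ref{Sec:filtered complex} for $\mathcal{R}_-$, equipped with the conformal weight and bigrading recorded above, so that Theorem~\ref{Thm:cohomology} reduces the whole statement to computing the linear cohomology $H(r_-,Q^{\tss\text{gr}})$. The first step is to read off $Q^{\tss\text{gr}}$ on $r_-$. Comparing $\text{gr}(J_{\bar{a}})=(j_a,-j_a)$ and $\text{gr}(\phi^{\bar{m}})=(j_m+\tfrac12,-j_m+\tfrac12)$ with the bigradings \eqref{bigrading_first} and \eqref{bigrading_second}, one sees that the only summand of $Q(J_{\bar{a}})$ that keeps $p$ fixed while raising $q$ by $1$ is $\sum_{\beta}\phi^\beta(f_{\text{\rm odd}}|[u_\beta,a])$, whereas the $:\phi^\beta J_{\ldots}:$ and $S\phi^\beta$ terms lie in $F^{p+\frac12}$; and every summand of $Q(\phi^{\bar{m}})$ lies in $F^{p+\frac12}$ as well. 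Using invariance of the form and $[f_{\text{\rm odd}},a]\in\g(j_a-\tfrac12)\subset\n_-$ for $a\in\bigoplus_{i\leqslant 0}\g(i)$, that first sum is a fixed linear isomorphism applied to $\phi^{\overline{[f_{\text{\rm odd}},a]}}$. Hence $Q^{\tss\text{gr}}$ annihilates $\phi^{\bar{\n}_-}$ and sends $J_{\bar{a}}\mapsto\pm\ts\phi^{\overline{[f_{\text{\rm odd}},a]}}$; that is, $(r_-,Q^{\tss\text{gr}})$ is, up to signs and that isomorphism, the two-term complex $J_{\bar{\g}_{\leqslant 0}}\xrightarrow{\ \ad f_{\text{\rm odd}}\ }\phi^{\bar{\n}_-}$, with the $J$'s in total degree $0$ and the $\phi^{\bar{m}}$'s in total degree $1$. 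In particular $Q^{\tss\text{gr}}$ is almost linear.

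The crucial step is to verify that $Q^{\tss\text{gr}}$ is \emph{good}: since $r_-$ is concentrated in total degrees $0$ and $1$, this means exactly $H^{1}(r_-,Q^{\tss\text{gr}})=0$, i.e. $\ad f_{\text{\rm odd}}\colon\bigoplus_{i\leqslant 0}\g(i)\to\n_-$ is surjective, equivalently $\ad f_{\text{\rm odd}}\colon\g(i)\to\g(i-\tfrac12)$ is surjective for every $i\leqslant 0$. This, together with the complementary fact that $\Ker(\ad f_{\text{\rm odd}})$ consists of lowest weight vectors and therefore lies in $\bigoplus_{i\leqslant 0}\g(i)$, follows from the weight-space description of finite-dimensional $\mathfrak{osp}(1|2)$-modules, on which $f_{\text{\rm odd}}$ maps each weight space isomorphically onto the next lower one until the bottom weight. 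Consequently $H(r_-,Q^{\tss\text{gr}})=H^{0}(r_-,Q^{\tss\text{gr}})$ is isomorphic to $\Ker(\ad f_{\text{\rm odd}})=\text{span}\{u_\alpha\,|\,\alpha\in\mathcal{J}\}$, with the class of $J_{\bar{u}_\alpha}$ sitting in bidegree $(j_\alpha,-j_\alpha)$; in particular it has dimension $|\mathcal{J}|$.

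Feeding this into Section~\ref{Sec:filtered complex}: Theorem~\ref{Thm:cohomology}(1) gives $H(V(\mathcal{R}_-),Q)=H^0(V(\mathcal{R}_-),Q)$, which with $H(V(\mathcal{R}_+),Q)=\C$ and the K\"unneth decomposition yields $W(\bar{\g},f_{\text{\rm odd}},k)=H^0(V(\mathcal{R}_-),Q)$. By the Proposition preceding Theorem~\ref{Thm:cohomology}, $H(\text{gr}\,V(\mathcal{R}_-),Q^{\tss\text{gr}})$ is freely generated by $\{S^n e_\alpha\}$ with $e_\alpha$ the class of $J_{\bar{u}_\alpha}$; goodness identifies the associated graded of $W(\bar{\g},f_{\text{\rm odd}},k)$ with this free algebra, and the lifting construction of Section~\ref{Sec:filtered complex} produces, for each $\alpha\in\mathcal{J}$, an element $E_\alpha=J_{\bar{u}_\alpha}+f_\alpha\in\Ker Q$ with $f_\alpha\in F^{j_\alpha+\frac12}V(\mathcal{R}_-)^0[\tfrac12-j_\alpha]$, whose image in $\text{gr}^{j_\alpha,-j_\alpha}$ corresponds to $u_\alpha$; under $\bar{\g}\simeq\g$ this is the form $u_\alpha+A_\alpha$ asserted in part~(2). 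To upgrade this to free generation as a differential algebra I would invoke Theorem~\ref{Thm:cohomology}(2): its hypothesis, that $H(\mathcal{R}_-,Q)=\mathcal{K}\otimes H(r_-,Q)$ carries a nonlinear supersymmetric LCA structure, holds since sesquilinearity, skew-symmetry and the Jacobi identity are inherited from $V(\mathcal{R}_-)$, while nonlinearity follows from the conformal-weight rule $V[\Delta_1]_{(n_0|n_1)}V[\Delta_2]\subset V[\Delta_1+\Delta_2-n_0-\tfrac{n_1}{2}-\tfrac12]$, forcing every term of $[E_{\alpha\,\Lambda}E_{\alpha'}]$ to have total conformal weight strictly below $\Delta_\alpha+\Delta_{\alpha'}$, hence to lie in $\mathcal{L}\otimes\mathcal{T}(H(\mathcal{R}_-,Q))_{(\Delta_\alpha+\Delta_{\alpha'})_-}$. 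Then $H(V(\mathcal{R}_-),Q)\simeq V(H(\mathcal{R}_-,Q))$ is freely generated by $\{E_\alpha\,|\,\alpha\in\mathcal{J}\}$, giving part~(1). I expect the main obstacle to be precisely the goodness of $Q^{\tss\text{gr}}$, i.e. the $\mathfrak{osp}(1|2)$-surjectivity of $\ad f_{\text{\rm odd}}$ in nonpositive degrees, together with checking the nonlinear-LCA hypothesis of Theorem~\ref{Thm:cohomology}(2).
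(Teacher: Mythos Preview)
Your proposal is correct and follows essentially the same route as the paper: compute $Q^{\tss\text{gr}}$ on $r_-$ from the bigradings \eqref{bigrading_first}--\eqref{bigrading_second}, identify the resulting two-term complex with $\ad f_{\text{\rm odd}}\colon J_{\bar{\g}_{\leqslant 0}}\to\phi^{\bar{\n}_-}$, verify goodness, and invoke Theorem~\ref{Thm:cohomology}. If anything, you are more explicit than the paper on two points it leaves implicit: the $\mathfrak{osp}(1|2)$-representation-theoretic reason for the surjectivity $Q^{\tss\text{gr}}(r_-)=\phi^{\bar{\n}_-}$ (and for $\Ker(\ad f_{\text{\rm odd}})\subset\bigoplus_{i\leqslant 0}\g(i)$), and the verification of the nonlinear-LCA hypothesis in Theorem~\ref{Thm:cohomology}(2).
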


\begin{proof}
Since we know that $W(\overline{\g},f_{\text{\rm odd}}, k)= H(V(\mathcal{R}_-), Q)$,
it is enough to show (1) and (2) for $H(V(\mathcal{R}_-), Q).$
The conformal weight and bigrading on $r_-$ induce those
on $V(\mathcal{R}_-)$. With respect to the conformal weight
and bigrading, $Q$ induces the graded differential $ Q^{\tss\text{gr}}$.
The bigradings listed in \eqref{bigrading_first} and \eqref{bigrading_second}
show that
\[ Q^{\tss\text{gr}}(J_{\bar{a}})= \sum_{\beta\in I_+}
(-1)^{p(\bar{a})p(\beta)}\phi^\beta (f_{\text{\rm odd}}|[u_\beta, a]),
\quad Q^\text{gr}(\phi^{\bar{m}})=0.\]
Note that  $V(\mathcal{R}_-)^0 \cap r_-=J_{\g_{\leqslant 0}}$ and
$V(\mathcal{R}_-)^1 \cap r_-=\phi^{\bar{\n}_-}.$
Since $ Q^{\tss\text{gr}}(r_-)=\phi^{\bar{\n}_-}$, we have
$H^{p,q}(r_-, Q^{\tss\text{gr}})=0$ when $p+q\neq 0$ and so $ Q$
is a good almost linear differential map.
Furthermore,
$ \Ker( Q^{\tss\text{gr}}|_{r_-})=\{ J_a | a\in \Ker(\ad \, f_{\text{\rm odd}})\}
\oplus \phi^{\bar{\n}_-}$,
hence \[H(r_-, Q^{\tss\text{gr}})=\{ J_a\ts |\ts a\in \Ker(\ad \, f_{\text{\rm odd}})\}.\]
Thus, using Theorem \ref{Thm:cohomology}, we arrive at (1) and (2).
\end{proof}

\section{Generators of $W(\overline{\g}, f_{\text{\rm prin}}, k)$ for $\g=\gl(n+1|n)$}

Consider the Lie superalgebra $\g= \gl(n+1|n)$
with the basis $\{E_{i,j}|i,j=1, \dots, 2n+1\}$ and the $\Z/2\Z$-grading defined by
$p(E_{i,j})= i+j\mod 2$ with the commutation relations
\[ [E_{i,j}, E_{i',j'}]= \delta_{j,i'}E_{i,j'}-(-1)^{(i+j)(i'+j')}\ts\delta_{i,j'}\ts E_{i',j}.\]
Take the odd principal nilpotent element in the form
\[ f_{\text{\rm prin}}= \sum_{p=1}^{2n} E_{p+1,p}.\]
By Proposition \ref{Prop: d(J)}, for $C(\bar{\g}, f_{\text{\rm prin}}, k)$ and
 any $m\geqslant l$, we have
\ben
\begin{aligned}
 Q(J_{m,l}) & = (-1)^m k\tss S\tss \phi^{l,m}+ \sum_{j=l+1}^{m} (-1)^{l+j+1} :\phi^{l,j}J_{m,j}: \\
 &+ \sum_{i=l}^{m-1} (-1)^{(i+m)(m+l+1)}:\phi^{i,m}J_{i,l}:
 + (-1)^l \phi^{l, m+1} +(-1)^m\phi^{l-1,m},
 \end{aligned}
 \een
 where we set $\phi^{j,i}= (-1)^{i+1} \phi^{\overline{E_{ij}}}$
 for $i>j$ and
 $J_{i,j}= J_{\overline{E_{i,j}}}$  for $i\geqslant j$.

 We will be working with operators on $C(\bar{\g}, f_{\text{\rm prin}}, k)$
 of the form $\sum_{t=0}^N A_t S^t$ with $A_t \in C(\bar{\g}, f_{\text{\rm prin}}, k)$,
 which act on an arbitrary element $X\in C(\bar{\g}, f_{\text{\rm prin}}, k)$ by the rule
 \[ \sum_{t=0}^N A_t S^t(X)= \sum_{t=0}^N :A_t (S^t (X)):  .\]
In particular, for the operator $A_{i,j}= \delta_{ij}\tss k\tss S +(-1)^{i+1} J_{i,j}$
on $C(\bar{\g}, f_{\text{\rm prin}}, k)$ we have
 \ben
 A_{i,j} (X)= \delta_{ij}\tss k\tss S(X) +(-1)^{i+1} : J_{i,j} X:\, .
 \een

 Consider the $(2n+1)\times (2n+1)$ matrix
 \ben
 \mathcal{A}:= \begin{bmatrix}
 A_{1,1} & -1 & 0 &  \cdots  &   \cdots & 0  \\
 A_{2,1} & A_{2,2} &  -1&\cdots  &  \cdots & 0  \\
 \vdots &  \vdots & \vdots   & \vdots  & \vdots   & \vdots \\
 A_{2n, 1} & A_{2n,2} & A_{2n,3} & \cdots    & A_{2n, 2n} & -1\\
  A_{2n+1, 1} & A_{2n+1,2} & A_{2n+1,3} & \cdots   & A_{2n+1, 2n} & A_{2n+1, 2n+1}
 \end{bmatrix}
 \een
whose entries are operators on $C(\bar{\g}, f_{\text{\rm prin}}, k)$.
Then the {\em column} (or {\em row}) {\em determinant} of $\Ac$ is
given by the formula
\begin{equation} \label{column deteminant}
\cdet\ts\mathcal{A} = \sum_{N=0}^{2n}\quad\sum_{0=i_0<i_1<\dots<
 i_{N+1}= 2n+1} A_{i_1, i_0+1} A_{i_2, i_1+1}
 \dots A_{i_{N+1}, i_N+1}.
 \end{equation}
 Write
 \[ \cdet\ts\mathcal{A}= W_0+W_1 S+\dots +W_{2n+1} S^{2n+1}\]
 for certain elements $W_p \in C(\bar{\g}, f_{\text{\rm prin}}, k)$. Clearly, $W_{2n+1}=k^{2n+1}$.

 \begin{theorem}\label{theorem1}
 All elements $W_1,\dots,W_{2n}$ belong to the $W$-algebra $W(\bar{\g},f_{\text{\rm prin}}, k)$.
 \end{theorem}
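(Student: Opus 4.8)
The plan is to use the identification $W(\bar{\g},f_{\text{\rm prin}},k)=H(V(\mathcal{R}_-),Q)$ from Section~4 together with the remark that the coefficients $W_p$ already live in $V(\mathcal{R}_-)$: every entry $A_{i,j}$ with $i\geqslant j$ is assembled from $S$ and from the elements $J_{i,j}=J_{\overline{E_{i,j}}}$ with $E_{i,j}\in\bigoplus_{l\leqslant 0}\g(l)$, which are generators of $\mathcal{R}_-$; expanding the column determinant \eqref{column deteminant} and collecting powers of $S$ (using that $S$ is an odd derivation of the normally ordered product) produces the $W_p$ from these data by operations internal to $V(\mathcal{R}_-)$. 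Hence it is enough to show that each of $W_1,\dots,W_{2n}$ is $Q$-closed. Since $Q$ anticommutes with $S$, one has $Q(\cdet\mathcal{A})=\sum_p Q(W_p)\ts S^p$, so the task is to show that the coefficients of $S^1,\dots,S^{2n}$ in $Q(\cdet\mathcal{A})$ vanish; the coefficient of $S^{2n+1}$ is $Q(k^{2n+1})=0$ automatically.

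First I would compute $Q(\cdet\mathcal{A})$ by applying the derivation rule \eqref{d_property} to the expansion \eqref{column deteminant}. Because $Q$ anticommutes with $S$, it kills the purely $k\ts S$ part of the diagonal entries, and it also kills the superdiagonal $-1$'s and the zero entries; therefore $Q(\cdet\mathcal{A})$ is the signed sum, over each monomial $A_{i_1,i_0+1}\dots A_{i_{N+1},i_N+1}$ of \eqref{column deteminant} and over each factor in it, of that monomial with the chosen factor $A_{m,l}$ replaced by $(-1)^{m+1}Q(J_{m,l})$. Using the formula for $Q(J_{m,l})$ stated just before the theorem, I would reorganize $(-1)^{m+1}Q(J_{m,l})$ into four kinds of terms: \emph{column‑shift} terms $:\phi^{l,j}A_{m,j}:$ with $l<j\leqslant m$, \emph{row‑shift} terms $:\phi^{i,m}A_{i,l}:$ with $l\leqslant i<m$, \emph{boundary} terms $\phi^{l,m+1}$ and $\phi^{l-1,m}$, and the \emph{residual} term $k\ts S(\phi^{l,m})$, up to an explicit correction coming from the mismatch between $A_{m,m}$ and $(-1)^{m+1}J_{m,m}$ on the diagonal.

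Next I would substitute this reorganization back into \eqref{column deteminant}. Inserting a column‑shift or a row‑shift term into a monomial splits one block $[i_t+1,i_{t+1}]$ of the underlying ordered set partition of $\{1,\dots,2n+1\}$ into two pieces, one becoming a new $A$‑entry and the other a $\phi$‑factor; the key point is that these contributions occur in cancelling pairs — this is the infinitesimal form of the statement that adding to one column of $\mathcal{A}$ a left multiple of an adjacent column does not change $\cdet\mathcal{A}$, summed over all blocks with the appropriate super‑signs. The boundary terms $\phi^{l,m+1}$ and $\phi^{l-1,m}$ can only occur with $m=i_{N+1}=2n+1$ or $l=i_0+1=1$ at the two ends of a path, producing $\phi^{l,2n+2}$ or $\phi^{0,m}$, which are identically zero since those indices lie outside $\{1,\dots,2n+1\}$. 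What survives after all cancellations is the collection of residual $k\ts S(\phi^{l,m})$ terms together with the diagonal corrections; tracking their $S$-degree shows that they feed only into the $S^0$-component of $Q(\cdet\mathcal{A})$, i.e. into $Q(W_0)$ — which need not vanish, and this is precisely why $W_0$ is excluded from the statement. Consequently $Q(W_p)=0$ for $1\leqslant p\leqslant 2n$, which is the assertion.

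The main obstacle is the middle step: making the telescoping cancellation of the column‑shift and row‑shift contributions precise. One has to fix conventions for how $Q$ distributes over the iterated normally ordered products in \eqref{column deteminant} (read from right to left), control the reorderings inside those products by quasi‑commutativity and quasi‑associativity \eqref{Eqn:q-comm/assoc} (so that $:\phi^{n}\,:ab::\,$ can be rebracketed against $\phi^{n}_{(j|1)}(\cdot)=0$ as in the proof of Proposition~\ref{Prop: d(J)}), and carry all the super‑signs coming from the parities $p(E_{i,j})=i+j\bmod 2$ and from the signs $(-1)^{(i+m)(m+l+1)}$ in the formula for $Q(J_{m,l})$. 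A secondary point is to check that the leftover $k\ts S(\phi^{l,m})$ terms, together with the diagonal corrections, genuinely sit in $S$-degree $0$ and produce nothing in $S$-degrees $1,\dots,2n$.
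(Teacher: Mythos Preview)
Your overall strategy coincides with the paper's: apply $Q$ to the expansion \eqref{column deteminant} term by term using the derivation property, substitute the formula for $Q(J_{m,l})$, and exhibit a telescoping cancellation. The gap is in your treatment of what you call the ``residual'' $k\,S(\phi^{l,m})$ terms together with the diagonal corrections.

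You assert that these leftover pieces feed only into the $S^0$-component of $Q(\cdet\mathcal{A})$, and that this is why $W_0$ is omitted from the statement. This is wrong on both counts. First, the element $k\,S(\phi^{i_p+1,i_{p+1}})$, when inserted at the $p$-th slot of a monomial $A_{i_1,1}\cdots A_{2n+1,i_N+1}$, is still flanked by diagonal factors $A_{i,i}=k\,S+(-1)^{i+1}J_{i,i}$; expanding those contributes to \emph{every} $S$-degree up to the number of remaining diagonal factors, not just to $S^0$. So your $S$-degree tracking does not confine these terms, and as written your argument does not establish $Q(W_p)=0$ for $p\geqslant 1$. Second, and this is the point the paper actually uses, these pieces cancel \emph{exactly}: the diagonal corrections coming from $J_{m,m}=(-1)^{m+1}(A_{m,m}-k\,S)$ combine with $-k\,S(\phi^{i_p+1,i_{p+1}})$ via the operator identity
\[
-S(\phi)+(-1)^{p(\phi)+1}\,\phi\,S+S\,\phi=0,
\]
which is nothing but the Leibniz rule for the odd derivation $S$ rewritten as an equality of operators on $V(\mathcal{R}_-)$. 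After this cancellation only the column-shift, row-shift and boundary terms remain, and their telescoping sum is identically zero. In particular $Q(W_0)=0$ as well; this is not optional, since $W_0$ is part of the free generating set in Theorem~\ref{thm:freegen}.

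A smaller imprecision: the boundary terms $\phi^{l,m+1}$ and $\phi^{l-1,m}$ occur at \emph{every} position $p$, not only at the extremes. They participate in the telescoping together with the shift terms (indeed $(-1)^{i_{p+1}}\phi^{i_p+1,i_{p+1}+1}$ is the natural ``$j=i_{p+1}+1$'' continuation of the column-shift pattern, with the matrix entry $-1$ on the superdiagonal playing the role of $A_{i_{p+1},i_{p+1}+1}$). Only the end contributions involve out-of-range indices and vanish.
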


 \begin{proof}
One readily verifies that
 \[ Q  \sum_{p=0}^{2n+1}  W_p S^p = \sum_{p=0}^{2n+1} Q(W_p) S^p-W_p S^p Q\]
 so that $ Q A_{m,l} = (-1)^{m+l+1} A_{m,l} Q + (-1)^{m+1} Q(J_{m,l}) $. Therefore,
 \begin{multline} \non
  Q \, A_{i_1, i_0+1} \dots A_{i_{p+1}, i_p+1} \dots A_{i_{N+1}, i_N+1} \\
 =   \sum_{p=0}^N  (-1)^{i_p}\big(A_{i_1, i_0+1} \dots \big((-1)^{i_{p+1}+1}
 Q(J_{i_{p+1},i_p+1 })\big)\dots A_{i_{N+1}, i_N+1}\big)\\
 - A_{i_1, i_0+1} \dots A_{i_{p+1}, i_p+1} \dots A_{i_{N+1}, i_N+1}Q.
 \end{multline}
Hence the property $W_p\in W(\g,f_{\text{\rm prin}}, k)$ will follow if we
show that $ \sum_{N=0}^{2n} B_N=0$, where we set
\ben
B_N= \sum_{p=0}^N  (-1)^{i_p}\big(A_{i_1, i_0+1} \dots \big((-1)^{i_{p+1}+1}
Q(J_{i_{p+1},i_p+1 })\big)\dots A_{i_{N+1}i_N+1}\big).
  \een
Using the relations
\[ J_{i,j}= (-1)^{i+1} (A_{i,j}-\delta_{i,j} kS)
\fand :\phi^{j,i}J_{i',j'}:= (-1)^{(i+j+1)(i'+j'+1)} :J_{i',j'}\phi^{j,i}:\]
we find that
\begin{multline} \non
(-1)^{i_{p+1}+1} Q(J_{i_{p+1}, i_p+1})\\
{}= -kS(\phi^{ i_p+1,i_{p+1}})
+ \sum_{j=i_p+2}^{i_{p+1}} (-1)^{i_p+j} \phi^{i_p+1,j} \big(A_{i_{p+1}, j}-\delta_{i_{p+1}, j}kS\big)\\
 + \sum_{i=i_p+1}^{i_{p+1}-1}(-1)^{i_p+i}\big(A_{i, i_p+1}-\delta_{i, i_p+1}k S\big)
\phi^{i,i_{p+1}}+(-1)^{i_p+i_{p+1}} \phi^{i_p+1, i_{p+1}+1}-\phi^{i_p,i_{p+1}}
\end{multline}
and
\ben
 -kS(\phi^{ i_p+1, i_{p+1}})+(-1)^{i_p+i_{p+1}+1}
 \phi^{ i_p+1, i_{p+1}}S+S\phi^{i_p+1,i_{p+1}}=0.
\een
Therefore,
\ben
\bal
(-1)^{i_{p+1}+1} Q(J_{i_{p+1}, i_p+1})& =  \sum_{j=i_p+2}^{i_{p+1}} (-1)^{i_p+j}
\phi^{i_p+1, j} A_{i_{p+1}, j} \\
&{} + \sum_{i=i_p+1}^{i_{p+1}-1}(-1)^{i_p+i}A_{i, i_p+1} \phi^{i,i_{p+1}}+(-1)^{i_p+i_{p+1}}
\phi^{ i_p+1, i_{p+1}+1}-\phi^{i_p,i_{p+1}}
\eal
\een
so that $B_N$ can be expressed as
\ben
\begin{aligned}
& \sum_{p=0}^{N}  A_{i_1, i_0+1} \dots A_{i_p, i_{p-1}+1}
\Bigg[ \Big( \sum_{j=i_p+2}^{i_{p+1}} (-1)^{j} \phi^{ i_p+1,j}
A_{i_{p+1}, j} +(-1)^{i_{p+1}} \phi^{ i_p+1, i_{p+1}+1} \Big) \\
& +\Big(\sum_{i=i_p+1}^{i_{p+1}-1}(-1)^{i}A_{i, i_p+1}
\phi^{i,i_{p+1}}-(-1)^{i_p}\phi^{i_p,i_{p+1}} \Big) \Bigg]
A_{i_{p+2}, i_{p+1}+1}\dots A_{i_{N+1}, i_N+1}.
\end{aligned}
\een
By the quasi-associativity property, we have
\begin{equation*}
\begin{aligned}
 ( \phi^{i_p+1,j} A_{i_{p+1}, j})(A_{i_{p+2}, i_{p+1}+1}\dots A_{i_{N+1}, i_N+1} )&=
\phi^{ i_p+1,j} (A_{i_{p+1}, j}(A_{i_{p+2}, i_{p+1}+1}\dots A_{i_{N+1}, i_N+1})),\\[0.4em]
 (A_{i, i_p+1} \phi^{ i,i_{p+1}})(A_{i_{p+2}, i_{p+1}+1}\dots A_{i_{N+1}, i_N+1})
&= A_{i, i_p+1}( \phi^{ i,i_{p+1}}(A_{i_{p+2}, i_{p+1}+1}\dots A_{i_{N+1}, i_N+1}))
\end{aligned}
\end{equation*}
for $j= i_p+2, \dots, i_{p+1}$ and $i=i_p+1, \dots, i_{p+1}$, so that vanishing of the telescoping sum
implies that $\sum_{N=0}^{2n} B_N=0$.
 \end{proof}

 \begin{lemma} \label{Lem:Sec5}
Suppose that $\{v_p\ts |\ts p=0, \dots, 2n\}$ is a basis of
 $\Ker\,(\ad f_{\text{\rm odd}})$ such that
 $\Delta_{J_{\bar{v}_p}}= \frac{1}{2} (2n+1-p)$.
 Take $V_p\in W(\overline{\g}, f_{\text{\rm prin}}, k)$ of the form
$V_p= J_{\bar{v}_p}+ w_p$
satisfying the conditions
 \begin{enumerate}[(i)]
 \item $V_p$ and $w_p$ have the conformal weight $ \frac{1}{2} (2n+1-p)$,
 \item $w_p$ lies in the differential algebra generated by $J_{\bar{a}}$
 for $\Delta_{J_{\bar{a}}}<\Delta_{V_p}$.
 \end{enumerate}
 Then the set $\{V_p\ts |\ts p=0, \dots, 2n\}$ freely
 generates the $W$-algebra $W(\overline{\g}, f_{\text{\rm prin}}, k)$.
 \end{lemma}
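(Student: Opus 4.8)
The plan is to compare the given set $\{V_p\}$ with the specific free generating set produced by Theorem~\ref{Thm_Sec_4} and to show that passing from one to the other is a change of generators which is triangular, hence invertible, with respect to the conformal weight. Applying Theorem~\ref{Thm_Sec_4} to $\g=\gl(n+1|n)$ with $f_{\text{odd}}=f_{\text{prin}}$, and unwinding its statement (as in the proof of that theorem, where the cohomology generators are the lifts $J_{\bar v}+f$ of the classes $J_{\bar v}$, $v\in\Ker(\ad f_{\text{prin}})$), we see that $W(\bar\g,f_{\text{prin}},k)$ is freely generated as a differential algebra by elements $\tilde V_p=J_{\bar v_p}+A_p$, $p=0,\dots,2n$, with $A_p\in F^{\tss j_{v_p}+\frac12}V(\mathcal{R}_-)^0[\Delta_p]$ and $\Delta_p:=\tfrac12(2n+1-p)=\tfrac12-j_{v_p}$, where $v_p\in\g(j_{v_p})$, $j_{v_p}\leqslant 0$. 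Since the weights $\Delta_0>\dots>\Delta_{2n}$ are distinct, each $W[\Delta]$ is finite-dimensional, and each $V_p$ is homogeneous of weight $\Delta_p$ and of the same parity as $J_{\bar v_p}$, the PBW-type monomials of a given conformal weight formed from the $V_p$ number exactly as many as those formed from the $\tilde V_p$; hence it will suffice to prove that $\{V_p\}$ \emph{generates} $W$ as a differential algebra, freeness then following from this dimension count.

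The key step I would carry out is to show that hypotheses (i)--(ii) force $w_p=V_p-J_{\bar v_p}$ into the same filtration piece as $A_p$, namely $w_p\in F^{\tss j_{v_p}+\frac12}V(\mathcal{R}_-)^0[\Delta_p]$. Indeed, by (ii) $w_p$ lies in the differential subalgebra generated by the currents $J_{\bar a}$, $a\in\bigoplus_{i\leqslant 0}\g(i)$, with $\Delta_{J_{\bar a}}<\Delta_p$; as this subalgebra involves no $\phi$-fields it sits inside $V(\mathcal{R}_-)^0$, and by (i) $w_p$ is homogeneous of weight $\Delta_p$. Expanding $w_p$ into normally ordered monomials $:S^{m_1}J_{\bar a_1}\cdots S^{m_t}J_{\bar a_t}:$ and using $\Delta(J_{\bar a})=\tfrac12-j_a$ together with $\mathrm{gr}(J_{\bar a})=(j_a,-j_a)$ and $\mathrm{gr}(S)=(0,0)$ — so that such a monomial has $F$-degree $\sum_k j_{a_k}$ and conformal weight $\sum_k(\tfrac12-j_{a_k})+\tfrac12\sum_k m_k$ — matching the weight against $\Delta_p$ yields $\sum_k j_{a_k}=j_{v_p}+\tfrac{t-1}{2}+\tfrac12\sum_k m_k$, which is $\geqslant j_{v_p}+\tfrac12$ in every case except $t=1$, $m_1=0$; and that case is impossible since it would force $\Delta_{J_{\bar a_1}}=\Delta_p$, contradicting the strict inequality in (ii). The identical bookkeeping (already implicit in the proof of Theorem~\ref{Thm_Sec_4}) shows that $A_p$ lies in the same subalgebra, so $\tilde V_p-V_p=A_p-w_p$ is a $Q$-closed element of $F^{\tss j_{v_p}+\frac12}V(\mathcal{R}_-)^0[\Delta_p]$, and therefore its cohomology class lies in $F^{\tss j_{v_p}+\frac12}H^0(V(\mathcal{R}_-),Q)[\Delta_p]$.

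Next I would run the triangularity argument. Expanding $F^{\tss j_{v_p}+\frac12}H^0[\Delta_p]$ in the PBW basis attached to $\{\tilde V_q\}$, and using $p_q=\tfrac12-\Delta_q$ for the $F$-degree of $\tilde V_q$, the same weight-versus-$F$-degree count shows that every monomial of weight $\Delta_p$ and $F$-degree $\geqslant j_{v_p}+\tfrac12$ involves only generators $\tilde V_q$ with $\Delta_q<\Delta_p$. Hence $V_p=\tilde V_p+c_p$ with $c_p$ in the differential subalgebra $W^{(p+1)}$ generated by $\{\tilde V_q\mid\Delta_q<\Delta_p\}$. I would then induct downward on $p$: for $p=2n$ one has $W^{(2n+1)}=\C\tss\vac$, so $c_{2n}=0$ for weight reasons and $V_{2n}=\tilde V_{2n}$; and if the differential subalgebra generated by $\{V_q\mid q>p\}$ equals $W^{(p+1)}$, then $c_p\in W^{(p+1)}\subset\langle V_q\mid q>p\rangle$, whence $\tilde V_p=V_p-c_p$ lies in the differential algebra generated by $\{V_q\mid q\geqslant p\}$, which therefore equals $W^{(p)}$. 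At $p=0$ this gives $\langle V_0,\dots,V_{2n}\rangle=W$, and by the dimension count of the first paragraph the set $\{V_p\}$ freely generates $W(\bar\g,f_{\text{prin}},k)$.

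The main obstacle, and the point requiring genuine care, is the step in the second paragraph: converting the abstractly phrased conditions (i)--(ii) into the precise filtration membership $w_p\in F^{\tss j_{v_p}+\frac12}V(\mathcal{R}_-)^0[\Delta_p]$ on which the proof of Theorem~\ref{Thm_Sec_4} relies. This rests entirely on the linear relation $\Delta_{J_{\bar a}}=\tfrac12-j_a$ between the conformal weight of a current and its $F$-degree, and on the strictness of the inequality $\Delta_{J_{\bar a}}<\Delta_{V_p}$ in (ii), which is exactly what eliminates the diagonal term $t=1$, $m_1=0$; without that strictness the argument fails. The remaining ingredients — the grading bookkeeping and the fact that ordered PBW monomials span (via quasi-commutativity and quasi-associativity, which only produce corrections of the same or strictly lower conformal weight) — are routine for the universal enveloping supersymmetric vertex algebra.
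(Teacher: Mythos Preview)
Your proof is correct and takes essentially the same route as the paper: pull back to the free generating set provided by Theorem~\ref{Thm_Sec_4} and run a reverse induction on $p$ showing that the change of generators is triangular with respect to conformal weight. The paper's argument is terser---it asserts that the Theorem~\ref{Thm_Sec_4} generators $V'_p$ themselves satisfy (i)--(ii) and then invokes condition (ii) directly in the inductive step, whereas you translate (ii) into the $F$-filtration membership $w_p\in F^{\,j_{v_p}+1/2}V(\mathcal R_-)^0[\Delta_p]$ and add an explicit dimension count for freeness---but the underlying strategy is the same.
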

 \begin{proof}
A generating set of the form
 $\{V'_p=J_{\bar{v}_p}+ w'_p\ts |\ts p=0, \dots, 2n\}$ satisfying the required conditions
(i) and (ii) exists by Theorem \ref{Thm_Sec_4}.
Set
 \[ \Wc_m:= \text{subalgebra freely generated by $\{V_m,V_{m+1}, \dots, V_{2n}\}$},\]
  \[ \Wc^{\tss\prime}_m:= \text{subalgebra freely generated by $\{V'_m,V'_{m+1}, \dots, V'_{2n}\}$}.\]
  We will show by a (reverse) induction that $\Wc_m=\Wc^{\tss\prime}_m$ for all $m=0,\dots,2n$.
Note that $\Wc_{2n}= \Wc^{\tss\prime}_{2n}$, since $w_{2n}$ and $w'_{2n}$ are constants.
Now suppose that $\Wc_{p}= \Wc^{\tss\prime}_{p}$ for some $p\leqslant 2n$.
Then $V_{p-1}-V'_{p-1}\in \Wc_p= \Wc^{\tss\prime}_p$
 by condition (ii).
 Hence we can conclude that
 $V'_{p-1}= V_{p-1}+ (w'_p-w_p) \in  \Wc_{p-1}$ and, similarly, $V_{p-1}\in \Wc^{\tss\prime}_{p-1}$.
 This shows that $\Wc_{p-1}= \Wc^{\tss\prime}_{p-1}$. Thus, $ \Wc^{\tss\prime}_0= \Wc_0$ and since
$W(\overline{\g}, f_{\text{\rm prin}}, k)= \Wc^{\tss\prime}_0$, the lemma follows.
 \end{proof}

 \begin{theorem}
 \label{thm:freegen}
The set of coefficients
 $\{ W_p\ts|\ts p=0, \dots,2n\} $ of $\cdet\ts\Ac$ freely generates $W(\overline{\g},f_{\text{\rm prin}}, k)$
 as a differential algebra.
 \end{theorem}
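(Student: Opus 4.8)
The strategy is to recognise the family $\{W_p\}$, after an overall rescaling, as a generating set of the type produced by Lemma~\ref{Lem:Sec5}. By Theorem~\ref{theorem1} the elements $W_1,\dots,W_{2n}$ belong to $W(\overline{\g},f_{\text{\rm prin}},k)$; moreover the computation there --- the vanishing $\sum_N B_N=0$, which amounts to the operator identity $Q\,(\cdet\,\Ac)=-(\cdet\,\Ac)\,Q$ --- gives $Q(W_p)=0$ for \emph{every} $p$, and since $\cdet\,\Ac$ is assembled only from the operators $A_{i,j}$, each $W_p$ lies in $V(\mathcal{R}_-)^0$. Hence $W_0$ too, being $Q$-closed of cohomological degree $0$, lies in $W(\overline{\g},f_{\text{\rm prin}},k)$, and it remains to identify the leading part of each $W_p$.

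I would first record the conformal weights. Since $E_{i,j}\in\g\big(\tfrac{j-i}{2}\big)$, we have $\Delta(J_{i,j})=\tfrac12-\tfrac{j-i}{2}=\tfrac{i-j+1}{2}$ and $\Delta(S)=\tfrac12$; hence every monomial $A_{i_1,i_0+1}\dots A_{i_{N+1},i_N+1}$ of \eqref{column deteminant} has conformal weight $\sum_t\tfrac{i_t-i_{t-1}}{2}=\tfrac{2n+1}{2}$, so $W_p$ is homogeneous of conformal weight $\tfrac12(2n+1-p)$ --- exactly the weight appearing in Lemma~\ref{Lem:Sec5}. Next, for $m=0,1,\dots,2n$ set
\[
v_{(m)}=\sum_{l}(-1)^{lm}\,E_{l+m,l}\in\g\big(-\tfrac m2\big).
\]
A short matrix calculation gives $[f_{\text{\rm prin}},v_{(m)}]=0$: for even $m$ this is just $f_{\text{\rm prin}}^{\,m}$, while for odd $m$ the alternating signs $(-1)^{lm}$ are exactly what centrality requires. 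Thus $\{v_{(0)},\dots,v_{(2n)}\}$ is a basis of $\Ker(\ad f_{\text{\rm prin}})$, with $\Delta(J_{\overline{v_{(m)}}})=\tfrac{m+1}{2}$.

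The heart of the argument is reading off the part of $W_p$ linear in the generators. The only monomials in \eqref{column deteminant} producing a term linear in the $J_{i,j}$ are those in which exactly one factor contributes a generator, namely $A_{l+m,l}$ contributing $(-1)^{l+m+1}J_{l+m,l}$ for some $m\geqslant0$ and $1\leqslant l\leqslant 2n+1-m$, and the remaining $2n-m$ factors (necessarily diagonal) contribute $k\,S$; of these, $l-1$ copies of $S$ stand to the left of $J_{l+m,l}$ in the composition. Tracking the signs --- the entry sign $(-1)^{l+m+1}$, the parity $p(J_{l+m,l})=p(\overline{E_{l+m,l}})\equiv m+1$, and the odd-derivation sign picked up each time one of the left $S$'s is carried past $J_{l+m,l}$ onto the argument --- one finds that the part of such a monomial in which all $S$'s end up on the argument contributes $k^{2n-m}(-1)^{lm}J_{l+m,l}$ to $W_{2n-m}$. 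Summing over $l$ shows that the genuine-generator part of $W_{2n-m}$ equals $k^{2n-m}\,J_{\overline{v_{(m)}}}$; equivalently $W_p=k^p\,J_{\overline{v_{(2n-p)}}}+(\text{rest})$. Every remaining contribution to $W_p$ is either an $S$-derivative $S^{\,a}(J_{i,j})$ with $a\geqslant1$ (from the same monomials, with some left $S$'s hitting $J_{l+m,l}$ instead) or a normally ordered product of two or more generators $J_{i,j}$, possibly with $S$'s; the weight bookkeeping above forces every generator occurring in any such term to have conformal weight strictly less than $\Delta(W_p)$.

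Assuming $k\neq0$, put $v_p:=v_{(2n-p)}$ and $V_p:=k^{-p}W_p\in W(\overline{\g},f_{\text{\rm prin}},k)$. Then $\{v_p\mid p=0,\dots,2n\}$ is a basis of $\Ker(\ad f_{\text{\rm prin}})$ with $\Delta(J_{\overline{v_p}})=\tfrac12(2n+1-p)$, and by the previous paragraph $V_p=J_{\overline{v_p}}+w_p$, where $w_p$ has conformal weight $\tfrac12(2n+1-p)$ and lies in the differential subalgebra generated by those $J_{\bar a}$ with $\Delta(J_{\bar a})<\Delta(V_p)$. These are exactly the hypotheses of Lemma~\ref{Lem:Sec5}, so $\{V_p\}$ --- and therefore $\{W_p\mid p=0,\dots,2n\}$ --- freely generates $W(\overline{\g},f_{\text{\rm prin}},k)$ as a differential algebra. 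I expect the delicate step to be the sign bookkeeping that assembles the top-weight linear part of $W_p$ into $\pm k^p\,J_{\bar v}$ for a \emph{single} central element $v$, with nothing left over and nothing mis-graded (the alternating signs in $v_{(m)}$ are essential here); by contrast, the claim that the remaining terms fall into the lower differential subalgebra is just the conformal-weight count and is routine.
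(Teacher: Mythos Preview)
Your proof is correct and follows essentially the same route as the paper: compute the conformal weight of each monomial in \eqref{column deteminant}, extract the part of $W_{2n-m}$ that is linear in the $J_{i,j}$, identify it with $J_{\bar v}$ for $v=\sum_l(-1)^{lm}E_{l+m,l}\in\Ker(\ad f_{\text{\rm prin}})$, observe that all remaining terms lie in the differential subalgebra generated by $J$'s of strictly smaller conformal weight, and invoke Lemma~\ref{Lem:Sec5}. Your treatment is in fact a bit more careful than the paper's: you make explicit the scalar $k^{2n-m}$ in the linear part (visible in the paper's $\gl(2|1)$ example but suppressed in the displayed formula for $W_{2n-k}$), you flag the implicit hypothesis $k\neq0$ needed for the rescaling $V_p=k^{-p}W_p$, and you note that the operator identity in the proof of Theorem~\ref{theorem1} already gives $Q(W_0)=0$, extending its statement to $p=0$.
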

 \begin{proof}
 Note that for $i\geqslant j$ we have
 \[ \Delta_{A_{i,j}(X)}= \frac{1}{2}(i-j+1)+\Delta_X,\]
and each term in \eqref{column deteminant} satisfies
 \[ \Delta_{ A_{i_1, i_0+1} A_{i_2, i_1+1} \dots A_{i_{N+1}, i_N+1}(X)}
 = \frac{2n+1}{2} +\Delta_X.\]
 A direct calculation gives
 \ben
 W_{2n-k} = \sum_{l=1}^{2n+1-k} (-1)^{kl} J_{k+l, l} + w_{2n-k}
\qquad\text{for}\quad {k=0,1, \dots, 2n},
\een
  where $\Delta_{2n-k}=  \frac{2n+1}{2}-\frac{2n-k}{2}$ and $w_{2n-k}$
  can be expressed as a normally ordered product of
  the elements $J_{i,j}$ with $0\leqslant i-j\leqslant k$
  and their derivatives. It remains to apply Lemma \ref{Lem:Sec5}.
 \end{proof}

 \begin{example}
 Let $\g= \gl(2|1)$. Then $f_{\text{\rm prin}}= E_{21}+E_{32}$ and
 \[ \mathcal{A}= \left[ \begin{array}{ccc} A_{1,1} & -1 &  0 \\
 A_{2,1} & A_{2,2} & -1 \\ A_{3,1} & A_{3,2} & A_{3,3}  \end{array}\right].\]
 The column determinant of $\mathcal{A}$ is
 \ben
 \begin{aligned}
\cdet\ts\Ac& = A_{1,1} A_{2,2}A_{3,3}+ A_{3,1}+A_{2,1}A_{3,3} + A_{1,1}A_{3,2} \\
 & = (kS)^3 + W_2 S^2 +W_1 S+W_0.
 \end{aligned}
 \een
 where
 \ben
 \begin{aligned}
  W_2& = k^2(J_{1,1}+J_{2,2}+J_{3,3}), \\
  W_1& = k(-J_{1,1} J_{2,2} -J_{1,1}J_{3,3}-J_{2,2}J_{3,3} -J_{2,1} +J_{3,2} -kJ'_{2,2}), \\
  W_0& = -J_{1,1}J_{2,2}J_{3,3} -J_{2,1} J_{3,3} +J_{1,1} J_{3,2}+J_{3,1} \\
 & +kJ'_{3,2}+kJ_{1,1}J'_{3,3}-kJ'_{2,2}J_{3,3} +kJ_{2,2}J'_{3,3} + k^2 J''_{3,3},
 \end{aligned}
 \een
and $X':=[S,X]$.
 Hence $W(\overline{\g}, f_{\text{\rm prin}}, k)$ is freely generated by $W_0, W_1$ and $W_2.$
 \qed
 \end{example}

As in \cite{am:eg},
by taking the quotient of the $W$-algebra $W(\overline{\g}, f_{\text{\rm prin}}, k)$
over the supersymmetric
vertex algebra ideal generated by the elements $J_{i,j}$ with $i>j$ we recover the
presentation of the $W$-algebra via the
{\em Miura transformation}; cf. \cite{eh:st,i:qh,i:ns}:
\ben
\cdet\ts\Ac\mapsto \big(kS + J_{1,1}\big)\big(kS - J_{2,2}\big)\big(kS + J_{3,3}\big)
\dots \big(kS - J_{2n,2n}\big)\big(kS + J_{2n+1,2n+1}\big).
\een


\begin{thebibliography}{00}

\bibitem{a:iw}
T. Arakawa,
{\em Introduction to $W$-algebras and their representation theory},
in ``Perspectives in Lie theory", pp. 179--250,
Springer INdAM Ser., 19, Springer, Cham, 2017.

\bibitem{am:eg}
T. Arakawa and A. Molev,
{\it Explicit generators in rectangular affine $\Wc$-algebras of type $A$},
Lett. Math. Phys. {\bf 107} (2017), 47--59.

\bibitem{BK03} B. Bakalov and V. G. Kac, {\em Field Algebras},
Int. Math. Res. Not. {\bf 3} (2003), 123--159.

\bibitem{Bow} P. Bouwknegt,
{\it Extended conformal algebras from Kac--Moody algebras},
in: ``Infinite-dimensional Lie Algebras and Lie Groups", ed. V. Kac,
Proc. CIRM-Luminy Conf., 1988 (World Scientific, Singapore, 1989);
Adv. Ser. Math. Phys. {\bf 7} (1988), 527.

\bibitem{bs:ws}
P. Bouwknegt and K. Schoutens,
{\em $\Wc$-symmetry in conformal field theory},
Phys. Rep. {\bf 223} (1993), 183--276.

\bibitem{dBHTj}
J. de Boer, F. Harmsze and T. Tjin,
\textsl{Non-linear finite $W$-symmetries and applications in elementary systems},
Phys. Rep. {\bf 272} (1996), 139--214.

\bibitem{dBTj}
J. de Boer and T. Tjin
\textsl{The relation between quantum $W$ algebras and Lie algebras},
Comm. Math. Phys. {\bf 160} (1994), 317--332.

\bibitem{DK06} A. De Sole and V. Kac, {\em Finite vs affine $W$-algebras},
Jpn. J. Math. {\bf 1} (2006), 137--261.

\bibitem{eh:st}
J. Evans and T. Hollowood,
{\em Supersymmetric Toda field theories},
Nucl. Phys. B{\bf 352} (1991), 723--768.

\bibitem{fl:mt}
V. A. Fateev and S. L. Lukyanov,
{\em The models of two-dimensional
conformal quantum field theory with $Z_n$ symmetry},
Internat. J. Modern Phys. A {\bf 3} (1988), 507--520.

\bibitem{Oraf}
L. Feh\'er, L. O'Raifeartaigh, P. Ruelle, I. Tsutsui and A. Wipf,
\textsl{On hamiltonian reductions of the Wess--Zumino--Novikov--Witten theories},
Phys. Rep. {\bf 222} (1992), 1--64.

\bibitem{ff:qd}
B. Feigin and E. Frenkel,
{\em Quantization of the Drinfeld--Sokolov reduction},
Phys. Lett. B {\bf 246} (1990), 75--81.

\bibitem{FRS}
L. Frappat, E. Ragoucy and P. Sorba,
\textsl{$W$-algebras and superalgebras from constrained WZW models: a group theoretical classification},
Comm. Math. Phys. {\bf 157} (1993) 499--548.

\bibitem{fb:va}
E. Frenkel and D. Ben-Zvi,
{\it Vertex algebras and algebraic curves},
Second edition.
Mathematical Surveys and Monographs, 88.
AMS, Providence, RI, 2004.

\bibitem{HK06} R. Heluani and V. G. Kac, {\em Supersymmetric vertex algebras},
Comm. Math. Phys. {\bf 271} (2007), 103--178.

\bibitem{i:qh}
K. Ito,
{\em Quantum hamiltonian reduction and $N=2$ coset models},
Phys. Lett. B {\bf 259} (1991), 73--78.

\bibitem{i:ns}
K. Ito,
{\em $N = 2$ superconformal $CP_n$ model},
Nucl. Phys. B{\bf 370} (1992), 123--142.

\bibitem{K96} V. Kac, {\em Vertex algebras for beginners}, Second edition.
University Lecture Series, 10. AMS, Providence, RI, 1998.

\bibitem{k:iv}
V. Kac,
{\em Introduction to vertex algebras, Poisson vertex algebras,
and integrable Hamiltonian PDE},
in ``Perspectives in Lie theory", pp. 3--72,
Springer INdAM Ser., 19, Springer, Cham, 2017.

\bibitem{krw:qr}
V. Kac, S.-S. Roan and M. Wakimoto,
{\em Quantum reduction for affine superalgebras},
Comm. Math. Phys. {\bf 241} (2003), 307--342.

\bibitem{KS18} V. Kac and U.R. Suh,
{\em Supersymmetric classical $W$-algebras}, in preparation.

\bibitem{MR94} J.~O. Madsen and E. Ragoucy,
{\em Quantum hamiltonian reduction
in superspace formalism} Nucl. Phys. B{\bf 429} (1994), 277--290.

\bibitem{z:ie}
A. B. Zamolodchikov,
{\em Infinite extra symmetries in two-dimensional conformal quantum field theory},
Teoret. Mat. Fiz. {\bf 65} (1985), 347--359.

\end{thebibliography}
\end{document}